\patchcmd{\ps@pprintTitle}{\footnotesize\itshape
       Preprint submitted to \ifx\@journal\@empty Elsevier
      \else\@journal\fi\hfill\today}{\relax}{}{}
\def\ps@pprintTitle{%
 \let\@oddhead\@empty
 \let\@evenhead\@empty
 \def\@oddfoot{\centerline{\thepage}}%
 \let\@evenfoot\@oddfoot}
 \newtheorem{thm}{Theorem}[section]
 \newtheorem{lem}{Lemma}[section]
 \newtheorem{cor}{Corollary}[section]
 \newdefinition{rmk}{Remark}[section]
  \newdefinition{defe}{Definition}[section]
 \newproof{pf}{Proof}
 \newproof{pot}{Proof of Theorem \ref{thm2}}
\journal{Elsevier}
\begin{document}

\begin{frontmatter}

%% Title, authors and addresses

%% use the tnoteref command within \title for footnotes;
%% use the tnotetext command for theassociated footnote;
%% use the fnref command within \author or \address for footnotes;
%% use the fntext command for theassociated footnote;
%% use the corref command within \author for corresponding author footnotes;
%% use the cortext command for theassociated footnote;
%% use the ead command for the email address,
%% and the form \ead[url] for the home page:
%% \title{Title\tnoteref{label1}}
%% \tnotetext[label1]{}
%% \author{Name\corref{cor1}\fnref{label2}}
%% \ead{email address}
%% \ead[url]{home page}
%% \fntext[label2]{}
%% \cortext[cor1]{}
%% \address{Address\fnref{label3}}
%% \fntext[label3]{}

\title{Sampling and Reconstruction of Sparse Signals on Circulant Graphs - An Introduction to Graph-FRI}

%% use optional labels to link authors explicitly to addresses:
%% \author[label1,label2]{}
%% \address[label1]{}
%% \address[label2]{}

\author[MSK]{M. S. Kotzagiannidis\fnref{label2}\corref{cor1}}
\ead{madeleine.kotzagiannidis@ed.ac.uk }

\address[MSK]{Institute for Digital Communications, The University of Edinburgh, King's Buildings, Thomas Bayes Road, Edinburgh EH9 3FG, UK}
\address[PLD]{Department of Electrical and Electronic Engineering, Imperial College London, London SW7 2AZ, UK}
\cortext[cor1]{Corresponding author}
\fntext[label2]{The work in this paper was carried out while the first author was a PhD student at the second author's institution.}
\author[PLD]{P. L. Dragotti}

\begin{abstract}
With the objective of employing graphs toward a more generalized theory of signal processing, we present a novel sampling framework for (wavelet-)sparse signals defined on circulant graphs which extends basic properties of Finite Rate of Innovation (FRI) theory to the graph domain, and can be applied to arbitrary graphs via suitable approximation schemes. 
At its core, the introduced Graph-FRI-framework states that any $K$-sparse signal on the vertices of a circulant graph can be perfectly reconstructed from its dimensionality-reduced representation in the graph spectral domain, the Graph Fourier Transform (GFT), of minimum size $2K$. By leveraging the recently developed theory of e-splines and e-spline wavelets on graphs, one can decompose this graph spectral transformation into the multiresolution low-pass filtering operation with a graph e-spline filter, with subsequent transformation to the spectral graph domain; this allows to infer a distinct sampling pattern, and, ultimately, the structure of an associated coarsened graph, which preserves essential properties of the original, including circularity and, where applicable, the graph generating set.
\end{abstract}

\begin{keyword}
%% keywords here, in the form: keyword \sep keyword
graph signal processing \sep sampling on graphs \sep sparse sampling \sep graph wavelet \sep finite rate of innovation
%% PACS codes here, in the form: \PACS code \sep code

%% MSC codes here, in the form: \MSC code \sep code
%% or \MSC[2008] code \sep code (2000 is the default)

\end{keyword}

\end{frontmatter}

%% \linenumbers

%% main text
\section{Introduction}
Contributions to \textit{Graph Signal Processing (GSP)} theory have aspired to create extensions of traditional signal processing notions to the graph domain, motivated by the need to gain a deeper understanding of how the complex connectivity of graphs may be leveraged for more sophisticated processing, computational efficiency and superior performance, all the while heading toward a more generalized theory of SP \cite{shu}. The inherent challenge of interpreting and incorporating newly arising data dependencies, while maintaining equivalencies to classical cases, has given rise to a variety of different approaches, borrowing notions from i.a. algebraic and spectral graph theory \cite{chung}, algebraic signal processing theory \cite{algmoura}, and general matrix theory \cite{golub}. \\
In its essence, the framework of GSP is concerned with the analysis or processing of (higher-dimensional) data naturally residing (or modelled) on the vertices of weighted graph structures, examples of which include transportation or social networks, with respect to the underlying network topology in an effort to exploit its inherent geometry.

A breadth of intriguing GSP problems such as graph wavelet analysis (\cite{Coifman}, \cite{ekambaram2}, \cite{ortega2}, \cite{ortega3}, \cite{spectral}), graph signal interpolation and recovery (\cite{interpol}, \cite{local}, \cite{mourarecov}), up to graph-based image processing (\cite{gwtim}, \cite{ortega2}) and semi-supervised learning (\cite{semi}, \cite{semisuper}), have been derived in the wake of two elementary model assumptions for the central graph operator: the positive semi-definite \textit{graph Laplacian matrix}, and the more generalized \textit{graph adjacency matrix}.\\
Nevertheless, on the path toward a generalized theory of signal processing, the field of GSP is just at the beginning, and the development of a rigorous theoretical foundation is required to fully understand and elucidate the potential of graphs.\\

In search of concrete analogies between traditional and graph SP, the class of \textit{circulant graphs} has been noted for its linear shift invariance property and provided the foundation for intuitive graph signal sampling and filtering operations, as first established in (\cite{ekambaram1}, \cite{ekambaram2}, \cite{Ekambaram3}), not least of all due to its characterization by the (permuted) DFT matrix, as an eigenbasis, in the spectral graph domain. 
This previously inspired our derivation of families of signal-sparsifying, vertex-localized and critically-sampled \textit{graph spline} and \textit{graph e-spline} wavelet filterbanks on circulant graphs (\cite{spie}, \cite{icassp}, \cite{splinesw}), with the vanishing moment property of the graph Laplacian operator and its parameterised generalization, the proposed \textit{e-graph Laplacian}, at its core. In particular, fundamental mathematical properties of the circulant graph Laplacian are detected and incorporated into novel generalized graph differencing operators, which further give rise to basis functions that are structurally similar to the classical discrete (e-)splines, as defined in \cite{espline}.
Equipped with reproduction and annihilation properties for higher-order complex exponential polynomial graph signals, these filterbanks can be iteratively applied for a sparse multiresolution signal representation on suitable coarsened graphs. For a thorough discussion of the underlying graph-based spline wavelet theory, we refer to our paper \cite{splinesw}.\\
\\
Given a sparse signal residing on the vertices of a circulant (or arbitrary) graph, it is desirable to exploit the sparsity property for sampling or dimensionality reduction, as conducted in the classical frame of signal processing or compressed sensing. In an effort to further pursue a widespread motivation to elucidate sparsity on graphs, this work addresses the problem of \textit{sparse sampling} and \textit{coarsening on graphs} by proposing an intuitive and comprehensive framework for sparse signals, characterized by a relatively small $l_0$-norm $||{\bf x}||_0=\#\{ i : x_i \neq 0\}$, residing on circulant graphs, as an extension of classical approaches in the Euclidean domain and which can further be generalized to arbitrary graphs by using the former as building blocks.\\ 
Complementary to our discussion of (e-)spline wavelets on circulant graphs in \cite{splinesw}, we proceed to investigate the sampling and recovery of sparse, and hence, \textit{wavelet-sparse} graph signals on circulant graphs, for which we derive a novel framework as a generalization of the traditional Finite Rate of Innovation (FRI) theory (\cite{vetorig}, \cite{vet}) to the graph domain. In particular, we show that, given its dimensionality-reduced spectral representation ${\bf y}$ in the graph Laplacian basis, the so-called Graph Fourier transform (GFT), a sparse graph signal ${\bf x}$ defined on the vertices of a circulant graph can be perfectly recovered using Prony's method \cite{vet}, while the coarse graph associated with the vertex-localized version of ${\bf y}$ is simultaneously identified, i.a. through a scheme of spectral sampling. We additionally refine and extend this approach to encompass (piecewise) \textit{smooth} graph signals, which have a sparse graph wavelet representation, including the sets of (piecewise) polynomials and complex exponential polynomials, in light of our newly derived constructions. Eventually, generalizations to (multi-dimensional) sampling on arbitrary graphs can be made i.a. on the basis of graph products, as we previously demonstrated for graph spline wavelet analysis \cite{splinesw}.\\
\\
{\bf Related Work:}\\
Signal recovery on graphs, denoting more broadly the empirical study as opposed to the analytical framework, has been tackled i.a. under the premise that a signal is smooth with respect to the underlying graph, and can for instance be formulated as an optimization problem in different settings (\cite{mourarecov}, \cite{dict}).
In \cite{ribeiro}, \cite{wiener}, \cite{ortegasampling}, and \cite{samplingmoura}, sampling theory for graphs, providing the specialized and more rigorous theorization of the former, is explored with predominant regard to the subspace of bandlimited graph signals under different assumptions; here, Anis et al. \cite{ortegasampling} and Chen et al. \cite{samplingmoura} provide two alternative interpretations of bandlimitedness in the graph domain, where, in particular, the latter uses matrix algebra to establish a linear reconstruction approach, based on the knowledge and suitable choice of the retained sample locations. Moving beyond the traditional domain, sampling theory in the context of graphs has furthermore attempted to address graph coarsening, as can be seen in \cite{samplingmoura}, also a problem in itself (\cite{coars1}, \cite{kron}), which bears the challenge of identifying a meaningful underlying graph for the sampled signal and has been generally featured to a lesser extent. \\
In particular, the graph coarsening scheme introduced in \cite{samplingmoura} by Chen et al. is comparable to the spectral-domain-based coarsening approach in our proposed Graph-FRI (GFRI) framework, up to the choice of the sampling set and resulting property preservations, and with the further distinction that we iteratively filter the given graph signal with a suitable graph e-spline filter prior to sampling. While the former requires $K$ entries of suitably chosen sample locations (for bandlimited signals of bandwidth $K$) for perfect recovery, our downsampling pattern is fixed and primarily used to identify the coarsened graph corresponding to the sampled graph signal, under preservation of certain graph properties, as well as independent of the reconstruction scheme, which solely requires the input of the dimensionality-reduced spectral graph signal ${\bf y}$. In addition, we consider sparse and graph wavelet-sparse, as opposed to bandlimited, graph signals, encompassing a wider variety of graph signal classes, which do not necessarily belong to a fixed subspace, as facilitated through suitable graph (e-)spline wavelet analysis. We first investigated the problem of sparse signal reconstruction on circulant graphs in \cite{globalsip} in the context of noisy recovery under (graph-)perturbations, with a preliminary discussion in \cite{icassp}.\\

Due to our focus on sparse graph signals, the comparison with compressive sensing (CS) \cite{comp} is imperative. In CS theory, a sparse signal ${\bf x}\in\mathbb{R}^N$ can be recovered with high probability from the dimensionality-reduced, sampled signal ${\bf y}={\bf A}{\bf x}$ under suitable conditions on the rectangular sampling operator ${\bf A}\in\mathbb{R}^{M\times N}$ with $M<<N$ and sparsity $K=||{\bf x}||_0$, by solving an $l_1$-minimization problem, or alternatively, using greedy reconstruction algorithms \cite{bp}.
While in contrast to compressive sensing approaches \cite{compr}, the recovery of the sparse vector ${\bf x}$ in our scheme is exact at the critical dimension of $2K$ measurements and based on a direct, spectral estimation technique, known as Prony's method (\cite{prony}, \cite{vet}), we note that neither requires knowledge of the locations of the non-zero entries. 
In addition, compressed sensing theory can be extended to the recovery of non-sparse signals ${\bf x}={\bf D}{\bf c}$ that have a sparse representation ${\bf c}$ in properly designed, overcomplete dictionaries ${\bf D}$ \cite{csdict}, which has also been addressed in the context of graphs by training a graph-based dictionary \cite{dict}. Our sampling framework takes a similar approach in that piecewise smooth (wavelet-sparse) graph signals ${\bf x}$ are filtered with a circulant multilevel graph wavelet transform in order to produce sparse signals ${\bf c}$ which can subsequently be sampled; nevertheless, the recovery of ${\bf x}$ from ${\bf c}$ ultimately follows from the invertibility of the wavelet transform.%.\\ 
\\
\\
In this work, we make the following main \textit{contributions}:
\begin{itemize}
\item A novel framework for the sampling and perfect reconstruction of sparse and graph-wavelet-sparse signals on circulant graphs (Thm. $4.1$)
\item A general scheme to extract the coarse graph associated with the sampled signal to accompany the above, including a property-preserving approach based on spectral sampling (Thm $4.2$)
\item Generalizations to sampling and recovery on arbitrary graphs, i.a. via graph product decomposition approximations
\end{itemize}
We summarize preliminaries in Section $2$, and provide an overview of our previously derived graph e-spline wavelet filterbank constructions with some novel results in Section $3$, before introducing the proposed sampling framework in Section $4$. Section $5$ features extensions to arbitrary graphs via graph product decompositions, and Section $6$ contains concluding remarks with motivations for future directions.

\section{Preliminaries}
A graph $G=(V,E)$, with vertex set $V=\{0,...,N-1\}$ of cardinality $|V|=N$ and edge set $E$, is characterized by an adjacency matrix ${\bf A}$, with $A_{i,j}>0$ if vertices $i$ and $j$ are adjacent, and $A_{i,j}=0$ otherwise, and its degree matrix ${\bf D}$, which is diagonal with entries $D_{i,i}=\sum_{j} A_{i,j}$. The combinatorial graph Laplacian ${\bf L}={\bf D}-{\bf A}$ of undirected graph $G$ is a positive semi-definite matrix, with a complete set of orthonormal eigenvectors $\{{\bf u}_{l}\}_{l=0}^{N-1}$ and associated non-negative eigenvalues $0= \lambda_0<\lambda_1\leq \dots \leq \lambda_{N-1}$, termed `graph-frequencies'. \\
We consider graphs that are undirected, connected, (un-)weighted, and do not allow self-loops; our primary focus however lies on the class of
circulant graphs due to their LSI (Linear Shift Invariance) property and regularity which facilitate a more intuitive application and extension of traditional signal processing concepts to the graph domain (examples of which can be seen in Fig. \ref{fig:sets}). 
A circulant graph $G$ with generating set $S=\{s_1,\dots,s_M\}$ and $0<s_k\leq N/2$, has adjacency relations between node pairs $(i,(i\pm s_k)_N),\forall s_k\in S$, for $mod N$ operator $()_N$, or alternatively, a graph is circulant under some node labelling if its associated graph Laplacian is a circulant matrix \cite{ekambaram1}. Further, the symmetric, circulant graph Laplacian matrix ${\bf L}$, with first row $\lbrack l_0\quad ... \quad l_{N-1}\rbrack$, has representer polynomial $l(z)=\sum_{i=0}^{N-1} l_i z^i$ with $z^{N-j}=z^{-j}$. The $2B$-regular ring lattice $G$, within a special sub-class of circulant graphs, has the generating set $S=\{1,...,B\}$, such that there is an edge between nodes $i$ and $j$, if $(i-j)_{N}\leq B$, and ${\bf L}$ is banded of bandwidth $B$.  Bipartite graphs, which are characterized by a vertex set $V=X\cup Y$ of two disjoint subsets $X$ and $Y$, such that no two vertices within the same set are adjacent, form another notable class of graphs within GSP.\\
In this work, we consider graph signals ${\bf x}$ residing on the vertices of a graph $G$ that are complex-valued, with sample value $x(i)$ at node $i$ and represented as the vector ${\bf x}\in\mathbb{C}^N$ \cite{shu}, while maintaining real weights between connections on $G$. The Graph Fourier Transform (GFT) $\hat{{\bf x}}$ of ${\bf x}$ defined on $G$, is the expansion in terms of the graph Laplacian eigenbasis ${\bf U}=\lbrack {\bf u}_0| \cdots |{\bf u}_{N-1}\rbrack$ such that $\hat{{\bf x}}={\bf U}^H {\bf x}$, where $H$ denotes the Hermitian transpose, extending the concept of the Fourier transform to the graph domain \cite{shu}. Notably, the GFT of circulant graphs can be expressed as a permutation of the DFT-matrix.\\
\begin{figure}[tbp]
	\centering
	
	{\includegraphics[width=1.2in]{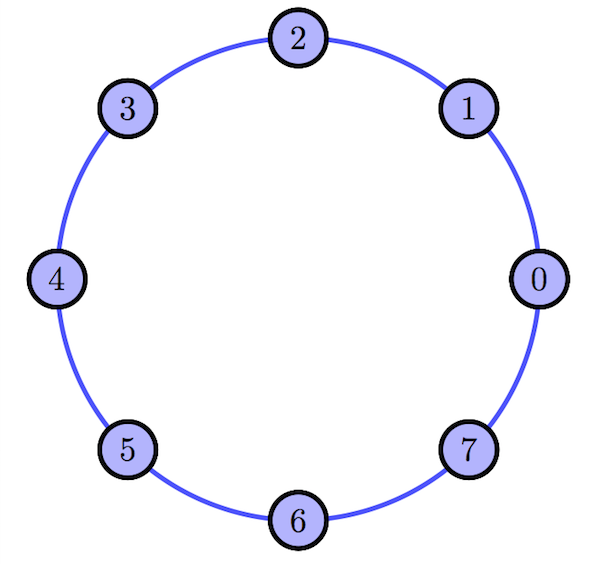}}
	{\includegraphics[width=1.23in]{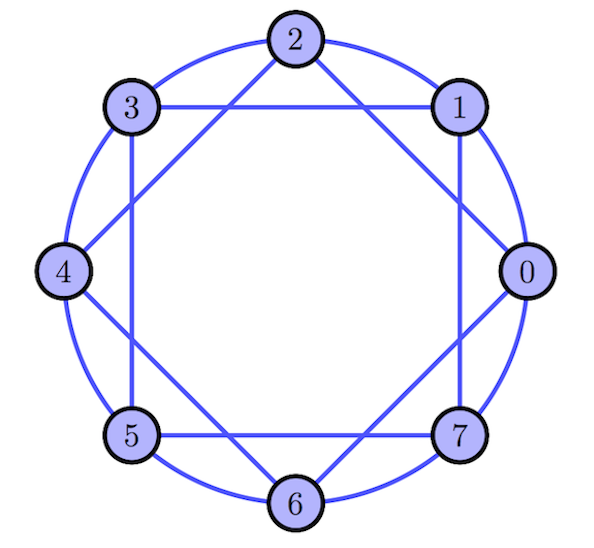}}
	{\includegraphics[width=1.23in]{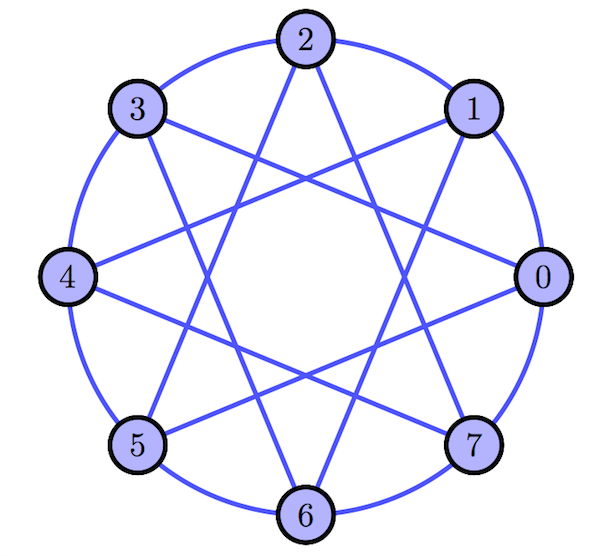}}
	{\includegraphics[width=1.2in]{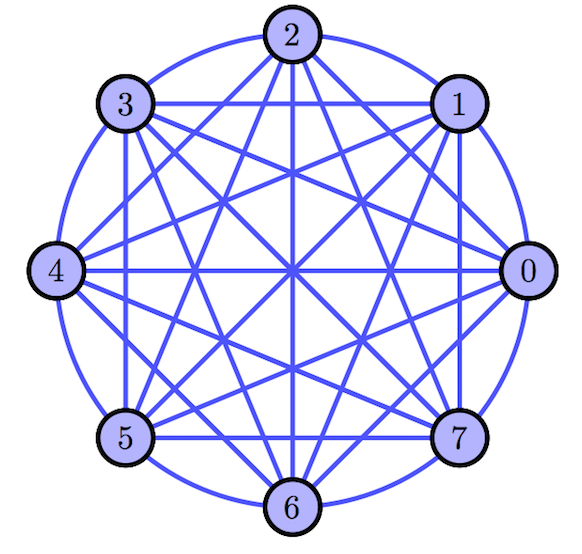}}
	\caption{Circulant Graphs with generating sets $S=\{1\}$, $S=\{1,2\}$, $S=\{1,3\}$ and $S=\{1,2,3,4\}$ (f. left)\label{fig:sets}}
\end{figure}

A graph (wavelet) filter ${\bf H}$ in the vertex domain generally describes a linear transform which takes weighted averages (differences) of components of the input signal ${\bf x}$ at a vertex $i$ within its $k$-hop local neighborhood $N(i,k)$, and may, where applicable, be expressed as a polynomial in the adjacency (or an alternative graph) matrix ${\bf H}=\sum_{k=0}^{N-1} h_k {\bf A}^k$ for appropriate coefficients $h_k$ \cite{shu}. Upon definition of a suitable set of low-and high-pass graph filters, along with a sensible (graph-dependent) downsampling pattern in the vertex domain, one may construct a graph wavelet filterbank for the graph at hand, with potential multiresolution analysis arising from the reassignment of the downsampled output to suitably coarsened graphs and iteration in the low-pass branch, as conducted e.g. in \cite{ekambaram1} for circulant graphs. Nevertheless, the overall task remains challenging in general due to the complex and variable connectivity of arbitrary graphs. \\
Furthermore, in \cite{ekambaram1}, a variety of SP concepts and operations on circulant graphs are discussed, including different options to conduct downsampling of vertices in the context of graph wavelet analysis. Here, a given signal on the circulant graph $G$ with generating set $S$ can be sampled by $2$ with respect to any element $s_k\in S$, and, for simplicity, we resort to the standard downsampling operation with respect to the outmost cycle ($s_1=1$), i.e. we skip every other labelled node, assuming that $G$ is connected such that $s_1\in S$, and $N=2^n$ for $n\in \mathbb{N}$. 
After downsampling, the retained vertices can be reconnected to form a coarsened graph for which several schemes have been proposed (\cite{kron},\cite{mult}). We primarily resort to opting for the sparsest possible graph-reconnection under the preservation of circularity, by either retaining the same generating set of the original circulant graph, or alternatively, reducing connectivity by only maintaining existing edges without newly reconnecting nodes (with the exception of preserving $s_1\in S$). In particular, this ensures that the bandwidth of the original graph adjacency matrix is not increased after coarsening, and, as a result of the relation between the sparsity of signal representation and the support (matrix bandwidth) of the proposed graph wavelet filters, thus facilitates a sparse multiresolution representation. As we will proceed to demonstrate in Sect. $4$, the former approach involving the replication of the graph generating set preserves basic graph properties and will be further leveraged in our sparse sampling scheme. 

\section{E-Spline Wavelet Analysis on Circulant Graphs}
Before we can formulate a framework for graph signal sampling, we need to state the theory of graph spline wavelets and their basic properties, which serve as a crucial element in our interpretation and analysis of sparsity on graphs. Inspired by the circulant graph wavelet filterbank introduced in (\cite{ekambaram2},\cite{Ekambaram3}) and in light of further detected properties pertaining to the circulant graph Laplacian matrix \cite{spie} and its parameterised extension, the e-graph Laplacian \cite{icassp}, which we will briefly state below, we have developed novel families of graph (e-)spline wavelet transforms which form a graph-analogy to the traditional (e-)spline and associated wavelet families. For a more thorough discussion of the comparison and proofs of the accompanying claims, we refer the interested reader to the comprehensive work on graph (e-)spline wavelets \cite{splinesw}.
\subsection{Vanishing Moments of the Graph Laplacian}
In the ensuing discussion, we distinguish between two main classes of smooth graph signals residing on the vertices of a graph $G$: \\
\\ \textit{(Piecewise) Polynomial:} A graph signal ${\bf p}\in \mathbb{R}^N$ defined on the vertices of a graph $G$ is (piecewise) polynomial if its labelled sequence of sample values, with value $p(i)$ at node $i$, is the discrete, vectorized version of a standard (piecewise) polynomial, such that ${\bf p}=\sum_{j=1}^K {\bf p}_j \circ {\bf 1}_{\lbrack t_j,t_{j+1})}$, where $t_1=0$ and $t_{K+1}=N$, with pieces $p_j(t)=\sum_{d=0}^D a_{d,j} t^d,\enskip j=1,...,K$, for $t\in\mathbb{Z}^{\geq 0}$, coefficients $a_{d,j} \in\mathbb{R}$, and maximum degree $D=deg(p_j(t))$.\\
\\ \textit{Complex exponential polynomial:} A complex exponential polynomial graph signal ${\bf x}\in\mathbb{C}^N$ with parameter $\alpha\in\mathbb{R}$, is defined such that node $j$ has sample value $x(j)=p(j) e^{i \alpha j}$, for  polynomial ${\bf p}\in\mathbb{R}^N$ of degree $\textit{deg}(p(t))$.\\
\\
Prior analysis of the graph Laplacian matrix has yielded a distinct annihilation property for the symmetric circulant case. Here, we adopt the traditional definition of the \textit{vanishing moments} of order $N$ of a high-pass filter ${\bf h}$ with taps $h_k$ as orthogonality of the former with respect to subspaces of polynomials of up to degree $N-1$, i.e. 
the $n$-th order moments $m_n=\sum_{k\in\mathbb{Z}} h_k k^n$ of ${\bf h}$, for $n=0,...,N-1$ are zero, in order to capture the following results on graph differencing operators:
\begin{lem} \label{lem1}For an undirected, circulant graph $G=(V,E)$ of dimension $N$, the associated representer polynomial $l(z)=l_0 +\sum_{i=1}^{B} l_{i} (z^i+z^{-i})$ of graph Laplacian matrix ${\bf L}$, with first row $\lbrack l_0 \ l_1\ l_2 \quad ... \quad l_2 \ l_{1}\rbrack$, has two vanishing moments. Therefore, the operator ${\bf L}$ annihilates polynomial graph signals of up to degree $D=1$, subject to a border effect determined by the bandwidth $B$ of ${\bf L}$, provided $2B<N$.\end{lem}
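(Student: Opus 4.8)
The plan is to treat the statement in two parts: first verify the two vanishing moments of the representer polynomial directly from the structure of the symmetric circulant Laplacian, and then deduce the annihilation of affine (degree-$\le 1$) graph signals away from the border by a convolution argument.

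For the vanishing moments, I would identify the Laplacian filter with the symmetric tap sequence $\{l_k\}_{k=-B}^{B}$ satisfying $l_{-k}=l_k$, which is precisely the data encoded in both the representer polynomial $l(z)=l_0+\sum_{i=1}^B l_i(z^i+z^{-i})$ and the first row $[l_0\ l_1\ \cdots\ l_2\ l_1]$ of ${\bf L}$. The zeroth moment $m_0=\sum_k l_k=l_0+2\sum_{i=1}^B l_i$ equals the row sum of ${\bf L}$, which vanishes because ${\bf L}={\bf D}-{\bf A}$ has zero row sums by construction of the degree matrix. The first moment $m_1=\sum_k k\,l_k$ vanishes by the symmetry $l_{-k}=l_k$, since the contributions from $+k$ and $-k$ cancel termwise. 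This gives $m_0=m_1=0$, i.e.\ two vanishing moments; equivalently, $l(z)$ has a double zero at $z=1$, since $l(1)=m_0$ and $l'(1)=m_1$, which I could record as a cross-check (the second condition being automatic from the symmetric form).

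For the annihilation property, I would write the action of the circulant operator as a cyclic convolution, $({\bf L}{\bf p})(j)=\sum_{m=-B}^{B} l_m\, p\big((j+m)_N\big)$. For an \emph{interior} node $j$, meaning one whose $B$-hop neighborhood does not wrap around, i.e.\ $B\le j\le N-1-B$, the modular index reduces to $(j+m)_N=j+m$. Substituting the affine signal $p(t)=a_0+a_1 t$ and expanding then yields $a_0 m_0 + a_1\big(j\,m_0+m_1\big)=0$ by the vanishing moments, so ${\bf L}$ annihilates degree-$\le 1$ signals on the interior. The hypothesis $2B<N$ is exactly what guarantees a nonempty interior (it forces $B\le N-1-B$) and a clean separation between the two border zones near nodes $0$ and $N-1$.

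The main obstacle, and the reason for the ``border effect'' caveat, is the wraparound: for $j$ within distance $B$ of an endpoint, some neighbors have $(j+m)_N\ne j+m$, so $p\big((j+m)_N\big)$ is evaluated at an index shifted by $\pm N$ and the cancellation that produced zero no longer holds. I would make this precise by isolating the border terms and showing the residual equals $a_1 N$ times a partial sum of the taps $l_m$ over the wrapped indices, which is generically nonzero; this both confirms that the annihilation is exact only on the interior and quantifies the border discrepancy in terms of $B$ and the tap values.
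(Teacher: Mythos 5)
Your proof is correct and follows essentially the same route as the paper: the two structural facts you use (zero row sums giving $m_0=0$, tap symmetry giving $m_1=0$) are exactly the conditions $l(1)=0$ and $l'(1)=0$, i.e.\ the double root of $l(z)$ at $z=1$ that the paper's proof sketch detects to conclude two vanishing moments. Your explicit convolution argument for the interior nodes and the quantification of the wraparound residual as $a_1 N$ times a partial tap sum fill in details the paper defers to its companion work \cite{splinesw}, but the underlying mechanism is identical.
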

\noindent Further, we define a novel generalized graph difference operator, the e-graph Laplacian matrix, for undirected, circulant graph $G$ with adjacency matrix ${\bf A}$ of bandwidth $B$ and degree $d=\sum_{j=1}^B 2 d_j$ per node, with symmetric weights $d_j=A_{i,(j+i)_N}$, as $\tilde{{\bf L}}_{\alpha}=\tilde{{\bf D}}_{\alpha}-{\bf A}$, where $\tilde{d}_{\alpha}=\sum_{j=1}^B 2 d_j \cos(\alpha j)$ is the parameterised, exponential degree with $|\tilde{d}_{\alpha}|\leq d$ and $\alpha\in\mathbb{R}$. This operator can be considered as a generalization of the classical graph Laplacian, where $\tilde{{\bf L}}_{\alpha}={\bf L}$ for $\alpha=0$, and, although not a positive semi-definite matrix for $\alpha\neq 0$, it is of primary interest as a graph differencing operator, as the following property demonstrates:
\begin{lem} \label{lem2}For an undirected, circulant graph $G=(V,E)$ of dimension $N$, the associated representer polynomial $\tilde{l}_{\alpha}(z)=\tilde{l}_0 +\sum_{i=1}^{B} \tilde{l}_{i} (z^i+z^{-i})$ of the e-graph Laplacian matrix $\tilde{{\bf L}}_{\alpha}$, with first row $\lbrack \tilde{l}_0 \ \tilde{l}_1\ \tilde{l}_2 \quad ... \quad \tilde{l}_2 \ \tilde{l}_{1}\rbrack$, has two vanishing exponential moments, i.e. the operator $\tilde{{\bf L}}_{\alpha}$ annihilates complex exponential polynomial graph signals with exponent $\pm i\alpha$ and $\textit{deg}(p(t))=0$. Unless $\alpha=\frac{2\pi k}{N}$ for $k\in\lbrack 0, N-1\rbrack$, this is subject to a border effect determined by the bandwidth $B$ of $\tilde{{\bf L}}_{\alpha}$, provided $2B<N$.\end{lem}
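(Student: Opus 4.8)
The plan is to prove the annihilation property by direct evaluation of the circulant operator $\tilde{{\bf L}}_{\alpha}$ on the candidate signal, and then to read off the two vanishing exponential moments as the vanishing of the representer polynomial $\tilde{l}_{\alpha}(z)$ at $z=e^{\pm i\alpha}$. The key structural facts I would exploit are that $\tilde{{\bf L}}_{\alpha}=\tilde{{\bf D}}_{\alpha}-{\bf A}$ is a symmetric, banded circulant matrix of bandwidth $B$ with diagonal entry $\tilde{l}_0=\tilde{d}_{\alpha}$ and off-diagonal band entries $\tilde{l}_i=-d_i$ for $i=1,\dots,B$, so that applying it to a signal amounts to circular convolution with this kernel; in the graph interior, away from the $\mathrm{mod}\,N$ wraparound, this reduces to ordinary convolution, exactly as in the polynomial argument of Lemma \ref{lem1}.

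First I would take the degree-$0$ complex exponential signal $x(j)=e^{i\alpha j}$ and compute $(\tilde{{\bf L}}_{\alpha}{\bf x})(n)$ at an interior node, i.e. for $B\leq n\leq N-1-B$, where the neighbor indices $n\pm i$ incur no reduction modulo $N$. Collecting terms and factoring out $e^{i\alpha n}$ gives
\begin{equation*}
(\tilde{{\bf L}}_{\alpha}{\bf x})(n)=e^{i\alpha n}\Big(\tilde{d}_{\alpha}-\sum_{i=1}^{B} d_i\,(e^{i\alpha i}+e^{-i\alpha i})\Big)=e^{i\alpha n}\Big(\tilde{d}_{\alpha}-\sum_{i=1}^{B} 2 d_i\cos(\alpha i)\Big).
\end{equation*}
By the very definition of the exponential degree $\tilde{d}_{\alpha}=\sum_{i=1}^{B} 2 d_i\cos(\alpha i)$, the bracket vanishes, so the interior response is identically zero. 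Equivalently, this bracket is the representer polynomial evaluated at $z=e^{i\alpha}$, namely $\tilde{l}_{\alpha}(e^{i\alpha})=\tilde{l}_0+\sum_{i=1}^{B}\tilde{l}_i(e^{i\alpha i}+e^{-i\alpha i})=0$; since $\tilde{l}_{\alpha}(z)$ is symmetric under $z\mapsto z^{-1}$ and the cosine is even, the same computation with $x(j)=e^{-i\alpha j}$ yields $\tilde{l}_{\alpha}(e^{-i\alpha})=0$. These two identities are precisely the two vanishing exponential moments associated with the exponents $\pm i\alpha$, and together they establish the annihilation of both degree-$0$ complex exponential polynomials in the interior.

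The hard part, and the source of the stated border effect, is the behavior at the $2B$ nodes within distance $B$ of the labelling boundary, where the $\mathrm{mod}\,N$ operator forces a wraparound. For such $n$, a neighbor index such as $(n-i)_N=n-i+N$ contributes $e^{i\alpha(n-i+N)}=e^{i\alpha(n-i)}e^{i\alpha N}$ rather than $e^{i\alpha(n-i)}$, so the clean factorization above picks up stray factors $e^{\pm i\alpha N}$ and the bracket no longer collapses. I would then observe that these spurious factors equal unity exactly when $e^{i\alpha N}=1$, i.e. when $\alpha=\tfrac{2\pi k}{N}$ for some $k\in\lbrack 0,N-1\rbrack$. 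In that case $e^{i\alpha j}$ coincides with the $k$-th DFT/GFT eigenvector of the circulant graph and is therefore a genuine eigenvector of $\tilde{{\bf L}}_{\alpha}$ with eigenvalue $\tilde{l}_{\alpha}(e^{i\alpha})=0$ by the interior computation, so the annihilation holds globally with no border effect. For every other $\alpha$ the signal fails to be $N$-periodic, is not an eigenvector, and the annihilation is confined to the interior, leaving the claimed discrepancy at the $2B$ edge nodes; the hypothesis $2B<N$ is exactly what guarantees a nonempty interior $\{B,\dots,N-1-B\}$ and that the two boundary bands do not overlap. This mirrors Lemma \ref{lem1} with ordinary moments replaced by exponential ones.
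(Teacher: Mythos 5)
Your proposal is correct and takes essentially the same route as the paper: the heart of the argument is detecting the roots $z_{\pm}=e^{\pm i\alpha}$ of the representer polynomial, i.e. the identity $\tilde{l}_{\alpha}(e^{\pm i\alpha})=\tilde{d}_{\alpha}-\sum_{i=1}^{B}2d_i\cos(\alpha i)=0$, which is precisely what the paper cites as the proof mechanism (deferring details to its companion reference). Your interior-versus-wraparound bookkeeping, and the observation that for $\alpha=\tfrac{2\pi k}{N}$ the signal becomes a genuine DFT eigenvector with eigenvalue zero so that annihilation holds globally, correctly and completely fill in the border-effect statement that the paper asserts without detail.
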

\noindent Proofs of the preceding Lemmata entail the detection of roots $z_{\pm}=e^{\pm i\alpha}$ for representer polynomials $\tilde{l}_{\alpha}(z)$ (with double root $z=1$ for $\tilde{l}_0(z)=l(z)$), which indicate two exponential vanishing moments.

\subsection{Families of Graph E-Spline Wavelets}
By leveraging the aforementioned high-pass filter properties of the e-graph Laplacian operator, we design higher-order critically-sampled and vertex-domain localized graph wavelet filterbanks, which extend classical (e-)spline properties to the graph domain, and distinguish between graph spline and graph e-spline wavelets respectively:
\begin{thm} \label{esp1}Given the undirected, and connected circulant graph $G=(V,E)$ of dimension $N$, with adjacency matrix ${\bf A}$ and degree $d$ per node, we define the higher-order graph-spline wavelet transform (HGSWT), composed of the low-and high-pass filters
\begin{equation}\label{eq:lp1}{\bf H}_{LP}=\frac{1}{2^k}\left({\bf I}_N+\frac{{\bf A}}{d}\right)^k\end{equation}
\begin{equation}{\bf H}_{HP}=\frac{1}{2^k}\left({\bf I}_N-\frac{{\bf A}}{d}\right)^k\end{equation}
whose associated high-pass representer polynomial $H_{HP}(z)$ has $2k$ vanishing moments. This filterbank is invertible for any downsampling pattern, as long as at least one node retains the low-pass component, while the complementary set of nodes retains the high-pass components.\end{thm}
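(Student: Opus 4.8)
The plan is to separate the two assertions and to exploit that a $d$-regular circulant graph has degree matrix ${\bf D}=d\,{\bf I}_N$, so that ${\bf I}_N-{\bf A}/d={\bf L}/d$ and hence
\[
{\bf H}_{HP}=\frac{1}{2^k}\Bigl({\bf I}_N-\tfrac{{\bf A}}{d}\Bigr)^k=\frac{1}{(2d)^k}\,{\bf L}^k,\qquad {\bf H}_{LP}=\frac{1}{(2d)^k}\,(2d\,{\bf I}_N-{\bf L})^k .
\]
For the vanishing moments I would first identify the representer polynomial of ${\bf H}_{HP}$: since a product of symmetric circulant matrices corresponds to the circular product of their representer polynomials, the representer polynomial of ${\bf L}^k$ is $l(z)^k$, up to a wrap-around/border effect that is harmless provided $2kB<N$. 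By Lemma \ref{lem1} the Laurent polynomial $l(z)$ carries a double root at $z=1$, so $l(z)^k$ has a zero of order exactly $2k$ there; invoking the standard equivalence ``number of vanishing moments $=$ order of the zero of the representer polynomial at $z=1$'' yields the claimed $2k$ moments. Equivalently and more concretely, ${\bf L}$ lowers the degree of any polynomial graph signal by two (the content of its two vanishing moments), so ${\bf L}^k$ annihilates every polynomial of degree $\le 2k-1$, again subject to the border effect set by $B$.

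For invertibility I would encode the retained measurements as a single matrix. Given a pattern partitioning $V$ into a low-pass set $S_{LP}\neq\emptyset$ and its complement $S_{HP}$, let ${\bf P},{\bf Q}$ be the complementary diagonal $0/1$ selectors of $S_{LP},S_{HP}$ (so ${\bf P}+{\bf Q}={\bf I}_N$) and set ${\bf T}={\bf P}\,{\bf H}_{LP}+{\bf Q}\,{\bf H}_{HP}$; perfect reconstruction for every input is then equivalent to $\det{\bf T}\neq 0$. Writing ${\bf M}={\bf A}/d$, the structural facts I would rely on are that ${\bf H}_{LP}$ and ${\bf H}_{HP}$ are real, symmetric and positive semidefinite (eigenvalues $\bigl(\tfrac{1\pm\mu}{2}\bigr)^k\ge 0$ for the eigenvalues $\mu\in[-1,1]$ of ${\bf M}$), that ${\bf H}_{LP}$ is doubly stochastic, and that ${\bf H}_{HP}$ annihilates the constant vector ${\bf 1}$.

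The kernel argument then runs uniformly in $k$. Suppose ${\bf T}{\bf v}={\bf 0}$. Since ${\bf P}{\bf H}_{LP}{\bf v}=-{\bf Q}{\bf H}_{HP}{\bf v}$ and the two sides are supported on the disjoint sets $S_{LP}$ and $S_{HP}$, both must vanish, so ${\bf H}_{LP}{\bf v}$ and ${\bf H}_{HP}{\bf v}$ have disjoint supports and are therefore orthogonal, giving
\[
0=({\bf H}_{LP}{\bf v})^{H}({\bf H}_{HP}{\bf v})={\bf v}^{H}{\bf H}_{LP}{\bf H}_{HP}{\bf v}={\bf v}^{H}\,\frac{({\bf I}_N-{\bf M}^2)^k}{4^k}\,{\bf v}.
\]
As $({\bf I}_N-{\bf M}^2)^k$ is positive semidefinite, ${\bf v}$ lies in its kernel, hence in $\ker({\bf I}_N-{\bf M}^2)=\mathrm{span}$ of the eigenvectors of ${\bf M}$ at $\mu=\pm1$; for a connected graph this is $\mathrm{span}\{{\bf 1}\}$ (non-bipartite) or $\mathrm{span}\{{\bf 1},{\boldsymbol\sigma}\}$ with ${\boldsymbol\sigma}$ the $\pm1$ bipartition vector (bipartite). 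Writing ${\bf v}=c_1{\bf 1}+c_2{\boldsymbol\sigma}$ and using ${\bf H}_{LP}{\bf 1}={\bf 1}$, ${\bf H}_{LP}{\boldsymbol\sigma}={\bf 0}$, ${\bf H}_{HP}{\bf 1}={\bf 0}$, ${\bf H}_{HP}{\boldsymbol\sigma}={\boldsymbol\sigma}$, the support conditions force $c_1=0$ (because $S_{LP}\neq\emptyset$) and $c_2=0$ (because $S_{HP}\neq\emptyset$ and ${\boldsymbol\sigma}$ is nowhere zero), whence ${\bf v}={\bf 0}$ and ${\bf T}$ is invertible.

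The vanishing-moment claim and the $k=1$ case are routine; the genuine obstacle is invertibility for general $k$ and an \emph{arbitrary} (not merely alternating) sampling pattern, since the high-pass taps are signed and a direct diagonal-dominance or maximum-principle argument no longer closes. The device that resolves this is the disjoint-support orthogonality identity combined with the positive-semidefinite factorization ${\bf H}_{LP}{\bf H}_{HP}=4^{-k}({\bf I}_N-{\bf M}^2)^k$, which collapses the problem, uniformly in $k$, onto the low-dimensional eigenspace of ${\bf M}$ at $\mu=\pm1$; the only remaining subtlety is the bipartite case, absorbed by the extra eigenvector ${\boldsymbol\sigma}$ and the hypothesis that at least one node carries the low-pass component.
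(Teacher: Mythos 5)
Your proof is correct, but it cannot be compared line-by-line with a proof in this paper, because the paper does not prove Theorem \ref{esp1} here: all proofs of the Section 3 filterbank claims are deferred to the companion work \cite{splinesw}. Judged against the approach that work and this paper visibly take (see the invertibility conditions in Thm.~\ref{esp2}, phrased via eigenvalues $\gamma_i$ of ${\bf A}/d$ and linear independence of \emph{downsampled} eigenvectors, and the circulant-diagonalization computation in Appendix A1), your route is genuinely different: instead of working in the DFT eigenbasis and tracking which eigenvectors survive downsampling, you give a basis-free kernel argument --- disjoint supports of ${\bf H}_{LP}{\bf v}$ and ${\bf H}_{HP}{\bf v}$ force ${\bf v}^{H}{\bf H}_{LP}{\bf H}_{HP}{\bf v}=0$, the positive-semidefinite factorization ${\bf H}_{LP}{\bf H}_{HP}=4^{-k}({\bf I}_N-{\bf M}^2)^k$ (with ${\bf M}={\bf A}/d$) collapses ${\bf v}$ onto the $\mu=\pm 1$ eigenspace of ${\bf M}$, and Perron--Frobenius plus the sampling hypothesis kills the remaining coefficients. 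What this buys is notable: the argument is uniform in $k$ and in the sampling pattern, and it uses nothing about circulant structure beyond regularity and connectedness, so it directly establishes the extension to arbitrary undirected connected regular graphs that the paper only asserts by citation in Section 4.4; conversely, the eigenbasis approach of Thm.~\ref{esp2} generalizes more naturally to the e-spline case $\beta_n\neq 1$, where your $\mu=\pm 1$ collapse has no analogue and the downsampled-eigenvector conditions become unavoidable.

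Two small caveats worth recording. First, the vanishing-moment claim requires $2kB<N$ (which you flag) so that the representer polynomial of ${\bf L}^k$ really is $l(z)^k$ with no wrap-around of coefficients; this mirrors the proviso $2B<N$ in Lemma \ref{lem1}. Second, in the bipartite case your step forcing $c_2=0$ needs $S_{HP}\neq\emptyset$ --- and rightly so: for an even cycle, the degenerate pattern assigning the low-pass component to every node gives ${\bf T}={\bf H}_{LP}$, which is singular since ${\bf H}_{LP}{\boldsymbol\sigma}={\bf 0}$. The theorem's wording should therefore be read as assuming a nonempty high-pass set, exactly as your proof implicitly does.
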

\begin{thm} \label{esp2}The higher-order graph e-spline wavelet transform (HGESWT) on a connected, undirected circulant graph $G$, is composed of the low-and high-pass filters
\begin{equation}\label{eq:lp2}{\bf H}_{LP_{\vec{\alpha}}}=\prod_{n=1}^T\frac{1}{2^k} \left(\beta_n{\bf I}_N+\frac{{\bf A}}{d}\right)^k\end{equation}
\begin{equation}{\bf H}_{HP_{\vec{\alpha}}}=\prod_{n=1}^T \frac{1}{2^k}\left(\beta_n{\bf I}_N-\frac{{\bf A}}{d}\right)^k\end{equation}
where ${\bf A}$ is the adjacency matrix, $d$ the degree per node and parameter $\beta_n$ is given by $\beta_n=\frac{\tilde{d}_{\alpha_n}}{d}$ with $\tilde{d}_{\alpha_n}=\sum_{j=1}^B 2 d_j \cos(\alpha_n j)$ and $\vec{\alpha}=(\alpha_1,...,\alpha_T)$. Then the high-pass filter annihilates complex exponential polynomials (of deg$(p(t))\leq k-1$) with exponent $\pm i \alpha_n$ for $n=1,...,T$. The transform is invertible for any downsampling pattern as long as the eigenvalues $\gamma_i$ of $\frac{{\bf A}}{d}$ satisfy $|\beta_n|\neq |\gamma_i|,\enskip i=0,...,N-1$, under either of  the sufficient conditions
 \\
$(i)$ $k\in2\mathbb{N}$, or \\
$(ii)$ $k\in\mathbb{N}$ and $\beta_n, T$ are such that $\forall \gamma_i, f(\gamma_i)=\prod_{n=1}^T(\beta_n^2-\gamma_i^2)^k> 0$ or $f(\gamma_i)< 0$.\\
If parameters $\beta_n$, are such that $\beta_n= \gamma_i$, for up to $T$ distinct values, the filterbank continues to be invertible under the above as long as $\beta_n\neq 0$ and at least $\sum_{i=1}^{T} m_i$ low-pass components are retained at nodes in set $V_{\alpha}$ such that $\{{\bf v}_{+i, k} (V_{\alpha})\}_{i=1, k=1}^{i=T, k=m_i}$ (and, if eigenvalue $-\gamma_i$ exists, complement $\{{\bf v}_{-i, k}({V_{\alpha}^{\complement}})\}_{i=1, k=1}^{i=T, k=m_i}$) form linearly independent sets, where $m_i$ is the multiplicity of $\gamma_i$ and $\{{\bf v}_{\pm i,k}\}_{k=1}^{m_i}$ are the eigenvectors respectively associated with $\pm \gamma_i$.\end{thm}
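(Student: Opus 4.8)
The plan is to split the statement into its annihilation claim and its invertibility claim, exploiting throughout that every factor of both filters is a polynomial in ${\bf A}$ and is therefore simultaneously diagonalised by the GFT (a permuted DFT) of the circulant graph, so that all factors commute and act diagonally in the spectral domain.

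For annihilation I would first rewrite each high-pass factor as $\frac{1}{2^k}(\beta_n{\bf I}_N-\frac{{\bf A}}{d})^k=\frac{1}{(2d)^k}(\tilde{d}_{\alpha_n}{\bf I}_N-{\bf A})^k=\frac{1}{(2d)^k}\tilde{{\bf L}}_{\alpha_n}^k$, identifying it as a power of the e-graph Laplacian of Lemma \ref{lem2}. By Lemma \ref{lem2} the representer polynomial of $\tilde{{\bf L}}_{\alpha_n}$ carries the roots $z_\pm=e^{\pm i\alpha_n}$, so a single application annihilates the degree-zero modulated exponential $e^{\pm i\alpha_n j}$; raising to the $k$-th power promotes these to roots of multiplicity $k$, which — exactly as in the classical discrete e-spline case — yields annihilation of the whole family $\{j^m e^{\pm i\alpha_n j}\}_{m=0}^{k-1}$, i.e. of every complex exponential polynomial $p(j)e^{\pm i\alpha_n j}$ with $\deg p\leq k-1$. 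Since the factors commute, the mere presence of the matching factor $\tilde{{\bf L}}_{\alpha_n}^k$ inside the product forces ${\bf H}_{HP_{\vec{\alpha}}}{\bf x}={\bf 0}$, the border effect being inherited verbatim from Lemma \ref{lem2} and governed by the bandwidth $B$, with exactness when $\alpha_n$ is a DFT frequency $2\pi k/N$.

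For the generic invertibility regime I would pass to the spectral domain, where the low- and high-pass filters act diagonally with entries $L(\gamma_i)=\frac{1}{2^k}\prod_n(\beta_n+\gamma_i)^k$ and $H(\gamma_i)=\frac{1}{2^k}\prod_n(\beta_n-\gamma_i)^k$, the $\gamma_i$ being eigenvalues of $\frac{{\bf A}}{d}$; note $L(\gamma_i)H(\gamma_i)=\frac{1}{4^k}\prod_n(\beta_n^2-\gamma_i^2)^k=\frac{1}{4^k}f(\gamma_i)$. Following the reconstruction argument of Theorem \ref{esp1} for the critically-sampled two-channel scheme, perfect reconstruction reduces to the non-singularity of the operator assembled from the retained low-pass rows and complementary high-pass rows; expanding its determinant (Cauchy–Binet-style) over frequency assignments produces a signed combination of products of the responses $L(\gamma_i),H(\gamma_i)$, i.e. of the quantities $f(\gamma_i)$. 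The condition $|\beta_n|\neq|\gamma_i|$ guarantees every individual response is non-vanishing, so at least one term survives; a uniform sign of $f$ across all $\gamma_i$ (condition (ii)) then aligns the contributions and precludes destructive cancellation, while $k\in 2\mathbb{N}$ (condition (i)) forces $f(\gamma_i)>0$ automatically and is thus a special case.

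The hard part will be the degenerate regime $\beta_n=\gamma_i$. Here the high-pass factor $(\beta_n-\gamma_i)^k$ vanishes, so the high-pass branch annihilates the entire $\gamma_i$-eigenspace, and the lost spectral content can be recovered only from retained low-pass samples — which in turn requires $\beta_n\neq 0$ so that $L(\gamma_i)$ does not also vanish. The crux is then a rank argument: I would show that retaining $\sum_i m_i$ low-pass components at the vertex subset $V_\alpha$ recovers these eigenspaces precisely when the restricted eigenvector families $\{{\bf v}_{+i,k}(V_\alpha)\}$ (and, where $-\gamma_i$ occurs, the complementary $\{{\bf v}_{-i,k}(V_\alpha^\complement)\}$) are linearly independent, so that the restriction of the eigenspace projection to the retained set is invertible. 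Establishing that this linear independence is both necessary and sufficient, and correctly handling the coupling between the $\pm\gamma_i$ pairs under downsampling, is the most delicate step; the remaining, non-degenerate frequencies are dispatched by the commuting-diagonalisation bookkeeping already set up for Theorem \ref{esp1}.
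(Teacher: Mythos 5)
Your annihilation argument is correct and coincides with the route the paper intends: note that this paper never proves Theorem \ref{esp2} internally --- it defers the proof to the companion work \cite{splinesw} --- but the hints it gives fix the approach, namely identifying each high-pass factor with $\frac{1}{(2d)^k}\tilde{{\bf L}}_{\alpha_n}^k$ (the paper itself writes $H_{HP_{\vec{\alpha}}}(z)=\prod_{n=1}^T\tilde{l}_{\alpha_n}(z)^k/(2d)^k$ in Sect.~3.2.2) and lifting the roots $z_{\pm}=e^{\pm i\alpha_n}$ of Lemma \ref{lem2} to multiplicity $k$, exactly as you do. Your reading of the degenerate regime is also structurally right: $\beta_n=\gamma_i$ makes the high-pass filter kill the $+\gamma_i$-eigenspace while the low-pass filter kills the $-\gamma_i$-eigenspace (since $L(-\gamma_i)$ contains the factor $(\beta_n-\gamma_i)^k$), which is precisely why the theorem demands $\beta_n\neq 0$ and linear independence of $\{{\bf v}_{+i,k}(V_{\alpha})\}$ and $\{{\bf v}_{-i,k}(V_{\alpha}^{\complement})\}$; this matches what the paper points to in the proof of Cor.~\ref{cor2}. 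However, you leave that step as an outline, so it remains unproven in your proposal.

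The genuine gap is the invertibility argument under conditions $(i)$/$(ii)$. The Cauchy--Binet-style expansion does not produce what you claim: writing ${\bf H}_{LP_{\vec{\alpha}}}={\bf V}\,\mathrm{diag}(L(\gamma_j))\,{\bf V}^H$ and ${\bf H}_{HP_{\vec{\alpha}}}={\bf V}\,\mathrm{diag}(H(\gamma_j))\,{\bf V}^H$, the determinant of the row-mixed operator expands by column multilinearity into terms indexed by subsets $T$ with $|T|=|V_{\alpha}|$, each of the form $\bigl(\prod_{j\in T}L(\gamma_j)\bigr)\bigl(\prod_{j\notin T}H(\gamma_j)\bigr)$ multiplied by two complementary minors of the DFT matrix. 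Every frequency $j$ contributes either $L(\gamma_j)$ or $H(\gamma_j)$, never both, so the quantities $f(\gamma_j)=L(\gamma_j)H(\gamma_j)$ never appear in the expansion, and the attached minors carry uncontrolled signs (and may themselves vanish); a uniform sign of $f$ therefore does not preclude cancellation, and the proof stalls exactly at the step that needs justification. The mechanism that actually works --- and which is visible in the paper's own Appendix A.1 manipulation of ${\bf W}=\frac{1}{2}({\bf I}_N+{\bf K}){\bf H}_{LP_{\vec{\alpha}}}+\frac{1}{2}({\bf I}_N-{\bf K}){\bf H}_{HP_{\vec{\alpha}}}$ --- is a quadratic-form argument. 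If ${\bf W}{\bf x}={\bf 0}$, the two branches occupy complementary node sets, so separately ${\bf K}{\bf H}_{LP_{\vec{\alpha}}}{\bf x}=-{\bf H}_{LP_{\vec{\alpha}}}{\bf x}$ and ${\bf K}{\bf H}_{HP_{\vec{\alpha}}}{\bf x}={\bf H}_{HP_{\vec{\alpha}}}{\bf x}$; since ${\bf K}^2={\bf I}_N$,
\begin{equation*}
({\bf H}_{LP_{\vec{\alpha}}}{\bf x})^H({\bf H}_{HP_{\vec{\alpha}}}{\bf x})=({\bf K}{\bf H}_{LP_{\vec{\alpha}}}{\bf x})^H({\bf K}{\bf H}_{HP_{\vec{\alpha}}}{\bf x})=-({\bf H}_{LP_{\vec{\alpha}}}{\bf x})^H({\bf H}_{HP_{\vec{\alpha}}}{\bf x}),
\end{equation*}
hence ${\bf x}^H{\bf H}_{LP_{\vec{\alpha}}}{\bf H}_{HP_{\vec{\alpha}}}{\bf x}=0$. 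The symmetric matrix ${\bf H}_{LP_{\vec{\alpha}}}{\bf H}_{HP_{\vec{\alpha}}}$ has eigenvalues proportional to $f(\gamma_i)$, so condition $(ii)$ (uniform sign, zeros excluded by $|\beta_n|\neq|\gamma_i|$) makes it definite and forces ${\bf x}={\bf 0}$, while condition $(i)$ is the special case in which $k\in 2\mathbb{N}$ gives $f(\gamma_i)>0$ automatically. Substituting this for your determinant expansion closes the non-degenerate part of the theorem; the degenerate part then still requires carrying out, rather than sketching, the rank argument on the eigenspaces each branch annihilates.
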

In general, we can iterate on the low-pass branch of either transform to obtain a multilevel representation defined on a collective of suitably coarsened graphs, however, as a consequence of the non-stationarity of the latter (see \cite{esplinewav} for the traditional case), modifications to parameters $\vec{\alpha}$ apply; here, we require the parameterization of $\tilde{d}_{\alpha'_n}$ by $\alpha'_n=2^j \alpha_n$ at level $j$ in order to preserve annihilation properties at the coarser scale.\footnote{Technically, one may describe both graph wavelet filterbank constructions in Thms. \ref{esp1} and \ref{esp2} as `non-stationary' in the sense that the representer polynomials of the respective graph filters at different levels are not necessarily dilates of one another, as a result of their dependence on the adjacency matrix. In particular, unless the coarsened graph, on which the downsampled low-pass output is defined, bears identical edge relations to the initial graph (e.g. when the generating sets are identical for $2B<N$), the representer functions will change with the graph. Nevertheless, the general structure of the filters as polynomials in the adjacency matrix only changes in Thm. \ref{esp2} due to the parameterization by $\{\beta_i\}_i$.}
\\
\\
The aforementioned transforms can be applied on any undirected circulant graph $G$, yet, we observe some noteworthy property distinctions between bipartite and non-bipartite circulant graph cases, as well as when $|\beta_n|= |\gamma_i|$ is satisfied for normalized adjacency matrix eigenvalue $\gamma_i$, and some $n$ and $i\in\lbrack 0\enskip N-1\rbrack$.
\subsubsection{Properties and Special Cases}
In prior work on (classical) generalized e-spline wavelets, it has been established that a scaling filter in the $z$-domain $H_j(z)$ at level $j$ can reproduce a function of the form $P(t) e^{\gamma_m t}$, with $deg P(t) \leq(L_m -1)$ for multiplicity $L_m$ of $\gamma_m$, if and only if the former is divisible by the term $R_{2^j \vec{\gamma}}(z)$, $\forall j\leq j_0-1$, where $R_{\vec{\gamma}}(z)=\prod_{m=1}^M(1+e^{\gamma_m} z^{-1})$, with $\vec{\gamma}=(\gamma_1,...,\gamma_M)^T\in\mathbb{C}^M$, and $H_j(z)$ has no roots of opposite sign, i.e. $H_j(z)$ satisfies the generalized Strang-Fix conditions for suitable $\vec{\gamma}$ (\cite{esplinewav}, Thm. $1$). \\
Complementing Lemmata \ref{lem1}--\ref{lem2} on the graph differencing operator, we can therefore further deduce that for a circulant and bipartite graph, which is characterized by all-odd elements $s_k$ in generating set $S$ for even cardinality $|V|=N$, the low-pass filters ${\bf H}_{LP}$ and ${\bf H}_{LP_{\vec{\alpha}}}$ in Eqs. (\ref{eq:lp1}) and (\ref{eq:lp2}) respectively reproduce (higher-order) polynomial and complex exponential polynomial graph signals. This is equivalently subject to a border effect that depends on the bandwidth $Bk$ of the filter matrices, provided $2Bk<N$; the complete proofs are presented in (\cite{splinesw}, Cors. $3.1$-$3.2$).  \\
\\
In addition, one can explicitly describe the frame bounds, and hence, the $l_2$-norm condition number of the \textit{HGESWT} for a bipartite circulant graph in terms of its spectrum and the given parameters:
\begin{cor} The condition number $C$ of the \textit{HGESWT} matrix ${\bf W}$ for a bipartite circulant graph, with downsampling conducted w.r.t. $s=1\in S$, can be expressed as $C=\sqrt{\frac{\lambda_{max}}{\lambda_{min}}}$, where $\lambda=\frac{1}{2}\left(\prod_{n=1}^T\frac{1}{2^{2k}}(\beta_n+\gamma)^{2k}+\prod_{n=1}^T\frac{1}{2^{2k}}(\beta_n-\gamma)^{2k}\right)$ for eigenvalues $\gamma$ of $\frac{{\bf A}}{d}$.
\end{cor}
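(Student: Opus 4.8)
The plan is to diagonalize the frame operator ${\bf W}^H{\bf W}$ directly, exploiting the fact that every filter in the \emph{HGESWT} is a polynomial in ${\bf A}/d$ and therefore shares the circulant (GFT) eigenbasis, together with the spectral symmetry of a bipartite graph. Since the condition number of ${\bf W}$ equals $\sqrt{\lambda_{max}({\bf W}^H{\bf W})/\lambda_{min}({\bf W}^H{\bf W})}$, it suffices to determine the spectrum of ${\bf W}^H{\bf W}$. First I would write the analysis operator, under downsampling with respect to $s=1$, as the stacked and decimated filter outputs ${\bf W}=\big[\,{\bf D}_0{\bf H}_{LP_{\vec\alpha}}\,;\;{\bf D}_1{\bf H}_{HP_{\vec\alpha}}\,\big]$, where ${\bf D}_0$ and ${\bf D}_1$ retain the even- and odd-labelled vertices respectively.

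Next I would encode the two selection operators through the modulation operator ${\bf S}=\mathrm{diag}((-1)^i)$, using ${\bf D}_0^H{\bf D}_0=\tfrac12({\bf I}+{\bf S})$ and ${\bf D}_1^H{\bf D}_1=\tfrac12({\bf I}-{\bf S})$. Abbreviating the filters by ${\bf H}_{LP},{\bf H}_{HP}$, this yields
$${\bf W}^H{\bf W}=\tfrac12\big({\bf H}_{LP}^H{\bf H}_{LP}+{\bf H}_{HP}^H{\bf H}_{HP}\big)+\tfrac12\big({\bf H}_{LP}^H{\bf S}{\bf H}_{LP}-{\bf H}_{HP}^H{\bf S}{\bf H}_{HP}\big).$$
The first bracket is a product and sum of real symmetric circulant matrices, hence diagonal in the GFT basis with eigenvalue $\tfrac12\big(H_{LP}(\gamma)^2+H_{HP}(\gamma)^2\big)$ at each eigenvalue $\gamma$ of ${\bf A}/d$, where $H_{LP}(\gamma)=\prod_{n}2^{-k}(\beta_n+\gamma)^{k}$ and $H_{HP}(\gamma)=\prod_{n}2^{-k}(\beta_n-\gamma)^{k}$; this is precisely the claimed $\lambda$. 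The proof then reduces to showing that the second, aliasing bracket vanishes.

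Here I invoke bipartiteness. Since a bipartite circulant graph has all-odd generators $s_k$, a direct index computation gives ${\bf S}{\bf A}{\bf S}=-{\bf A}$, i.e.\ ${\bf S}$ anticommutes with ${\bf A}/d$. Because each filter has the form $\prod_n 2^{-k}(\beta_n{\bf I}\pm{\bf A}/d)^k$, conjugation by ${\bf S}$ flips the sign of ${\bf A}/d$ and interchanges the two filters, ${\bf S}{\bf H}_{LP}{\bf S}={\bf H}_{HP}$, equivalently ${\bf S}{\bf H}_{LP}={\bf H}_{HP}{\bf S}$ and ${\bf S}{\bf H}_{HP}={\bf H}_{LP}{\bf S}$. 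Substituting these into the aliasing bracket and using that real symmetric circulant matrices are self-adjoint and mutually commuting, both terms collapse to ${\bf H}_{LP}{\bf H}_{HP}{\bf S}$, so their difference is identically zero. Equivalently, one may restrict ${\bf W}^H{\bf W}$ to each ${\bf S}$-invariant subspace $\mathrm{span}\{{\bf u}_\gamma,{\bf u}_{-\gamma}\}$ and verify by a $2\times2$ computation that it acts as $\lambda{\bf I}_2$.

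Consequently the spectrum of ${\bf W}^H{\bf W}$ is exactly $\{\lambda(\gamma)\}$ ranging over the eigenvalues $\gamma$ of ${\bf A}/d$, and taking the ratio of extreme eigenvalues delivers the stated formula for $C$. I expect the main obstacle to be the careful treatment of the aliasing term: one must verify that the bipartite-induced pairing $\gamma\leftrightarrow-\gamma$ together with the symmetric $\beta_n\pm\gamma$ filter design makes the low- and high-pass off-diagonal contributions cancel \emph{exactly}, rather than merely bounding them, and that the self-paired frequencies (notably $\gamma=0$, where $H_{LP}(\gamma)=H_{HP}(\gamma)$) degenerate gracefully to $\lambda=H_{LP}(\gamma)^2$.
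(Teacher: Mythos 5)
Your proposal is correct and follows essentially the same route as the paper's own proof: both expand the Gram matrix ${\bf W}^H{\bf W}$ using the complementary downsampling projections $\tfrac12({\bf I}\pm{\bf K})$, cancel the aliasing terms via the identity ${\bf K}{\bf H}_{LP}={\bf H}_{HP}{\bf K}$ (which the paper attributes to $H_{HP}(z)=H_{LP}(-z)$ and you derive, equivalently, from the bipartite anticommutation ${\bf S}{\bf A}{\bf S}=-{\bf A}$), and then read off the spectrum of the resulting circulant matrix $\tfrac12({\bf H}_{LP}^2+{\bf H}_{HP}^2)$ in the DFT basis. The only difference is cosmetic — you justify the modulation identity from the all-odd generating set rather than citing it — so the arguments coincide.
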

\noindent \textit{Proof.} See Appendix $A1$.\\
\\
We further note a transition between notions of local and global signal annihilation via the derived generalized graph differencing operators on circulants. For $\alpha_k$ of the form $\frac{2\pi k}{N},\enskip k\in\lbrack 0\enskip N-1\rbrack$, the complex exponential graph signal ${\bf x}$ with values $x(j)=e^{\pm i \alpha_k j}$ is periodic and can be perfectly annihilated by $\tilde{{\bf L}}_{\alpha_k}$, and, in the bipartite case, also reproduced by suitable graph low-pass filters, parameterized by e-degree parameter $\tilde{d}_{\alpha_k}=\sum_{j=1}^B 2 d_j \cos\left(\frac{2\pi k j}{N}\right)$ without any border effects. In particular, the eigenvalues of normalized adjacency matrix $\frac{{\bf A}}{d}$ can be expressed as $\gamma_k=\sum_{j=1}^B 2 d_j \cos\left(\frac{2\pi k j}{N}\right)/d,\enskip k\in\lbrack 0\enskip N-1\rbrack$, thereby satisfying $\gamma_k=\frac{\tilde{d}_{\alpha_k}}{d}$ for the chosen $k$. Since ${\bf x}$ is an eigenvector of $\frac{{\bf A}}{d}$ corresponding to eigenvalue $\gamma_k$, the e-graph Laplacian matrix $\tilde{{\bf L}}_{\alpha_k}$ can be reinterpreted as a shifted version of ${\bf L}$ whose nullspace consists of (a subset of) its eigenvectors, in the alternative (normalized) representation $\left(\gamma_k {\bf I}_N-\frac{{\bf A}}{d}\right){\bf x}={\bf 0}_N$. \\
\\
The transform defined in Thm \ref{esp2} continues to be invertible if $|\beta_n|= |\gamma_i|, \enskip i\in\lbrack 0\enskip N-1\rbrack$, is satisfied, up to certain exceptions for $\beta_n$ (and hence $\alpha_n$), i.a. for the standard alternating downsampling pattern with respect to $s=1\in S$. The following corollaries capture such restrictions on the invertibility of the graph e-spline wavelet transform for certain $\alpha_n$, and are eventually illustrated through an example:
\begin{cor} \label{cor1} The HGESWT ceases to be invertible for any downsampling pattern as well as fails to reproduce certain graph signals when $\exists \alpha_i, \alpha_j$ in $\vec{\alpha}$ such that $\tilde{d}_{\alpha_i}=-\tilde{d}_{\alpha_j}$ for $\tilde{d}_{\alpha_l}$ of the form $\sum_{k=1}^B 2 d_k \cos\left(\frac{2\pi k l}{N}\right)$ for $l\in\lbrack 0\enskip N-1\rbrack$ and $2B< N$, including the case $\tilde{d}_{\alpha_i}=0$.\end{cor}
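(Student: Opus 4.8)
The plan is to exhibit a single graph signal that both the low- and high-pass filters of the \textit{HGESWT} annihilate \emph{before} any downsampling takes place; since a vanishing filter output stays zero under subsampling, this forces the stacked transform matrix ${\bf W}$ to be rank-deficient irrespective of the chosen downsampling pattern. Concretely, I would first invoke the observation made in the discussion preceding Cor.~\ref{cor1}: whenever a parameter has the form $\alpha_n=\frac{2\pi l_n}{N}$, the periodic complex exponential ${\bf x}_n$ with $x_n(m)=e^{i\alpha_n m}$ is an exact eigenvector of $\frac{{\bf A}}{d}$ with eigenvalue $\gamma_{l_n}=\frac{\tilde{d}_{\alpha_n}}{d}=\beta_n$. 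This is precisely the regime assumed in the statement, where each $\tilde{d}_{\alpha_l}=\sum_{k=1}^B 2 d_k \cos\!\left(\frac{2\pi k l}{N}\right)$ coincides with an (unnormalized) spectral value $d\gamma_l$, so that no border effect intervenes and the annihilation identities hold globally.

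The key step is a factorwise computation. Suppose $\vec{\alpha}$ contains $\alpha_i,\alpha_j$ with $\tilde{d}_{\alpha_i}=-\tilde{d}_{\alpha_j}$, and write $\beta:=\beta_i=-\beta_j$, with associated eigenvector ${\bf x}_i$ satisfying $\frac{{\bf A}}{d}{\bf x}_i=\beta {\bf x}_i$. Because both filters are commuting polynomials in $\frac{{\bf A}}{d}$, each product factor acts on ${\bf x}_i$ as scalar multiplication. In ${\bf H}_{HP_{\vec{\alpha}}}=\prod_{n}\frac{1}{2^k}(\beta_n{\bf I}_N-\frac{{\bf A}}{d})^k$ the $n=i$ factor yields $(\beta-\beta)^k{\bf x}_i={\bf 0}_N$, so ${\bf H}_{HP_{\vec{\alpha}}}{\bf x}_i={\bf 0}_N$; simultaneously, in ${\bf H}_{LP_{\vec{\alpha}}}=\prod_{n}\frac{1}{2^k}(\beta_n{\bf I}_N+\frac{{\bf A}}{d})^k$ the $n=j$ factor yields $(-\beta+\beta)^k{\bf x}_i={\bf 0}_N$, so ${\bf H}_{LP_{\vec{\alpha}}}{\bf x}_i={\bf 0}_N$ as well. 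Thus ${\bf x}_i$ lies in the null space of \emph{both} branches; the opposite-sign pair $\beta_i=-\beta_j$ is exactly what transplants the high-pass annihilating root $\beta_i$ into the low-pass factor list, destroying the complementarity that normally lets the retained low-pass samples recover the high-pass-annihilated frequency.

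I would then conclude as follows. Since both pre-sampling outputs vanish on ${\bf x}_i$, every downsampled low- and high-pass coefficient vanishes, whence ${\bf W}{\bf x}_i={\bf 0}_N$ and ${\bf W}$ is singular for \emph{any} downsampling pattern; in particular the low-pass retention remedy of Thm.~\ref{esp2} cannot help, as there is no surviving component of ${\bf x}_i$ to retain. The degenerate case $\tilde{d}_{\alpha_i}=0$ is the self-paired instance $\beta_i=0=-\beta_i$: the common factor $(\frac{{\bf A}}{d})^k$ (up to sign) then appears in both branches and annihilates the eigenvector at eigenvalue $\gamma=0$, which explains the exclusion $\beta_n\neq 0$ stipulated in Thm.~\ref{esp2}. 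Finally, because ${\bf H}_{LP_{\vec{\alpha}}}{\bf x}_i={\bf 0}_N$ while ${\bf x}_i$ is exactly a complex exponential the low-pass is designed to reproduce, reproduction of that signal fails, establishing the second assertion. I expect the only delicate point to be the bookkeeping that guarantees ${\bf x}_i$ is a genuine eigenvector carrying no residual border term — i.e.\ that the choice $\alpha_n=\frac{2\pi l_n}{N}$ places us in the exact, borderless annihilation/reproduction regime — after which the factorwise vanishing and its invariance under downsampling are immediate.
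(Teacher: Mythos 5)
Your proof is correct, and it takes a more self-contained route than the paper's. Both arguments hinge on the same mechanism --- the opposite-sign pair $\beta_i=-\beta_j$ transplants the high-pass annihilating factor into the low-pass product --- but you exploit it differently. The paper states the mechanism as the filter identity ${\bf H}_{LP_{\alpha_i}}=-{\bf H}_{HP_{\alpha_j}}$ (in representer-polynomial form, $H_{LP_{\alpha_i}}(z)=-H_{HP_{\alpha_j}}(z)$), from which annihilation in the low-pass branch and reproduction in the high-pass branch follow; for the invertibility failure, however, it does not argue directly but appeals to the eigenvector linear-independence criterion of Thm.~\ref{esp2} and defers the details to the appendix of the companion paper \cite{splinesw}. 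You instead exhibit an explicit common null vector: since the hypothesis forces each $\tilde{d}_{\alpha_l}$ to be $d$ times an eigenvalue of $\frac{{\bf A}}{d}$, with the periodic exponential ${\bf x}_i$ as an exact eigenvector, your factorwise computation gives ${\bf H}_{LP_{\vec{\alpha}}}{\bf x}_i={\bf H}_{HP_{\vec{\alpha}}}{\bf x}_i={\bf 0}_N$, whence ${\bf W}{\bf x}_i={\bf 0}$ for every downsampling matrix ${\bf K}$ (equivalently, every complementary pair ${\bf \Psi}_{\downarrow 2},{\bf \Phi}_{\downarrow 2}$), so no retention pattern can rescue invertibility. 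This buys two things: the invertibility claim is proved in full without the external reference, and the degenerate case $\tilde{d}_{\alpha_i}=0$ falls out unconditionally as the self-paired instance $\beta_i=0=-\beta_i$ (under the corollary's hypothesis, $0$ is then automatically an eigenvalue of $\frac{{\bf A}}{d}$), whereas the paper's remark makes singularity conditional on ${\bf A}$ being singular. What the paper's formulation buys in exchange is the structural $z$-domain identity itself, which is reused elsewhere (e.g., in the B\'{e}zout-violation discussion for the complementary construction in Sect.~3.2.2), and the linear-independence viewpoint that connects this corollary to the general $|\beta_n|=|\gamma_i|$ regime of Thm.~\ref{esp2}.
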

\begin{proof} In the case of a general circulant graph, we have, for $\alpha_i, \alpha_j$ as above, $H_{LP_{\alpha_i}}(z)=-H_{HP_{\alpha_j}}(z)$, or ${\bf H}_{LP_{\alpha_i}}=-{\bf H}_{HP_{\alpha_j}}$ in matrix form and vice versa, which leads to annihilation in the low-pass and reproduction in the high-pass branch. \\
Further, one cannot demonstrate linear independence of the eigenvectors associated with $\gamma_i,\gamma_j$, and hence invertibility of the \textit{HGESWT}, for any downsampling pattern, which follows from Thm \ref{esp2}; for brevity we refer to the complete proof in (Appendix $A.2$ , \cite{splinesw}) subject to necessary changes. When $\tilde{d}_{\alpha_i}=0$, the filterbank reduces to the normalized adjacency matrix $\frac{{\bf A}}{d}$ up to a sign per row (and its powers), which is singular if ${\bf A}$ is singular, while its representer polynomial contains the zero root.
\end{proof}
\begin{cor} \label{cor2} Let $\gamma^{DFT}=\{\gamma_i\}_i$ denote the DFT-ordered spectrum of $\frac{{\bf A}}{d}$ for $\frac{{\bf A}}{d}={\bf V}{\bf \Gamma}{\bf V}^H$ with ${\bf V}$ as the $N\times N$ DFT-matrix, and consider the HGESWT, with parameters of the form $\beta_k=\frac{\tilde{d}_{\alpha_{k}}}{d}$ for $\alpha_k$ in $\vec{\alpha}$.
When downsampling is conducted with respect to $s=1\in S$, the HGESWT ceases to be invertible if $\exists \alpha_i, \alpha_j$ in $\vec{\alpha}$ for $\frac{\tilde{d}_{\alpha_i}}{d},\frac{\tilde{d}_{\alpha_j}}{d}\in{\bf \gamma}^{DFT}$, with respective multiplicities at frequency positions in sets $M_i=\{i_k\}_k$ and $M_j=\{j_k\}_k$ in $\gamma^{DFT}$, and such that $\frac{\tilde{d}_{\alpha_i}}{d}=\gamma_i$ is located at position $(s+ N/2)_N, \enskip s\in M_j\cup M_i$, of the DFT-ordered spectrum (and vice versa for $\tilde{d}_{\alpha_j}$).  When the graph is bipartite, this condition becomes equivalent to that of Cor. \ref{cor1} for the fixed downsampling pattern.\end{cor}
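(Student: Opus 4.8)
The plan is to pass to the spectral domain of $\frac{{\bf A}}{d}$, where both branches of the HGESWT act diagonally, and to translate node-skipping downsampling with respect to $s=1$ into its induced aliasing on the DFT-ordered spectrum $\gamma^{DFT}$. Since ${\bf H}_{LP_{\vec\alpha}}$ and ${\bf H}_{HP_{\vec\alpha}}$ are polynomials in $\frac{{\bf A}}{d}={\bf V}{\bf\Gamma}{\bf V}^H$, they are simultaneously diagonalized by ${\bf V}$ with frequency responses $\hat{H}_{LP}(\gamma_i)=2^{-Tk}\prod_{n=1}^T(\beta_n+\gamma_i)^k$ and $\hat{H}_{HP}(\gamma_i)=2^{-Tk}\prod_{n=1}^T(\beta_n-\gamma_i)^k$. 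In particular $\hat{H}_{HP}(\gamma_i)=0$ precisely when $\gamma_i\in\{\beta_n\}_{n=1}^T$, i.e.\ when the high-pass branch annihilates the $i$-th column ${\bf v}_i$ of ${\bf V}$, and symmetrically $\hat{H}_{LP}(\gamma_i)=0$ iff $\gamma_i\in\{-\beta_n\}_n$.

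First I would make the aliasing identity precise. The eigenvector ${\bf v}_i$ has entries proportional to $e^{2\pi\sqrt{-1}\,mi/N}$; restricting to the retained even-indexed set $V_\alpha$ (nodes $m=2m'$) gives entries $e^{2\pi\sqrt{-1}\,m'i/(N/2)}$, while for its partner the extra factor $e^{2\pi\sqrt{-1}\,(2m')(N/2)/N}=e^{2\pi\sqrt{-1}\,m'}=1$ yields ${\bf v}_i(V_\alpha)={\bf v}_{(i+N/2)_N}(V_\alpha)$. Thus downsampling by two with respect to $s=1$ identifies the $V_\alpha$-restrictions of the aliasing partners $i$ and $(i+N/2)_N$ — exactly the spectral slots coupled by the positional hypothesis. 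This is the step I expect to demand the most care, since it must be carried out consistently in the DFT ordering $\gamma^{DFT}$ (rather than the graph's native eigenvalue ordering) and reconciled with the exact labelling of $V_\alpha$ under the vertex-domain skip.

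With the identity in hand I would invoke the invertibility criterion of Thm.~\ref{esp2}: when several eigenvectors are high-pass-annihilated ($\beta_n=\gamma_i$), the transform stays invertible only if their restrictions to $V_\alpha$ remain linearly independent. The hypothesis places some $\beta_i=\frac{\tilde{d}_{\alpha_i}}{d}$ at a slot $(s+N/2)_N$ whose partner $s$ lies in $M_j\cup M_i$, so that $\gamma_{(s+N/2)_N}=\beta_i$ and $\gamma_s\in\{\beta_i,\beta_j\}$ are \emph{both} annihilated by the high-pass branch; both partners are therefore forced onto the low-pass branch, yet by the identity their $V_\alpha$-restrictions coincide, the required independence fails, and the HGESWT is non-invertible. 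The sub-case $s\in M_i$ is the self-aliasing of $\beta_i$ and $s\in M_j$ couples two distinct parameters, both subsumed by $s\in M_j\cup M_i$. Equivalently, the $2\times 2$ modulation matrix of the pair $(s,(s+N/2)_N)$ acquires a vanishing high-pass row and hence zero determinant.

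Finally, for the bipartite specialization I would use that an all-odd generating set forces $\gamma_{(i+N/2)_N}=-\gamma_i$ for every $i$, since $\cos\!\big(\tfrac{2\pi(i+N/2)j}{N}\big)=(-1)^j\cos\!\big(\tfrac{2\pi ij}{N}\big)=-\cos\!\big(\tfrac{2\pi ij}{N}\big)$ for odd $j$. Substituting into the positional condition, $\beta_i$ at position $(s+N/2)_N$ means $\beta_i=\gamma_{(s+N/2)_N}=-\gamma_s$; taking $\gamma_s=\beta_j$ (the case $s\in M_j$) gives $\tilde{d}_{\alpha_i}=-\tilde{d}_{\alpha_j}$, while $\gamma_s=\beta_i$ (the case $s\in M_i$) forces $\beta_i=0$, i.e.\ $\tilde{d}_{\alpha_i}=0$. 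These are precisely the two scenarios of Cor.~\ref{cor1}, so for this fixed downsampling pattern the positional hypothesis collapses onto $\tilde{d}_{\alpha_i}=-\tilde{d}_{\alpha_j}$ (including $\tilde{d}_{\alpha_i}=0$), giving the claimed equivalence.
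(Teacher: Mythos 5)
Your proof is correct and follows essentially the same route as the paper's: it invokes the eigenvector-independence invertibility criterion of Thm.~\ref{esp2}, derives the DFT aliasing identity ${\bf v}_s(V_\alpha)={\bf v}_{(s+N/2)_N}(V_\alpha)$ under even-node downsampling to force the required linear independence to fail, and specializes to the bipartite case via $\cos\bigl(\tfrac{2\pi(i+N/2)j}{N}\bigr)=-\cos\bigl(\tfrac{2\pi ij}{N}\bigr)$ for odd $j$. Your write-up is in fact somewhat more explicit than the paper's (the worked-out aliasing computation and the clean split into the sub-cases $s\in M_j$ giving $\tilde{d}_{\alpha_i}=-\tilde{d}_{\alpha_j}$ versus $s\in M_i$ giving $\tilde{d}_{\alpha_i}=0$), but the underlying argument is the same.
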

\begin{proof}
Given parameters of the form $\beta_{i}=\frac{\tilde{d}_{\alpha_{i}}}{d}$ which are contained in the spectrum ${\bf \gamma}^{DFT}$ of $\frac{{\bf A}}{d}$, we distinguish between the eigenvalues $\gamma_i$ (and if existent, $-\gamma_i$) and their multiplicities, such that $|\beta_i|=|\gamma_i|$, with corresponding eigenvectors ${\bf V}_{\pm\gamma_i}=\{{\bf v}_{\pm i,l}\}_l$.
Then the invertibility of the \textit{HGESWT} is conditional upon the eigenvector sets ${\bf V}_{\pm\gamma_i}$ respectively being linearly independent after downsampling each vector by $2$ to give ${\bf v}_{+i,l}(V_{\alpha})$, with $V_{\alpha}=\{0:2:N-2\}$ as the set retained nodes (for the detailed proof see \cite{splinesw}). Since ${\bf V}$ is the DFT-matrix and ${\bf V}(V_{\alpha},0:N-1)=\lbrack \tilde{{\bf V}}\enskip\tilde{{\bf V}}\rbrack$ with $\tilde{{\bf V}}$ as the DFT of dimension $N/2$ (up to a normalization constant), we observe that eigenvector pairs $({\bf v}_k,{\bf v}_{k+ N/2})$, at position $k\in\lbrack 0\enskip N-1\rbrack$ become linearly dependent. We therefore need to ensure that the parameters $\{\beta_i\}_i$ with $|\beta_i|=|\gamma_i|$ are chosen such that the corresponding values of $\{\gamma_i\}_i$ (and multiplicities) respectively do not take the aforementioned positions in the DFT-ordered spectrum; for existing $-\gamma_i$, the same relation holds for complement ${\bf V}_{-\gamma_i}(V_{\alpha}^{\complement})=\{{\bf v}_{-i,l}(V_{\alpha}^{\complement})\}_l$. When the graph is additionally bipartite, we note that given $\alpha_i,\alpha_j$ at respective positions $i,j$, with $j=(i+N/2)_N$, due to the relation $\cos\left(\frac{2\pi k (i+N/2)}{N}\right)=-\cos\left(\frac{2\pi k i}{N}\right)$ for odd $k$, we have $\tilde{d}_{\alpha_i}=-\tilde{d}_{\alpha_j}$ and Cor. \ref{cor1} applies. \end{proof}

\noindent {\bf Example:} Consider the unweighted bipartite circulant graph $G=(V,E)$ of dimension $N=|V|=64$ with generating set $S=\{1,3,5\}$ and normalized adjacency matrix $\frac{{\bf A}}{d}$. Define one level of the graph e-spline wavelet transform of Thm. \ref{esp2} on $G$ with parameters $\alpha_1=\frac{2\pi 15}{N}$ and $\alpha_2=\frac{2\pi 17}{N}$ and $k\in2\mathbb{N}$, tailored to the reproduction/annihilation of complex exponential signals with sample $y(t)=e^{\pm i\alpha_j t},\enskip j=1,2$ at node $t$. The transform is expressed as ${\bf W}=\frac{1}{2}({\bf I}_N+{\bf K}){\bf H}_{LP_{\vec{\alpha}}}+\frac{1}{2}({\bf I}_N-{\bf K}){\bf H}_{HP_{\vec{\alpha}}}$ for diagonal downsampling matrix ${\bf K}$, with $K(i,i)=1$ when node $i$ retains the low-pass component and $K(i,i)=-1$ otherwise.\\
We observe that the normalized e-degrees take the form $\beta_1=\frac{\tilde{d}_{\alpha_1}}{d}=0.093$ and $\beta_2=\frac{\tilde{d}_{\alpha_2}}{d}=-0.093$, which correspond to a pair of eigenvalues of $\frac{{\bf A}}{d}$, characterizing the spectral folding phenomenon of the bipartite graph spectrum \cite{chung}. Upon diagonalization by the DFT-matrix, we further observe that multiplicities of the former respectively occur at (frequency) positions $k_1=49$ and $k_2=47$ of the DFT-ordered spectrum of $\frac{{\bf A}}{d}$. For eigenvalues $0.093$ and $-0.093$ with respective frequency parameterizations $\frac{2\pi 15}{N}$ $\left(\text{or}\enskip \frac{2\pi 49}{N}\right)$ and $\frac{2\pi 47}{N}$ $\left(\text{or}\enskip\frac{2\pi 17}{N}\right)$, this implies $(15+N/2)_N=47$ and $(17+N/2)_N=49$, which according to Cors. \ref{cor1} and \ref{cor2}, violates the invertibility property of the \textit{HGESWT} when downsampling is conducted w.r.t. $s=1\in S$, and, more generally, for any downsampling pattern.  
In particular, we have ${\bf H}_{LP_{\vec{\alpha}}}={\bf H}_{LP_{\alpha_1}}{\bf H}_{LP_{\alpha_2}}={\bf H}_{HP_{\alpha_2}}{\bf H}_{HP_{\alpha_1}}={\bf W}$ and it can be easily deduced that high-pass filter ${\bf H}_{HP_{\vec{\alpha}}}$ is not invertible as its nullspace is non-empty. Fig. \ref{fig:gsignal} depicts the graph along with the associated normalized graph filter function.
\begin{figure}[tbp]
	\centering
	{\includegraphics[width=1.5in]{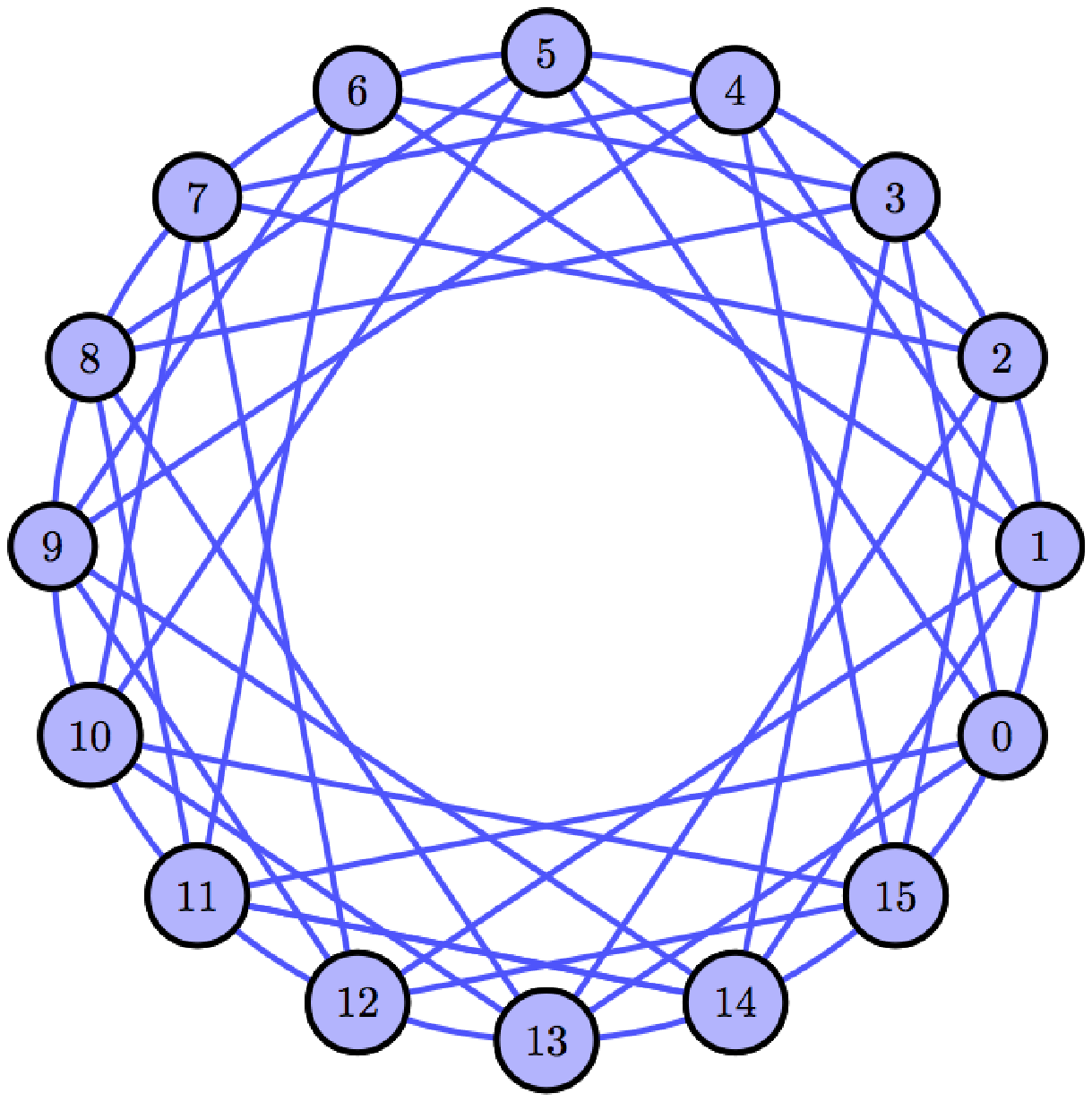}}%
	{\includegraphics[width=3.2in]{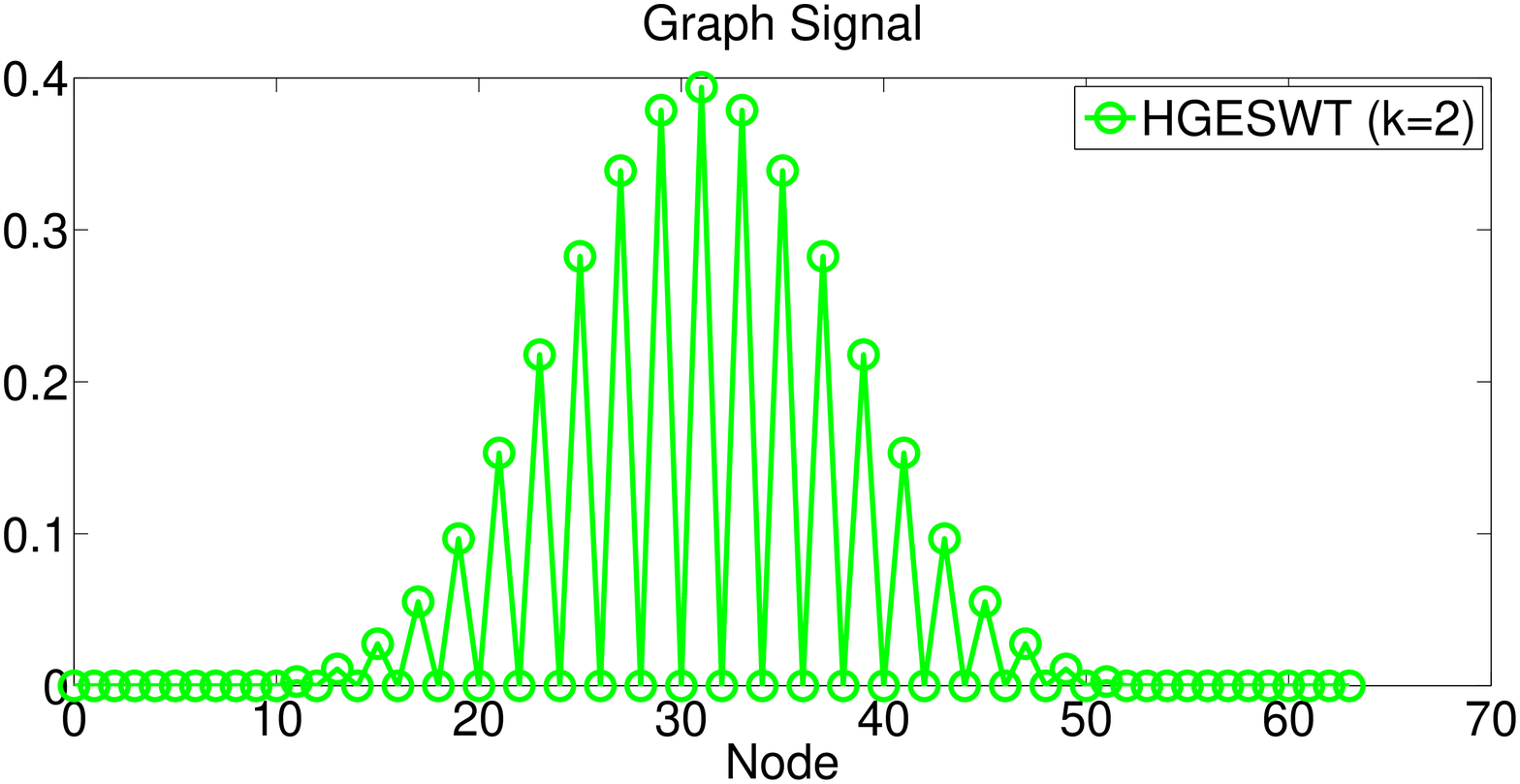}}%
		\caption{Illustrative Circulant Graph with $S=\{1,3,5\}$ of dimension $N/4$ (left) and associated normalized graph filter function of the \textit{HGESWT} ($k=2$, $\vec{\alpha}=(\alpha_1, \alpha_2)$) in the graph vertex domain of dimension $N=|V|=64$: centered at vertex $v=31\in V$ (right). \label{fig:gsignal}}
\end{figure}

\subsubsection{Complementary Graph E-Spline Wavelets for Non-Bipartite Circulants}
For non-bipartite circulant graphs, we resort to traditional spectral factorization techniques to create (vertex-localized) low-pass filters with the required exponential polynomial reproduction properties while maintaining the high-pass filter as is. These novel transforms are composed of well-defined analysis and synthesis filters of compact support and can be related to the previous filterbanks via a symmetric circulant transformation filter ${\bf C}$, depending on the invertibility of the low-pass filters in Eqs. (\ref{eq:lp1}) and (\ref{eq:lp2}) \cite{splinesw}.\\
In particular, given analysis high-pass filter $H_{HP_{\vec{\alpha}}}(z)=\prod_{n=1}^T\frac{\tilde{l}_{\alpha_n}(z)^k}{(2d)^k}$ with parameterization $\vec{\alpha}=(\alpha_1,...,\alpha_T)$, we define synthesis low-pass filter $\tilde{H}_{LP_{\vec{\alpha}}}(z)=H_{HP_{\vec{\alpha}}}(-z)$, and derive analysis low-pass filter $H_{LP_{\vec{\alpha}}}(z)$ from \begin{equation}\label{eq:fb}
P(z)=H_{LP_{\vec{\alpha}}}(z) \tilde{H}_{LP_{\vec{\alpha}}}(z), \quad\text{subject to}\enskip P(z)+P(-z)=2.\end{equation} Further constraints for reproduction properties are imposed in the form of (e-)spline factors in $H_{LP_{\vec{\alpha}}}(z)=\prod_{n=1}^T(z+2\cos(\alpha_n)+z^{-1})^k R(z)$, where $R(z)$ is determined to satisfy Eq. (\ref{eq:fb}). 
\begin{thm} \label{esp3} Given the undirected, and connected circulant graph $G=(V,E)$ of dimension $N$, with adjacency matrix ${\bf A}$ and degree $d$ per node, we define the higher-order `complementary' graph e-spline wavelet transform (HCGESWT) via the set of analysis filters:
\begin{equation}\label{eq:lp3} {\bf H}_{LP_{\vec{\alpha}},an}\stackrel{(*)}{=}{\bf C}\bar{{\bf H}}_{LP_{\vec{\alpha}}}={\bf C}\prod_{n=1}^T\frac{1}{2^k}\left(\beta_n {\bf I}_N+\frac{{\bf A}}{d}\right)^k\end{equation}
\begin{equation}{\bf H}_{HP_{\vec{\alpha}},an}=\prod_{n=1}^T\frac{1}{2^k}\left(\beta_n{\bf I}_N-\frac{{\bf A}}{d}\right)^k\end{equation}
and the set of synthesis filters:
\begin{equation}{\bf H}_{LP_{\vec{\alpha}},syn}=c_1{\bf H}_{HP_{\vec{\alpha}},an} \circ {\bf \mathit{I}}_{HP}\end{equation}
\begin{equation}{\bf H}_{HP_{\vec{\alpha}},syn}=c_2{\bf H}_{LP_{\vec{\alpha}},an} \circ {\bf \mathit{I}}_{LP}\end{equation}
where ${\bf H}_{LP_{\vec{\alpha}},an}$ is the solution to the system from Eq. (\ref{eq:fb}) for $\vec{\alpha}$ under specified constraints, with coefficient matrix ${\bf C}={\bf H}_{LP_{\vec{\alpha}},an}{\bf \bar{H}}_{LP_{\vec{\alpha}}}^{-1}$ in $(*)$ arising where applicable (see Cor. $3.3$ \cite{splinesw}). Here, $c_i,i\in\{1,2\}$ are normalization coefficients, ${\bf \mathit{I}}_{LP/HP}$ are circulant indicator matrices with first row of the form $\lbrack 1 \enskip-1\enskip 1\enskip -1\enskip...\rbrack$, and $\circ$ denotes the Hadamard product. \end{thm}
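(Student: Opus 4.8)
The plan is to recognize Thm.~\ref{esp3} as a finite, circulant instance of a classical two-channel perfect-reconstruction (PR) filterbank designed by spectral factorization, and to prove that the stated analysis/synthesis filters indeed achieve PR while carrying the prescribed reproduction properties and compact support. First I would pass to the spectral (representer-polynomial) domain: since $G$ is circulant, every filter is a circulant matrix simultaneously diagonalized by the DFT and fully described by its Laurent polynomial $H(z)=\sum_i h_i z^i$ evaluated on the $N$-th roots of unity, with matrix products corresponding to polynomial products taken mod $(z^N-1)$. Crucially, the alternating indicator matrices ${\bf I}_{LP/HP}$ with first row $\lbrack 1\ -1\ 1\ -1\ \dots\rbrack$ implement, via the Hadamard product $\circ$, the spectral modulation $z\mapsto -z$ that encodes the aliasing produced by downsampling by $2$ with respect to $s=1\in S$.

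Next I would invoke the standard PR conditions for a two-channel, decimate-by-two bank: alias cancellation $H_{LP}(-z)\tilde H_{LP}(z)+H_{HP}(-z)\tilde H_{HP}(z)=0$ and distortion elimination $H_{LP}(z)\tilde H_{LP}(z)+H_{HP}(z)\tilde H_{HP}(z)=2$. The synthesis filters in the theorem are, up to the normalization constants $c_i$ and the sign modulation supplied by ${\bf I}_{LP/HP}$, the cross-modulated analysis filters $\tilde H_{LP_{\vec\alpha}}(z)=H_{HP_{\vec\alpha}}(-z)$ and $\tilde H_{HP_{\vec\alpha}}(z)=-H_{LP_{\vec\alpha}}(-z)$; I would verify that this biorthogonal choice satisfies alias cancellation identically, so that PR collapses onto the single product-filter (half-band) requirement. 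Writing $P(z)=H_{LP_{\vec\alpha}}(z)\tilde H_{LP_{\vec\alpha}}(z)$, the distortion condition becomes exactly $P(z)+P(-z)=2$, i.e. Eq.~(\ref{eq:fb}).

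The heart of the argument is then to solve this half-band equation while honoring the mandated e-spline factors. Since $\tilde l_{\alpha_n}(z)$ has roots $e^{\pm i\alpha_n}$ (Lemma~\ref{lem2}), the synthesis low-pass $H_{HP_{\vec\alpha}}(-z)$ already carries the factor $(z+2\cos(\alpha_n)+z^{-1})^k$ per $n$; imposing the matching factor on $H_{LP_{\vec\alpha}}(z)=\prod_{n=1}^T(z+2\cos(\alpha_n)+z^{-1})^k R(z)$ forces $P(z)$ to contain $\prod_n(z+2\cos(\alpha_n)+z^{-1})^{2k}$, and reduces $P(z)+P(-z)=2$ to a finite linear system for the unknown symmetric Laurent polynomial $R(z)$. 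I would certify solvability by the classical half-band spectral-factorization argument, with the generalized Strang--Fix conditions (\cite{esplinewav}, Thm.~$1$) guaranteeing that the resulting $H_{LP_{\vec\alpha}}(z)$ reproduces the targeted complex exponential polynomials, while $H_{HP_{\vec\alpha}}(z)$ annihilates them by Thm.~\ref{esp2}. Because the solution is a finite-degree Laurent polynomial, each filter is a banded circulant matrix, establishing the claimed compact support; the correction filter ${\bf C}={\bf H}_{LP_{\vec\alpha},an}\bar{\bf H}_{LP_{\vec\alpha}}^{-1}$ is itself circulant and well-defined exactly when the Thm.~\ref{esp2} low-pass $\bar{\bf H}_{LP_{\vec\alpha}}$ is invertible, which is the ``where applicable'' caveat.

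The main obstacle I anticipate lies in this last factorization step, and it is precisely where the non-bipartite hypothesis bites. For bipartite circulants the spectral-folding symmetry pairs each normalized adjacency eigenvalue $\gamma$ with $-\gamma$, which is exactly what lets the Thm.~\ref{esp2} construction close without a correction; in the non-bipartite case this pairing fails, so one cannot read off $H_{LP_{\vec\alpha}}(z)$ directly and must instead perform an explicit spectral factorization. The delicate points are $(i)$ proving that a real, symmetric, finite-degree $R(z)$ solving $P(z)+P(-z)=2$ exists once the fixed e-spline factors are prescribed, working over the finite cyclic group (polynomials mod $(z^N-1)$) rather than over $\mathbb{Z}$, and $(ii)$ ensuring the non-vanishing of the eigenvalues of $\bar{\bf H}_{LP_{\vec\alpha}}$ so that ${\bf C}$ genuinely exists; controlling the degree of $R(z)$ to keep the support compact is the accompanying bookkeeping.
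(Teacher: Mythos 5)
Your proposal takes essentially the same route as the paper: the construction there (with the full proof deferred to the companion work \cite{splinesw}, Cor.~$3.3$) is precisely the biorthogonal two-channel design you describe --- synthesis filters obtained by cross-modulation through the alternating indicator matrices, perfect reconstruction collapsed onto the half-band condition of Eq.~(\ref{eq:fb}), e-spline factors $\prod_{n}(z+2\cos(\alpha_n)+z^{-1})^k$ imposed on the analysis low-pass for Strang--Fix reproduction, and existence governed by B\'{e}zout's theorem (no opposing or zero roots of the high-pass representer polynomial). Your identification of the delicate points --- solvability of the factorization for $R(z)$ and invertibility of $\bar{{\bf H}}_{LP_{\vec{\alpha}}}$ so that ${\bf C}$ exists --- coincides exactly with the existence caveats the paper states after the theorem.
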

The existence of a suitable low-pass filter ${\bf H}_{LP_{\vec{\alpha}},an}$ to complete the above filterbank is conditional upon satisfying B\'{e}zout's Theorem, i.e. the representer polynomial $H_{HP_{\vec{\alpha}},an}(z)$ of ${\bf H}_{HP_{\vec{\alpha}},an}$ must yield no opposing or zero roots for given $\vec{\alpha}$ (\cite{splinesw}, \cite{esplinewav}). At $\vec{\alpha}={\bf 0}$, this gives rise to a regular graph spline wavelet filterbank. We note that while this design is also applicable to bipartite circulant graphs, it is less relevant, as the transform given in Thm. \ref{esp2} already provides the desired reproduction properties in that case. As an interesting aside, given a bipartite circulant graph, the special case from Cor. \ref{cor1} can be further extended to provide scenarios which violate B\'{e}zout's Thm. and thus the existence of a complementary filterbank construction when downsampling is conducted with respect to $s=1\in S$; specifically, when $\exists \alpha_i, \alpha_j$ in $\vec{\alpha}$ such that $\tilde{d}_{\alpha_i}=-\tilde{d}_{\alpha_j}$ for $\tilde{d}_{\alpha}$, the representer polynomial of the graph Laplacian product ${\bf H}_{HP_{\alpha_i}}{\bf H}_{HP_{\alpha_j}}$ contains opposing roots due to $H_{HP_{\alpha_j}}(z)=-H_{HP_{\alpha_i}}(-z)$.

\section{Sampling on Circulant Graphs}
The process of sampling a continuous time signal $x(t)$ in the Euclidean domain traditionally comprises filtering with a given $h(t)$, followed by a (uniform) sampling step, which creates the samples $y_n=(x*h)(t)|_{t=nT}$ at sampling rate $f_s=\frac{1}{T}$ for period $T$ (\cite{vet}, see Fig. \ref{fig:tss}). At its core, sampling theory provides a bridge between continuous-time  and discrete-time signals by seeking to identify ideal methods as well as conditions for the perfect recovery of $x(t)$ from the given $y_n$; this further extends to identifying distinct classes of $x(t)$ and suitable filters $h(t)$ which guarantee perfect recovery.\\
\begin{figure}[htb]\centering
 \centerline{\includegraphics[scale=0.4]{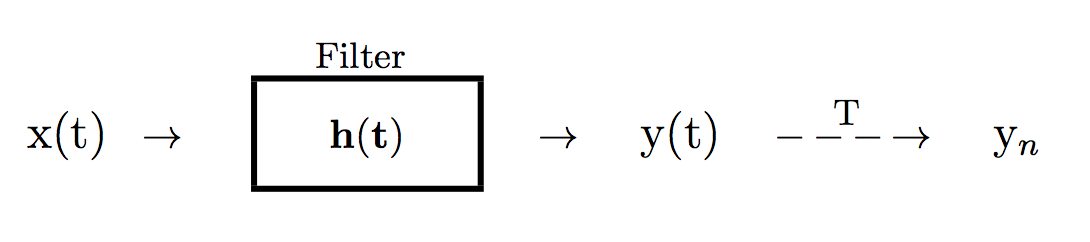}}
\caption{Traditional Sampling Scheme.}
\label{fig:tss}
\end{figure}
In a broader sense, sampling in discrete-time can be understood as a dimensionality reduction, which is followed by a dimensionality increase (or interpolation) to recover the original signal \cite{fourier2}. In order to formulate a sampling theory in the graph setting, additional questions need to be addressed, in particular, on what kind of graph structure the dimensionality-reduced (sampled) signal is defined and how it relates to the original graph, which challenges the classical problem and invites a more sophisticated take on sampling. Sampling theory on graphs can therefore be described as the study of methods and conditions which facilitate the perfect recovery of a graph signal ${\bf x}\in\mathbb{R}^N$ on the vertices of a graph $G$, with $|V|=N$, from a dimensionality-reduced, possibly graph-filtered signal ${\bf y}\in\mathbb{R}^M$, extending to the identification of a coarsened graph $\tilde{G}$, with $|\tilde{V}|=M$ and $M<N$, on which ${\bf y}$ is defined. Further intriguing extensions may involve the accompanying recovery of $G$ from $\tilde{G}$, however, in this work we restrict our focus on the former.\\
In the following analysis, we tie on the established graph spline wavelet theory to firstly describe wavelet-sparse signals, as an extension of the class of sparse signals, and, secondly, to formulate a comprehensive sparse signal sampling and graph coarsening framework.

\subsection{Sparsity and Sampling}
Equipped with novel families of sparsifying graph wavelet transforms, we proceed to explore sparse representations on circulant graphs by a priori defining smooth, or \textit{wavelet-sparse} graph signals. Let ${\bf x}_W\in\mathbb{C}^N$ be a signal defined on a circulant graph $G$ and ${\bf W}_j\in\mathbb{R}^{N/2^j\times N/2^j}$ represent a general Graph Wavelet Transform (GWT) of the form 
\[{\bf W}_j=\begin{bmatrix} {\bf \Psi}_{j\downarrow 2}{\bf H}_{LP_j}\\{\bf \Phi}_{j\downarrow 2}{\bf H}_{HP_j}\end{bmatrix},\]
composed of low-and high-pass filters ${\bf H}_{LP_j}, {\bf H}_{HP_j}\in\mathbb{R}^{N/2^{j}\times N/2^j}$, at level $j$, where the binary downsampling matrices ${\bf \Psi}_{j\downarrow 2}, {\bf \Phi}_{j\downarrow 2}\in\mathbb{R}^{N/2^{j+1}\times N/2^j}$ sample complementary sets of nodes in the standard alternating pattern with respect to $s=1\in S$; here, even-numbered nodes are retained in the low-pass branch and subsequently redefined on a suitably coarsened graph.. The multiresolution representation of ${\bf x}_W$, following iteration on the low-pass branch, then yields
\[\tilde{{\bf x}}={\bf W}{\bf x}_W=\begin{bmatrix}{\bf W}_j &\\ & {\bf I}_{\frac{N(2^j-1)}{2^j}}\end{bmatrix}\dots\begin{bmatrix}{\bf W}_1 &\\ & {\bf I}_{\frac{N}{2}}\end{bmatrix} {\bf W}_0 {\bf x}_W,\]
where ${\bf W}$ is the multilevel graph wavelet transform matrix. In order to redefine $\tilde{{\bf x}}$, whose individual partitions reside on a collective of coarsened graphs, with respect to the original $G$, we introduce the permutation matrix ${\bf P}$, so that for an appropriate relabelling ${\bf x}={\bf P}\tilde{{\bf x}}$ resides on $G$. 
\\
\\
Hence, we define the class $W$ of wavelet-$K$-sparse graph signals, with ${\bf x}_W\in W$ of dimension $N$, through the $K$-sparse multiresolution representation ${\bf x}\in\mathbb{C}^N$, $||{\bf x}||_0=K$, via a suitable GWT. In particular, given smooth graph signal ${\bf x}_W$, we can describe, and hence tailor, the sparsity $K$ of ${\bf x}$ as a function of decomposition level $j$ as well as of the bandwidth $B_j$ of the graph filter matrix at each $j$.\\
We present results assessing the number of non-zero entries of ${\bf x}\in\mathbb{C}^N$, for ${\bf x}_W$ in the class of polynomials, which are generalizable to complex exponential polynomials on circulant graphs. Here, we conduct node reconnection such that the coarse graphs retain their original generating set after downsampling at each level, thus maintaining constant bandwidth and filter support, and omit generalizations to other graph coarsening schemes, such as Kron-reduction, for simplicity.
\begin{cor} \label{cor3}Consider an undirected, circulant graph $G$ of dimension $N$ and bandwidth $\tilde{B}$, and let ${\bf x}$ be the multiresolution decomposition of graph signal ${\bf x}_W$, which is a $1$-piece polynomial of maximum degree $D\leq 2k-1$, on $G$ via the $j$-level GWT matrix ${\bf W}$. \\
$(i)$ Let ${\bf W}$ be the HGSWT of order $2k$, where the corresponding  low-and high-pass graph filter matrices each have bandwidth $B=k\tilde{B}$, and assume that $B$ is sufficiently small such that $\sum_{n=0}^{l}\frac{B}{2^n}\leq \frac{N}{2^{l+1}}$ at each level $l\leq j-1$. The resulting ${\bf x}={\bf P}{\bf W}{\bf x}_W$ is $K$-sparse, where $K=\frac{N}{2^j}+B (2( j-1)+2^{1-j})$, when $B=2^{j-1} r,\enskip r\in\mathbb{Z}^{+}$.\\
$(ii)$ Let ${\bf W}$ be the HCGSWT  of order $2k$, where the corresponding  low-and high-pass graph filter matrices have bandwidth $T$ and $B=k\tilde{B}$ respectively, such that $B+\sum_{n=1}^{l}\frac{T}{2^n}\leq \frac{N}{2^{l+1}}$ at each level $l\leq j-1$ The resulting ${\bf x}$ is $K$-sparse, where $K=\frac{N}{2^j}+B j+T (j+2^{1-j}-2)$, when $T=2^{j-1} r,\enskip r\in\mathbb{Z}^{+}$.\\
$(iii)$ Let ${\bf W}$ be the HGSWT at $j=0$, with the alternative `minimum' downsampling pattern, which retains only one low-pass component. Then ${\bf x}$ is $K$-sparse with $K=2B$.\end{cor}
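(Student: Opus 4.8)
The plan is to show that every non-zero entry of the decomposition ${\bf x}={\bf P}{\bf W}{\bf x}_W$ arises from a \emph{border effect} induced by the circulant wrap-around, and then to count these entries recursively across the $j$ levels. The two structural facts I would invoke are: (a) the high-pass annihilation property of Theorem \ref{esp1} (resp.\ Theorem \ref{esp3} for part $(ii)$), by which the order-$2k$ filter, having $2k$ vanishing moments, annihilates any polynomial of degree $D\leq 2k-1$ in the \emph{interior}; and (b) the polynomial-reproduction property of the low-pass filter, valid on bipartite circulants for part $(i)$ and holding by construction of the complementary low-pass filter for part $(ii)$, which forces the low-pass output to be again a polynomial in the interior. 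Fact (a) makes the high-pass coefficients vanish away from the single discontinuity of the sampled polynomial at the wrap-around between nodes $N-1$ and $0$; fact (b) guarantees the down-sampled low-pass branch is again a polynomial on the coarsened graph, which retains the generating set and hence the bandwidth $\tilde B$ and filter support $B=k\tilde B$, so that the argument iterates.

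First I would establish the base case. Writing the high-pass output at node $i$ as $y_i=\sum_{m=-B}^{B}h_{|m|}\,p((i+m)_N)$, the vanishing-moment identity gives $y_i=0$ whenever the window $[i-B,i+B]$ does not wrap, i.e.\ for $B\leq i\leq N-1-B$; only the $2B$ nodes $\{0,\dots,B-1\}\cup\{N-B,\dots,N-1\}$ straddle the wrap and yield non-zero entries. This already settles part $(iii)$: at $j=0$ with the minimum pattern one retains a single low-pass coefficient and the high-pass elsewhere, and placing the retained low-pass node inside the $2B$-node border region (permissible since Theorem \ref{esp1} allows any pattern with a single low-pass node) trades one border high-pass entry for one low-pass entry, for a total of $K=2B$.

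Next I would set up the recursion. Let $w_l$ denote the half-width of the non-polynomial region flanking the wrap-around in the signal fed to level $l$, with $w_0=0$. The high-pass at level $l$ is non-zero exactly where the radius-$B$ window meets this region, a set of total width $2(B+w_l)$; after down-sampling by two it contributes $H_l=B+w_l$ retained non-zeros. The low-pass output is non-polynomial on a set of width $2(B+w_l)$ (resp.\ $2(T+w_l)$ in part $(ii)$, where the complementary low-pass has bandwidth $T$), and down-sampling halves its extent, giving $w_{l+1}=(w_l+B)/2$ in part $(i)$ and $w_{l+1}=(w_l+T)/2$ in part $(ii)$. Solving with $w_0=0$ yields $w_l=B(1-2^{-l})$, hence $H_l=B(2-2^{-l})$ for part $(i)$, and $w_l=T(1-2^{-l})$, hence $H_l=B+T(1-2^{-l})$ for part $(ii)$. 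Summing the geometric series,
\[
\sum_{l=0}^{j-1}H_l=B\bigl(2(j-1)+2^{1-j}\bigr)\quad\text{and}\quad \sum_{l=0}^{j-1}H_l=Bj+T\bigl(j+2^{1-j}-2\bigr),
\]
respectively, and adding the $N/2^{j}$ non-zero coefficients of the coarsest (generically nowhere-vanishing) low-pass polynomial gives the two claimed values of $K$.

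The main obstacle I anticipate is the exact integer bookkeeping of the border widths under down-sampling, rather than any conceptual difficulty. I must verify that retaining alternate nodes splits each band of total width $2(B+w_l)$ into exactly $B+w_l$ samples with no off-by-one error, which is precisely what the divisibility hypotheses $B=2^{j-1}r$ (part $(i)$) and $T=2^{j-1}r$ (part $(ii)$) guarantee, since they render every $w_l=B-2^{j-1-l}r$ (resp.\ $T-2^{j-1-l}r$) an integer for $l\leq j-1$. I would also check that the two border bands never merge---so that a genuine annihilating interior survives at every level---which is exactly the stated smallness condition $\sum_{n=0}^{l}B/2^{n}\leq N/2^{l+1}$ (resp.\ $B+\sum_{n=1}^{l}T/2^{n}\leq N/2^{l+1}$), whose left-hand side equals the per-side width $B+w_l$. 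Finally I would confirm that the reproduction of fact (b) indeed persists at every coarsened scale, which for part $(i)$ restricts the setting to bipartite circulants and for part $(ii)$ follows from the B\'ezout/Strang--Fix construction underlying Theorem \ref{esp3}.
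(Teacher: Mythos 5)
Your proposal is correct and follows essentially the same route as the paper's proof: counting the non-zero high-pass coefficients per level as a growing border effect (your recursion $w_{l+1}=(w_l+B)/2$ with $H_l=B+w_l$ reproduces exactly the paper's series $B,\ B+\tfrac{B}{2},\ B+\tfrac{B/2+B}{2},\dots$, and likewise with $T$ for part $(ii)$), summing the geometric series, adding the $N/2^j$ low-pass coefficients, and handling part $(iii)$ by placing the single retained low-pass node inside the $2B$-node border region. Your write-up is somewhat more explicit than the paper's about where the divisibility hypothesis $B=2^{j-1}r$ (integrality of the $w_l$, versus the paper's rounding corrections $\pm s_l$) and the non-merging condition $B+w_l\leq N/2^{l+1}$ enter, but the argument is the same.
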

\noindent \textit{Proof.} See Appendix $A2$.\\
\\
When $B\in\mathbb{Z}^{+}$, the results of $(i)$ $\&$ $(ii)$ in Cor. \ref{cor3} apply up to a small correction term, which increases with the number of levels $j$. For the multiresolution decomposition of periodic (complex exponential) graph signals with parameter $\alpha=\frac{2\pi k}{N},k\in\mathbb{N}$, we have the maximum sparsity of $K=\frac{N}{2^j}$ at $j$ levels; selectively, for a suitable GWT that retains invertibility under an alternative downsampling pattern, up to $K=1$ can be achieved, following $(iii)$.

\subsection{The Graph FRI-framework}
The traditional FRI-framework is built on the central result that certain classes of non-bandlimited signals with a \textit{finite rate of innovation} can be sampled and perfectly reconstructed using kernels of compact support, which satisfy Strang-Fix conditions (\cite{vetorig}, \cite{vet}); in the discrete domain, this prominently entails that a $K$-sparse signal vector ${\bf x}\in\mathbb{R}^N$ can be perfectly reconstructed from $M\geq 2K$ consecutive sample values $y_n$ of the measurement vector ${\bf y}={\bf F}{\bf x}$, where ${\bf F}\in\mathbb{C}^{N\times N}$ is the DFT matrix, of the form 
\begin{equation}
y_n=\frac{1}{\sqrt{N}}\sum_{k=0}^{K-1}x_{c_k}e^{-i2\pi c_k n/N}=\sum_{k=0}^{K-1}\alpha_k u_k^{n}
\end{equation}
with weights $x_{c_k}$  of ${\bf x}$ at positions ${c_k}$. Here, the locations $u_k=e^{-i2\pi c_k n/N}$ and amplitudes $\alpha_k=x_{c_k}/\sqrt{N}$ are successively recovered using a reconstruction algorithm known as Prony's method \cite{vet}. In particular, the filtering (or acquisition) of a sparse signal with ${\bf F}$ facilitates its exact reconstruction from a dimensionality-reduced version in the Fourier domain.\\
\\
The insight that the graph frequency-ordered GFT basis of an arbirtrary circulant graph $G$ can be expressed as the DFT-matrix subject to a graph-dependent permutation of columns, motivates a direct extension of sparse sampling to the graph-domain, and in the following, we proceed to formulate the FRI-framework for signals on circulant graphs, which we term the Graph FRI-framework (GFRI):  
\begin{thm} \label{samp1} (Graph-FRI) Define  the permuted GFT basis ${\bf U}$ of undirected circulant graph $G$ such that ${\bf U}^H$ is the DFT-matrix. We can sample and perfectly reconstruct a (wavelet-)$K$-sparse graph signal (with multiresolution) ${\bf x}\in\mathbb{C}^N$, on the vertices of circulant $G$ using the dimensionality-reduced GFT representation ${\bf y}={\bf U}^H_M{\bf x}$, ${\bf y}\in\mathbb{C}^M$, where ${\bf U}^H_M$ are the first $M$ rows of ${\bf U}^H$, as long as $M\geq 2K$.\end{thm}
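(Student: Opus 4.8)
The plan is to reduce the graph sampling problem to the classical discrete FRI setting, exploiting the hypothesis that the (permuted) GFT basis satisfies ${\bf U}^H={\bf F}$, the ordinary DFT matrix. The first step is to note that this particular choice of permutation is exactly what guarantees that the first $M$ rows ${\bf U}^H_M$ select $M$ \emph{consecutive} DFT frequencies $n=0,1,\dots,M-1$, so that ${\bf y}={\bf U}^H_M{\bf x}={\bf F}_M{\bf x}$ is a block of consecutive Fourier coefficients. Writing ${\bf x}$ through its $K$ nonzero entries $x_{c_k}$ at positions $c_k$, each coefficient becomes
\[ y_n=\frac{1}{\sqrt{N}}\sum_{k=0}^{K-1}x_{c_k}e^{-i2\pi c_k n/N}=\sum_{k=0}^{K-1}\alpha_k u_k^{n}, \]
a weighted sum of $K$ complex exponentials with amplitudes $\alpha_k=x_{c_k}/\sqrt{N}$ and nodes $u_k=e^{-i2\pi c_k/N}$, which are pairwise distinct since the integer locations $c_k\in\{0,\dots,N-1\}$ are distinct. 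This is precisely a signal with rate of innovation $2K$.

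The second step is to apply Prony's method to $\{y_n\}_{n=0}^{M-1}$. I would introduce the annihilating filter ${\bf h}$ of length $K+1$ whose $z$-transform $H(z)=\prod_{k=0}^{K-1}(1-u_k z^{-1})$ vanishes at every node $u_k$; by construction the convolution of ${\bf h}$ with $\{y_n\}$ is identically zero. The filter taps are recovered (up to normalization) as the solution of a $K\times K$ Toeplitz (Yule--Walker) system assembled from the consecutive samples, and it is exactly the availability of $M\geq 2K$ consecutive values that makes this system solvable with a unique solution. Rooting $H(z)$ then yields the $u_k$, hence the support $\{c_k\}$, and the amplitudes $\alpha_k$ --- and thereby the values $x_{c_k}$ --- follow from a $K\times K$ Vandermonde system in the (distinct) $u_k$, which is invertible. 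This reconstructs ${\bf x}$ exactly from $2K$ samples.

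For a wavelet-$K$-sparse signal ${\bf x}_W$, I would apply the preceding argument to its $K$-sparse multiresolution representation ${\bf x}={\bf P}{\bf W}{\bf x}_W$, whose sparsity level $K$ is controlled by the filter bandwidths and the number of levels as quantified in Cor. \ref{cor3}. Having recovered ${\bf x}$ exactly, the original signal is obtained by inverting the transform, ${\bf x}_W={\bf W}^{-1}{\bf P}^{-1}{\bf x}$, where invertibility of the multilevel GWT ${\bf W}$ is guaranteed under the conditions established for the (e-)spline wavelet filterbanks of Thms. \ref{esp1}--\ref{esp3}.

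The only genuinely graph-specific ingredient is the identification of the circulant GFT with a permuted DFT, which reduces the statement to the classical result; the main point requiring care is the well-posedness of the Prony recovery, namely that $M\geq 2K$ consecutive samples force the annihilating-filter (Toeplitz) system to have full rank with a unique solution and the subsequent Vandermonde system to be invertible. Both hold automatically here because the nodes $u_k$ are distinct and the samples are consecutive --- the latter being precisely why the theorem fixes ${\bf U}^H$ to be the DFT in standard order rather than in graph-frequency order.
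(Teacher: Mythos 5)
Your proposal is correct and follows essentially the same route as the paper's own proof: both reduce the problem to classical discrete FRI by observing that ${\bf y}$ consists of $M\geq 2K$ consecutive DFT samples, each a sum of $K$ complex exponentials with distinct nodes $u_k$, and then recover support and amplitudes via Prony's annihilating-filter (Toeplitz) system followed by a linear (Vandermonde) solve. Your added remarks on the wavelet-sparse case (recovering ${\bf x}$ first, then inverting the multilevel GWT) and on why distinctness of the $u_k$ guarantees well-posedness match what the paper establishes in Section 4.1 and cites from its Prony reference.
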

\noindent \textit{Proof.} See Appendix $A.3$. \\
\\
Similarly as in the traditional case, the proof of this theorem is based on the application of Prony's method. In reference to our previous sparsity analysis, we therefore require at least $K<\frac{N}{2}$ for a given $K$-sparse graph signal ${\bf x}$, since $M\geq 2K$, for $N=2^n, n\in\mathbb{Z}^{+}$ by initial assumption.\\
In particular, we further note that since all circulant graphs possess the same sampling basis ${\bf U}^H$, the reduced representation ${\bf y}$ does not directly reveal the underlying graph topology; nevertheless, if the graph is known a priori, the samples (or frequency coefficients) $y(\lambda_i^{\sigma})$ gain a unique spectral interpretation, where $\{\lambda_i^{\sigma}\}_{i=1}^M$ is the partial graph Laplacian spectrum as ordered by the DFT via (graph-based) permutation $\sigma$. In the next step, we thus seek to explicitly derive the graph structure corresponding to ${\bf y}$.
\subsubsection{Graph Coarsening for GFRI}
The problem of downsampling a signal on a graph $G=(V,E)$ along with graph coarsening, as the task of determining the reduced set of vertices and edges of the coarsened graph $\tilde{G}=(\tilde{V},\tilde{E})$, are inherent to GSP theory and represent one of the challenges that the complex data dependencies of graphs impose on traditional signal processing. A variety of approaches have been formulated \cite{mult}, ranging from  spectral graph partitioning, where the largest graph Laplacian eigenvector is used to determine a downsampling pattern, up to graph-specific operations such as for bipartite graphs (\cite{shu}, \cite{ortega2}), which naturally comprise a partitioning into two disjoint sets of nodes. Reconnection may be conducted to satisfy a range of properties, and is an accompanying problem in itself. In the context of multilevel graph wavelet analysis, the properties of most interest here are preservation of circularity and a sparse GWT representation, and as implied by our foregoing discussion, the latter is achieved when the bandwidth of the graph Laplacian is small, i.e. minimal reconnection is conducted. \\
\\ 
In our current set-up, we are interested in identifying the graph structure associated to the dimensionality reduced GFT-representation ${\bf y}$, yet, conversely to general graph coarsening approaches, we need to extract an appropriate downsampling pattern as well as a reconnection strategy from the information given by the spectral coefficients at hand, rather than impose a set of desired properties in the first instance.
The difficulty is posed by the fact that ${\bf y}$ resides in the graph spectral domain and does not directly give rise to a specific downsampling pattern in the vertex domain. 
\\
\\ 
In the traditional FRI-framework \cite{vet}, a given sparse signal can be sampled with a general exponential reproducing kernel $\varphi(t)$, not restricted to the complex exponentials of the DFT as previously shown, where the function $\varphi(t)$ in continuous-time and its shifted versions, is such that it can reproduce exponentials for a proper choice of coefficients $c_{m,n}$
\begin{equation}\label{eq:coeff}\sum_{n\in\mathbb{Z}} c_{m,n} \varphi(t-n)=e^{\alpha_m t}\quad\text{for}\quad \alpha_m\in\mathbb{C},m=0,...,P.\end{equation}
We note that the coefficients $c_{m,n}$ in Eq. (\ref{eq:coeff}) can be expressed as $c_{m,n}=c_{m,0}e^{\alpha_m n}$, where $c_{m,0}=\int_{-\infty}^{\infty}e^{\alpha_m x}\tilde{\varphi}(x) dx$ \cite{vet}. Notably, the functions $\varphi(t)$ and $\tilde{\varphi}(t)$ form a quasi-biorthonormal set, with biorthonormality as a special case (\cite{uri2}, \cite{quasi}).\\
Inspired by this notion of sampling a sparse signal in a multi-layered scheme, we extend the Graph FRI-framework by expressing the reduced GFT-basis ${\bf U}_M^H$ as the product between a fat coefficient matrix ${\bf C}$ and a row-reduced low-pass GWT filter, which can reproduce complex exponential graph signals as per Thms. \ref{esp2} and \ref{esp3}.\\
\\
We proceed to demonstrate the feasibility of this scheme by first proving the existence of such a matrix ${\bf C}$ and its relation to a row-reduced DFT-matrix on the basis of graph e-spline wavelet theory in the following:
\begin{lem} \label{lem11} Let ${\bf U}_M^H$ be the reduced GFT-basis of undirected circulant graph $G$, as defined in Thm. \ref{samp1}, and ${\bf E}_{\vec{\alpha}}\in\mathbb{R}^{N\times N}$ a low-pass graph filter matrix in the e-spline GWT family (see Thms. $3.2$, $3.3$), which can reproduce complex exponential graph signals with parameter $\vec{\alpha}=(\alpha_0,...,\alpha_{M-1})=\left(0,...,\frac{2\pi k}{N},...,\frac{2\pi (M-1)}{N}\right)$. We thus have ${\bf U}_M^H={\bf C}{\bf \Psi}_{\downarrow 2}{\bf E}_{\vec{\alpha}}$, where ${\bf \Psi}_{\downarrow 2}\in\mathbb{R}^{N/2\times N}$ is a binary sampling matrix which retains even-numbered nodes, and ${\bf C}\in\mathbb{C}^{M\times N/2}$ is a coefficient matrix. Further, ${\bf C}=\hat{{\bf C}}\tilde{{\bf U}}_M^H$, where $\hat{{\bf C}}\in\mathbb{C}^{M\times M}$ is diagonal and $\tilde{{\bf U}}^H$ is the DFT matrix of dimension $N/2$.\end{lem}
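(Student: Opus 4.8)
The plan is to read this factorization as the graph-domain version of the exponential-reproduction identity that underlies classical FRI sampling, with the low-pass e-spline filter ${\bf E}_{\vec\alpha}$ in the role of the reproducing kernel $\varphi$ and its decimated rows in the role of the shifts $\varphi(t-2n)$. First I would observe that, being a polynomial in the circulant adjacency matrix, ${\bf E}_{\vec\alpha}$ is itself circulant: each of its rows is a cyclic shift of a single impulse response $\varphi$, and the even-indexed rows retained by ${\bf \Psi}_{\downarrow 2}$ are, read as functions of the column index $t$, exactly the shifts-by-two $\varphi(t-2n)$ for $n=0,\dots,N/2-1$.

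Next I would invoke the exponential-reproduction property granted by Thms. \ref{esp2}--\ref{esp3}. Since ${\bf E}_{\vec\alpha}$ reproduces the complex exponential graph signals with parameters $\alpha_m=\frac{2\pi m}{N}$, there exist coefficients $c_{m,n}$ for which $\sum_{n=0}^{N/2-1}c_{m,n}\,\varphi(t-2n)=e^{i\alpha_m t}$ holds at every vertex $t$ and for each $m=0,\dots,M-1$. Written simultaneously over all $t$, this is precisely the asserted matrix identity ${\bf U}_M^H={\bf C}\,{\bf \Psi}_{\downarrow 2}{\bf E}_{\vec\alpha}$ with $C_{m,n}=c_{m,n}$: row $m$ of ${\bf U}_M^H$ is the sampled exponential $e^{i\alpha_m t}$ (up to the ambient DFT normalization), while the $n$-th column of ${\bf \Psi}_{\downarrow 2}{\bf E}_{\vec\alpha}$ carries the shift $\varphi(t-2n)$. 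The fact that the $\alpha_m$ are of the form $\frac{2\pi m}{N}$ is what makes this reproduction exact, without border effect, as already noted in the discussion surrounding Cor. \ref{cor2}.

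To extract the refined structure of ${\bf C}$, I would transport the coefficient relation $c_{m,n}=c_{m,0}\,e^{i\alpha_m n}$ of the quasi-biorthonormal FRI framework (the remark following Eq. (\ref{eq:coeff})) into the decimated setting, where the shift-by-two turns it into $c_{m,n}=c_{m,0}\,e^{i\alpha_m\,2n}$. The key identity is then $e^{i\alpha_m 2n}=e^{i2\pi m n/(N/2)}$, which is exactly the $(m,n)$ entry of the $N/2$-point DFT matrix $\tilde{{\bf U}}^H$; collecting the scalars $c_{m,0}$ into the diagonal matrix $\hat{{\bf C}}=\mathrm{diag}(c_{0,0},\dots,c_{M-1,0})$ then yields ${\bf C}=\hat{{\bf C}}\,\tilde{{\bf U}}_M^H$ with $\tilde{{\bf U}}_M^H$ the first $M$ rows of $\tilde{{\bf U}}^H$, as claimed. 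Each $c_{m,0}$ equals the reciprocal of the kernel symbol $\hat\varphi(\alpha_m)=\sum_u e^{-i\alpha_m u}\varphi(u)$ evaluated at $\alpha_m$, which is nonzero precisely because the Strang--Fix-type reproduction condition holds for the chosen $\vec\alpha$; hence $\hat{{\bf C}}$ is invertible.

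I expect the main obstacle to be the careful bookkeeping of the factor-of-two decimation inside the reproduction identity: one must confirm both that the even rows of the circulant filter genuinely realise the shifts $\varphi(t-2n)$ rather than $\varphi(t-n)$, and that the classical relation $c_{m,n}=c_{m,0}e^{i\alpha_m n}$ passes, under this downsampling, to $c_{m,n}=c_{m,0}e^{i\alpha_m 2n}$ --- it is exactly this doubling of the exponent that collapses the coefficient matrix onto the half-size DFT $\tilde{{\bf U}}^H$ rather than the full $N$-point one. A secondary point to verify against Section $3$ is that, after decimation, the reproduction of the periodic exponentials remains exact (no leakage through the polyphase components), which is what legitimises treating $\sum_u e^{-i\alpha_m u}\varphi(u)$ as the relevant nonvanishing symbol.
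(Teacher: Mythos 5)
Your skeleton is the right one---reading the even rows of the circulant filter as shifts $\varphi(t-2n)$ and collapsing geometric coefficients $c_{m,0}e^{i\alpha_m 2n}$ onto the half-size DFT---and your final identification ${\bf C}=\hat{{\bf C}}\tilde{{\bf U}}_M^H$ agrees with the paper. But there is a genuine gap at the central step: you assert that because ${\bf E}_{\vec\alpha}$ ``reproduces complex exponentials'' (Thms.~\ref{esp2}, \ref{esp3}), there exist coefficients with $\sum_n c_{m,n}\varphi(t-2n)=e^{i\alpha_m t}$. The reproduction property of those theorems concerns the filter as a whole; the content of the lemma is precisely that reproduction survives decimation by two. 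Writing $\varphi(t-2n)=\frac1N\sum_\omega\hat\varphi(\omega)e^{i\omega(t-2n)}$ over the $N$ graph frequencies, a combination of even shifts with geometric coefficients gives
\[
\sum_{n=0}^{N/2-1}e^{i2\alpha_m n}\varphi(t-2n)=\tfrac12\left(\hat\varphi(\alpha_m)e^{i\alpha_m t}+\hat\varphi(\alpha_m+\pi)e^{i(\alpha_m+\pi)t}\right),
\]
so the identity you need holds if and only if the symbol vanishes at the aliased frequency, $\hat\varphi(\alpha_m+\pi)=0$ (and then $c_{m,0}=2/\hat\varphi(\alpha_m)$, not $1/\hat\varphi(\alpha_m)$). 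You relegate exactly this issue (``no leakage through the polyphase components'') to a ``secondary point to verify''---it is in fact the primary point, and nothing in your argument establishes it.

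The paper closes this hole without any symbol computation by exploiting the biorthogonal filterbank structure: for an analysis/synthesis pair with $\tilde{{\bf W}}^T{\bf W}={\bf I}_N$ whose high-pass representer polynomials both have roots at $z=e^{\pm i\alpha}$, the exponential ${\bf x}$ is annihilated in the high-pass branch, so perfect reconstruction forces $({\bf \Psi}_{\downarrow 2}\tilde{{\bf H}}_{LP_{\alpha}})^T{\bf y}_{\downarrow 2}={\bf x}$ with ${\bf y}_{\downarrow 2}=c\,{\bf x}_{\downarrow 2}$, since ${\bf x}$ is an eigenvector of the circulant filter; swapping the roles of analysis and synthesis then yields the claimed identity for ${\bf E}_{\alpha}={\bf H}_{LP_{\alpha}}$, with coefficient vector $\hat c\,{\bf x}_{\downarrow 2}$---an eigenvalue times a row of the $N/2$-point DFT, which is exactly the $\hat{{\bf C}}\tilde{{\bf U}}_M^H$ structure, derived rather than transported from the classical relation $c_{m,n}=c_{m,0}e^{\alpha_m n}$. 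In effect, the aliasing zero you are missing is supplied by the filterbank design itself: $\tilde H_{LP_{\vec\alpha}}(z)=H_{HP_{\vec\alpha}}(-z)$ places roots at $-e^{\pm i\alpha}$, i.e.\ at frequency $\alpha+\pi$, and in the bipartite case of Thm.~\ref{esp2} it is the spectral-folding relation $\gamma(\alpha+\pi)=-\gamma(\alpha)$. To repair your proof you would either have to verify this vanishing directly for the filters of Thms.~\ref{esp2} and \ref{esp3}, or import the paper's perfect-reconstruction-plus-annihilation argument.
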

\begin{proof}
Consider the general complementary graph e-spline wavelet filterbank (Thm. \ref{esp3}) with respective analysis and synthesis matrices \[{\bf W}= \begin{bmatrix} 
 {\bf \Psi}_{\downarrow 2}{\bf H}_{LP_{\alpha}}\\
  {\bf \Phi}_{\downarrow 2}{\bf H}_{HP_{\alpha}}
  \end{bmatrix},\quad \tilde{{\bf W}}=\begin{bmatrix} {\bf \Psi}_{\downarrow 2}\tilde{{\bf H}}_{LP_{\alpha}}\\{\bf \Phi}_{\downarrow 2}\tilde{{\bf H}}_{HP_{\alpha}}\end{bmatrix}\] such that $\tilde{{\bf W}}^T{\bf W}={\bf I}_N$, where the high-pass representer polynomials at both branches possess the same number of vanishing moments, i.e. roots at $z=e^{\pm i\alpha}$ for some $\alpha\in\mathbb{R}$. Let ${\bf x}\in\mathbb{C}^N$ be a complex exponential graph signal of the form
   \[{\bf x}=\begin{bmatrix} e^{i\alpha 0}&
 e^{i\alpha 1}&
 e^{i\alpha 2}&
 \dots &
 e^{i\alpha (N-1)}\end{bmatrix}^T\]
 where $\alpha=-\frac{2\pi k}{N}$, i.e. ${\bf x}^T$ is the $(k+1)$-th row of the (unnormalized) DFT-matrix, and define ${\bf y}={\bf H}_{LP_{\alpha}}{\bf x}=c{\bf x}$ for $c\in\mathbb{R}$ (also an eigenvalue of ${\bf H}_{LP_{\alpha}}$), such that  \[{\bf \Psi}_{\downarrow 2}{\bf H}_{LP_{\alpha}}{\bf x}=c{\bf \Psi}_{\downarrow 2}{\bf x}= c\begin{bmatrix} e^{i\alpha 0}&
 e^{i\alpha 2}&
 e^{i\alpha 4}&
 \dots &
 e^{i\alpha (N-2)}\end{bmatrix}^T={\bf y}(0:2:N-2)={\bf y}_{\downarrow 2}\] which denotes a scalar multiple of the $(k+1)$-th row of the DFT of dimension $N/2$, since
 \[\begin{bmatrix} e^{i\alpha 0}&
 e^{i\alpha 2}&
 e^{i\alpha 4}&
 \dots &
 e^{i\alpha (N-2)}\end{bmatrix}^T=\begin{bmatrix} e^{i(2\alpha) 0}&
 e^{i(2\alpha) 1}&
 e^{i(2\alpha) 2}&
 \dots &
 e^{i(2\alpha) (N/2-1)}\end{bmatrix}^T\]
with $2\alpha=-\frac{2\pi k}{N/2}$. We obtain
 
 \[\begin{bmatrix} ({\bf \Psi}_{\downarrow 2}\tilde{{\bf H}}_{LP_{\alpha}})^T &({\bf \Phi}_{\downarrow 2}\tilde{{\bf H}}_{HP_{\alpha}})^T\end{bmatrix}
\begin{bmatrix} {\bf \Psi}_{\downarrow 2}{\bf H}_{LP_{\alpha}}\\
  {\bf \Phi}_{\downarrow 2}{\bf H}_{HP_{\alpha}}
 \end{bmatrix}
 {\bf x}=\begin{bmatrix} ({\bf \Psi}_{\downarrow 2}\tilde{{\bf H}}_{LP_{\alpha}})^T &({\bf \Phi}_{\downarrow 2}\tilde{{\bf H}}_{HP_{\alpha}})^T\end{bmatrix}\begin{bmatrix}
   {\bf y}_{\downarrow 2}\\
   {\bf 0}_{N/2}\end{bmatrix},\] but since $({\bf \Phi}_{\downarrow 2}\tilde{{\bf H}}_{HP_{\alpha}})^T {\bf 0}_{N/2}={\bf 0}_{N/2}$, neither ${\bf 0}_{N/2}$ nor $({\bf \Phi}_{\downarrow 2}\tilde{{\bf H}}_{HP_{\alpha}})^T $ contribute, and we can thus write
   
   \[\begin{bmatrix} ({\bf \Psi}_{\downarrow 2}\tilde{{\bf H}}_{LP_{\alpha}})^T &({\bf \Phi}_{\downarrow 2}\tilde{{\bf H}}_{HP_{\alpha}})^T\end{bmatrix}\begin{bmatrix}
   {\bf y}_{\downarrow 2}\\
   {\bf 0}_{N/2}\end{bmatrix}=\begin{bmatrix} ({\bf \Psi}_{\downarrow 2}\tilde{{\bf H}}_{LP_{\alpha}})^T\end{bmatrix}\begin{bmatrix}
   {\bf y}_{\downarrow 2}\end{bmatrix}={\bf x}\]
   i.e. linear combinations of the columns of $({\bf \Psi}_{\downarrow 2}\tilde{{\bf H}}_{LP_{\alpha}})^T$ reproduce ${\bf x}$. Rewriting the former, we obtain ${\bf y}_{\downarrow 2}^T{\bf \Psi}_{\downarrow 2}\tilde{{\bf H}}_{LP_{\alpha}}={\bf x}^T$, and reversing the sequence of ${\bf W}$ and $\tilde{{\bf W}}$, and letting ${\bf E}_{\alpha}={\bf H}_{LP_{\alpha}}$, we arrive at
   \[\begin{bmatrix}
   {\bf c}^T
   \end{bmatrix}{\bf \Psi}_{\downarrow 2}{\bf E}_{\alpha}={\bf x}^T\]
with ${\bf c}\in\mathbb{C}^{N/2}$ (${\bf c}=\hat{c} {\bf x}_{\downarrow 2}$ for eigenvalue $\hat{c}$ of $\tilde{{\bf H}}_{LP_{\alpha}}$), ${\bf \Psi}_{\downarrow 2}{\bf E}_{\alpha}\in\mathbb{R}^{N/2\times N}$, and ${\bf x}^T\in\mathbb{C}^N$. By generalizing the RHS to incorporate $M$ stacked complex exponential vectors ${\bf x}$ to form the transposed DFT-matrix $({\bf U}_M^H)^T$, we can similarly show
   \[
   {\bf C}
   {\bf \Psi}_{\downarrow 2}{\bf E}_{\vec{\alpha}}={\bf U}_M^H,\]
   with ${\bf C}=\hat{{\bf C}}\tilde{{\bf U}}_M^H$ and $\vec{\alpha}=(\alpha_1,...,\alpha_M)$. In particular, the matrix $\hat{{\bf C}}$ is diagonal, while $\tilde{{\bf U}}_M^H$ represents the first $M$ rows of the DFT-matrix of dimension $N/2$.\\
	In the case of a bipartite graph, using Thm. \ref{esp2}, we can proceed similarly, and obtain
	$\begin{bmatrix} ({\bf \Psi}_{\downarrow 2}\tilde{{\bf H}}_{LP_{\alpha}})^T\end{bmatrix}\begin{bmatrix}
   {\bf y}_{\downarrow 2}\end{bmatrix}={\bf x}$
	with ${\bf y}_{\downarrow 2}=c {\bf x}_{\downarrow 2}$ for some $c\in\mathbb{R}$, for synthesis low-pass filter $\tilde{{\bf H}}_{LP_{\alpha}}$, which reveals that it reproduces complex exponentials with parameter $\pm\alpha$, just as the analysis low-pass filter, despite not being of the same support. In particular, as we have shown in \cite{splinesw}, the inherent biorthogonality constraints of the wavelet transform impose that the representer polynomials of the bipartite \textit{HGESWT} contain opposing roots respectively for analysis and synthesis, i.e. we have $H_{LP_{\alpha}}(z)=-z^{-1} H_{HP_{\alpha}}(-z)$ and $\tilde{H}_{LP_{\alpha}}(z) =-z\tilde{H}_{HP_{\alpha}}(-z)$. Thus we may interchange the order of synthesis and analysis branch, and obtain
	$\begin{bmatrix} ({\bf \Psi}_{\downarrow 2}{\bf H}_{LP_{\alpha}})^T\end{bmatrix}\begin{bmatrix}
   \hat{c}{\bf x}_{\downarrow 2}\end{bmatrix}={\bf x}$, 
   confirming our previous result that the columns of the adjacency-matrix based, analysis low-pass filter $({\bf \Psi}_{\downarrow 2}{\bf H}_{LP_{\alpha}})^T$ reproduce ${\bf x}$ as a consequence of the generalized Strang-Fix conditions (\cite{esplinewav}, \cite{strang}).\end{proof}

\noindent In order to eventually identify a graph coarsening scheme within our sparse sampling framework, we begin by noting that a sensible downsampling pattern can already be extracted from the previous result, namely the row-reduced low-pass graph filter ${\bf \Psi}_{\downarrow 2}{\bf E}_{\vec{\alpha}}$ samples every other node in keeping with the standard circulant downsampling pattern with respect to $s=1\in S$. \\
A popular graph coarsening scheme known as Kron-reduction \cite{kron} employs Schur complementation based on the given node sampling pattern, where the graph-Laplacian matrix $\tilde{{\bf L}}$ of the coarsened graph is evaluated from the graph Laplacian matrix ${\bf L}$ of initial graph $G$ and set $V_{\alpha}=\{0:2:N-2\}$ of retained nodes:
 \[ \tilde{{\bf L}}={\bf L}(V_{\alpha},V_{\alpha})-{\bf L}(V_{\alpha},V_{\alpha}^{\complement}){\bf L}(V_{\alpha}^{\complement},V_{\alpha}^{\complement})^{-1}{\bf L}(V_{\alpha},V_{\alpha}^{\complement})^T.\]
In particular, it can be shown that for $V_{\alpha}$ as set above and symmetric circulant ${\bf L}$, the resulting coarsened graph will preserve these properties (\cite{kron}, \cite{Ekambaram3}); a drawback is however that it leads to denser graphs, i.e. for a banded circulant matrix, the resulting lower-dimensional Schur complement will be of equal or larger bandwidth which proves destructive in our sparsity-driven filterbank construction and analysis.\\
\\
Alternatively, we propose to conduct sampling in the spectral domain of the graph at hand by leveraging the fact that any graph Laplacian eigenvector ${\bf u}_k$ of $G$ can be interpreted as a graph signal on its vertices with sample value $u_k(i)$ at node $i$, suggesting that an extracted downsampling pattern may be equivalently applied to the eigenbasis of $G$. 
	\begin{lem} \label{lem22} Consider an undirected circulant graph G with generating set $S$, and adjacency matrix ${\bf A}=\frac{1}{N}{\bf U}{\bf \Lambda}{\bf U}^H\in\mathbb{R}^{N\times N}$ with bandwidth $B$, where $\frac{1}{\sqrt{N}}{\bf U}^H$ is the DFT matrix. We downsample by $2$ via the binary matrix 	${\bf \Psi}_{\downarrow 2}\in\mathbb{R}^{N/2\times N}$ on the first $N/2$ rows in ${\bf U}^H$ and eigenvalues ${\bf \Lambda}$, such that $\tilde{{\bf U}}^H={\bf U}^H_{0:N/2-1}{\bf \Psi}^T_{\downarrow 2}$, and $\tilde{{\bf \Lambda}}={\bf \Psi}_{\downarrow 2}{\bf \Lambda}{\bf \Psi}_{\downarrow 2}^T$. The resulting adjacency matrix $\tilde{{\bf A}}=\frac{1}{N/2}\tilde{{\bf U}}\tilde{{\bf \Lambda}}\tilde{{\bf U}}^H\in\mathbb{R}^{N/2\times N\ 2}$ is circulant with the same generating set $S$ as $G$, provided $2B<N/2$.\end{lem}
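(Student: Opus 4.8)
The plan is to prove the statement in three stages: first identify the reduced eigenvector matrix $\tilde{{\bf U}}^H$ as an (unnormalized) DFT matrix of dimension $N/2$; then conclude that $\tilde{{\bf A}}$ is automatically circulant because it is diagonalized by that DFT; and finally exploit the condition $2B<N/2$ to read off the first row of $\tilde{{\bf A}}$ and show that it reproduces the generating set $S$ and the edge weights of $G$.

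First I would compute $\tilde{{\bf U}}^H$ entrywise. Writing $({\bf U}^H)_{k,n}=e^{-i2\pi kn/N}$, the operation $\tilde{{\bf U}}^H={\bf U}^H_{0:N/2-1}{\bf \Psi}^T_{\downarrow 2}$ retains the rows $k=0,\dots,N/2-1$ and the even columns $n=2m$ with $m=0,\dots,N/2-1$, so that $(\tilde{{\bf U}}^H)_{k,m}=e^{-i2\pi k(2m)/N}=e^{-i2\pi km/(N/2)}$. This is precisely the unnormalized $(N/2)$-point DFT matrix, the key identity being $e^{-i2\pi k(2m)/N}=e^{-i2\pi km/(N/2)}$, valid because the row-index range is halved while the column spacing is doubled. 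Next, since $\tilde{{\bf \Lambda}}={\bf \Psi}_{\downarrow 2}{\bf \Lambda}{\bf \Psi}_{\downarrow 2}^T$ merely extracts the even-indexed diagonal entries of the diagonal ${\bf \Lambda}$, it is diagonal with $\tilde{\lambda}_k=\lambda_{2k}$. Hence $\tilde{{\bf A}}=\frac{1}{N/2}\tilde{{\bf U}}\tilde{{\bf \Lambda}}\tilde{{\bf U}}^H$ is a matrix diagonalized by the $(N/2)$-point DFT with a diagonal spectral matrix; by the standard characterization of circulant matrices as exactly those diagonalized by the DFT, $\tilde{{\bf A}}$ is circulant of dimension $N/2$. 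Its realness and symmetry follow from the spectral symmetry $\lambda_j=\lambda_{N-j}$ of the original, which gives $\tilde{\lambda}_{N/2-k}=\lambda_{N-2k}=\lambda_{2k}=\tilde{\lambda}_k$.

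The hard part, and the crux of the argument, is showing that the generating set is preserved. I would relate the retained spectrum back to the representer polynomial of ${\bf A}$. Writing the symmetric representer polynomial as $a(z)=a_0+\sum_{m=1}^{B}a_m(z^m+z^{-m})$, the eigenvalues of ${\bf A}$ are $\lambda_k=a(e^{i2\pi k/N})$, so the retained ones are $\tilde{\lambda}_k=\lambda_{2k}=a_0+\sum_{m=1}^{B}a_m\bigl(e^{i2\pi mk/(N/2)}+e^{-i2\pi mk/(N/2)}\bigr)$. Reading off the first row $\tilde{a}$ of $\tilde{{\bf A}}$ by the inverse $(N/2)$-point DFT, the exponents $+m$ populate positions $1,\dots,B$ and the exponents $-m$ populate the wrapped positions $(N/2)-1,\dots,(N/2)-B$. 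Together with the DC position $0$, these two blocks stay mutually disjoint precisely when $B<(N/2)-B$, i.e. when $2B<N/2$.

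Under this condition no two distinct frequency contributions alias onto the same position, so $\tilde{a}_0=a_0$, $\tilde{a}_m=\tilde{a}_{(N/2)-m}=a_m$ for $1\le m\le B$, and $\tilde{a}_j=0$ otherwise; that is, the reduced first row is exactly the truncation of the original. Consequently the nonzero off-diagonal entries of $\tilde{{\bf A}}$ sit exactly at distances $s_k\in S$ with the original weights, proving that $\tilde{{\bf A}}$ is circulant with the same generating set $S$. I expect the only delicate point to be this bookkeeping of which exponents wrap around under the reduction from period $N$ to period $N/2$, together with the verification that $2B<N/2$ is precisely the threshold ruling out any collision; everything else reduces to the standard DFT diagonalization of circulants.
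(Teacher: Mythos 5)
Your proposal is correct and follows essentially the same route as the paper's proof: both rest on the identity that downsampling the first $N/2$ rows of the $N$-point DFT at even columns yields the $(N/2)$-point DFT, the cosine/representer-polynomial identity giving $\tilde{\lambda}_k=\lambda_{2k}$ as the spectrum of the half-size circulant with the same first row, and the condition $2B<N/2$ to prevent the positive and negative exponents from colliding. The only (cosmetic) difference is direction — the paper constructs the candidate circulant with generating set $S$ and matches its eigendecomposition to $(\tilde{{\bf U}},\tilde{{\bf \Lambda}})$, whereas you first conclude circulant structure from DFT diagonalization and then recover the first row by inverse DFT — which, if anything, makes the role of the aliasing threshold slightly more explicit.
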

\noindent \textit{Proof.} See Appendix $A.4$.\\
\\
We observe that the previous result is further reinforced by the fact that adjacency and graph Laplacian matrices of regular graphs possess the same eigenbasis, which, as we will see in a later discussion, is not upheld for e.g. a path graph. In particular, we note that Lemma \ref{lem22} gives rise to an intriguing coarsening strategy for circulant graphs as it preserves both the original connectivity of the graph by retaining the same generating set, as well as the spectral properties given that its eigenvalues and eigenbasis are respectively composed of a subset and subpartition of the original.\\
\\
Following a generalization of Lemma \ref{lem11}, and the preceding discussion, we formulate the graph coarsening scheme to complement our Graph-FRI framework:
\begin{thm} \label{samp2}Given GFT ${\bf y}\in\mathbb{C}^M$, from Thm \ref{samp1}, we determine the coarsened graph $\tilde{G}=(\tilde{V},\tilde{E})$  associated with the dimensionality-reduced graph signal $\tilde{{\bf y}}\in\mathbb{C}^{\tilde{M}}$, via the $j$-level decomposition
\[{\bf y}={\bf U}^H_M{\bf x}={\bf C}\prod_{j=0}^{J-1}({\bf \Psi}_{j\downarrow 2}{\bf E}_{2^j\vec{\alpha}}){\bf x}={\bf C}\tilde{{\bf y}}\]
where ${\bf U}^H_M\in\mathbb{C}^{M\times N}$ is the row-reduced permuted GFT basis (DFT-matrix), ${\bf C}\in\mathbb{C}^{M\times \tilde{M}}$ is a coefficient matrix with $\tilde{M}=\frac{N}{2^J}$ given $M$, ${\bf \Psi}_{j\downarrow 2}\in\mathbb{R}^{N/2^{j+1}\times N/2^j}$ is a binary sampling matrix which retains even-numbered nodes, and ${\bf E}_{2^j\vec{\alpha}}\in\mathbb{R}^{N/2^j\times N/2^j}$ is a (higher-order) graph e-spline low-pass filter on $\tilde{G}_j$, which reproduces complex exponentials at level $j$, with parameter $\vec{\alpha}=(\alpha_0,...,\alpha_{M-1})=\left(0,...,\frac{2\pi (M-1)}{N}\right)$. The associated coarsened graphs $\tilde{G}_j$ at levels $j\leq J$ can be determined following two different schemes:\\
$(i)$ Perform Kron-reduction at each level $j\leq J$ using the pattern ${\bf V}_{\alpha}$ in ${\bf \Psi}_{j\downarrow 2}$ to obtain ${\bf L}_j$\\
$(ii)$ Define eigenbasis $(\tilde{{\bf U}}_j,\tilde{{\bf \Lambda}}_j)\in\mathbb{C}^{N/2^j \times N/2^j}$ at each level $j \leq J$ through the application of ${\bf \Psi}_{j-1\downarrow 2}$ on $(\tilde{{\bf U}}_{j-1},\tilde{{\bf \Lambda}}_{j-1})$ (see Lemma \ref{lem22}). The coarse graph $\tilde{G}_j$ for graph signal $\tilde{{\bf y}}_j=\prod_{k=0}^{j-1}({\bf \Psi}_{k \downarrow 2}{\bf E}_{2^k\vec{\alpha}}){\bf x}$, has adjacency matrix \[{\bf A}_j=(2^j/N)\tilde{{\bf U}}_j \tilde{{\bf \Lambda}}_j\tilde{{\bf U}}_j^H\]
which preserves the generating set $S$ of $G$ for a sufficiently small bandwidth.\end{thm}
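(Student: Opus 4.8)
The plan is to prove the multilevel factorization by induction on the number of levels $J$, taking the single-level result of Lemma \ref{lem11} as the base case, and then to identify the coarsened graph under each of the two schemes. First I would note that Lemma \ref{lem11} already supplies the case $J=1$: it gives ${\bf U}_M^H={\bf C}_0{\bf \Psi}_{0\downarrow 2}{\bf E}_{\vec{\alpha}}$ with ${\bf C}_0=\hat{{\bf C}}_0\tilde{{\bf U}}_{1,M}^H$, where $\tilde{{\bf U}}_1^H$ is the DFT-matrix of dimension $N/2$ and $\hat{{\bf C}}_0$ is diagonal. The observation driving the recursion is that, by Lemma \ref{lem22}, the spectrally downsampled eigenbasis defines a circulant coarse graph $\tilde{G}_1$ with the same generating set $S$, whose permuted GFT basis is precisely this DFT of dimension $N/2$. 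Hence $\tilde{{\bf U}}_{1,M}^H$ is itself the row-reduced GFT basis of a circulant graph and qualifies as the input to a further application of Lemma \ref{lem11}.

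For the inductive step I would assume the factorization holds after $j$ levels and apply Lemma \ref{lem11} to $\tilde{{\bf U}}_{j,M}^H$ on the circulant graph $\tilde{G}_j$, obtaining $\tilde{{\bf U}}_{j,M}^H=\hat{{\bf C}}_j\tilde{{\bf U}}_{j+1,M}^H{\bf \Psi}_{j\downarrow 2}{\bf E}_{2^j\vec{\alpha}}$. The frequency doubling in the filter parameter, from $\vec{\alpha}$ to $2^j\vec{\alpha}$, is dictated by the dilation of the reproduced exponentials under downsampling by two, in agreement with the non-stationary parameterization discussed in Section $3$. Substituting successively and collecting the diagonal factors yields ${\bf U}_M^H=\big(\prod_{k=0}^{J-1}\hat{{\bf C}}_k\big)\tilde{{\bf U}}_{J,M}^H\prod_{j=0}^{J-1}({\bf \Psi}_{j\downarrow 2}{\bf E}_{2^j\vec{\alpha}})$, so that ${\bf C}=\big(\prod_{k=0}^{J-1}\hat{{\bf C}}_k\big)\tilde{{\bf U}}_{J,M}^H$ is a diagonal matrix times the first $M$ rows of the DFT of dimension $\tilde{M}=N/2^J$, and $\tilde{{\bf y}}=\prod_{j=0}^{J-1}({\bf \Psi}_{j\downarrow 2}{\bf E}_{2^j\vec{\alpha}}){\bf x}$ is the claimed dimensionality-reduced signal.

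It then remains to identify the coarse graph supporting $\tilde{{\bf y}}_j$ under each scheme. For scheme $(i)$ I would invoke the established property that Kron-reduction of a symmetric circulant Laplacian with respect to the even-indexed set $V_{\alpha}$ returns a symmetric circulant Laplacian (\cite{kron}, \cite{Ekambaram3}), so that ${\bf L}_j$ is circulant at every level, while noting the accompanying bandwidth increase. For scheme $(ii)$ I would apply Lemma \ref{lem22} iteratively: obtaining $(\tilde{{\bf U}}_j,\tilde{{\bf \Lambda}}_j)$ from $(\tilde{{\bf U}}_{j-1},\tilde{{\bf \Lambda}}_{j-1})$ by the spectral downsampling ${\bf \Psi}_{j-1\downarrow 2}$, Lemma \ref{lem22} guarantees that ${\bf A}_j=(2^j/N)\tilde{{\bf U}}_j\tilde{{\bf \Lambda}}_j\tilde{{\bf U}}_j^H$ is circulant with generating set $S$ whenever $2B<N/2^j$. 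The consistency I would need to check is that the vertex-domain pattern ${\bf \Psi}_{j\downarrow 2}$ used to reduce the signal coincides with the spectral sampling applied to the eigenbasis, which is immediate since both retain the even-indexed nodes; this certifies that $\tilde{{\bf y}}_j$ indeed resides on $\tilde{G}_j$.

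The main obstacle is the propagation of the bandwidth condition through the levels together with the mutual dependence between the two halves of the argument. As $j$ increases the admissible bound $2B<N/2^j$ tightens while the filter-broadened support does not shrink, so generating-set preservation can be guaranteed only for sufficiently small bandwidth and a bounded number of levels; this is the source of the qualifier in the statement. Moreover, each recursive use of Lemma \ref{lem11} presupposes that the coarse graph $\tilde{G}_j$ is circulant, so that the dilated e-spline filter ${\bf E}_{2^j\vec{\alpha}}$ exists and reproduces the required exponentials, yet that circularity is exactly what Lemma \ref{lem22} supplies. The induction must therefore interleave the two steps, establishing circularity of $\tilde{G}_j$ via Lemma \ref{lem22} before invoking the filter factorization at level $j$.
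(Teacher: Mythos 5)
Your proposal is correct and takes essentially the same route as the paper: Thm.~\ref{samp2} is presented there precisely as the multilevel generalization of Lemma~\ref{lem11}, with Lemma~\ref{lem22} supplying the circulant, generating-set-preserving coarse graphs (and hence the lower-dimensional DFT bases) that allow the single-level factorization to be reapplied at each scale with the dilated parameters $2^j\vec{\alpha}$, and with the Kron-reduction properties for scheme $(i)$ cited from the literature. Your explicit induction, including the interleaving of the two lemmas, the collection of the diagonal factors into ${\bf C}$, and the tightening bandwidth condition $2B<N/2^j$, is a faithful elaboration of the argument the paper leaves implicit.
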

\noindent Consequentially, the edge set of the coarsened graph associated with the GFRI-framework is not unique, and we have explored two possible approaches which satisfy the connectivity constraints of symmetry and circularity. Kron-reduction preserves basic graph characteristics, yet, while taking into account the entire graph adjacency relations in the computation of the coarsened version, it provides little general intuition on the topology of the latter. In contrast, the alternative spectral reduction technique, is shown to preserve the original graph connectivity by retaining its generating set (see for example Fig. \ref{fig:gcc}), thereby simultaneoulsy alleviating the issue of an increasing bandwidth. \\
We summarize the graph sampling framework, further illustrated in Fig. \ref{fig:ssc}, as the filtering  of a sparse graph signal ${\bf x}$ on $G$ with a graph e-spline low-pass filter followed by dimensionality reduction, and giving rise to signal $\tilde{{\bf y}}$ on coarsened $\tilde{G}$, which is subsequently transformed into the further dimensionality-reduced, scaled spectral graph domain, resulting in the representation ${\bf y}$. Graph signals ${\bf x}_W$ with a sparse multiresolution representation via a GWT can be similarly sampled following an initial sparsification step.
The graph filter(s) and transformation ${\bf C}$, which contains graph filter eigenvalues, within the derived decomposition further facilitate a spectral characterization of ${\bf y}$ that depends directly on and is unique for the graph at hand.
\\
\begin{figure}[htb]\centering
\begin{minipage}{6in}
  \centering
  \raisebox{-0.5\height}{\includegraphics[height=1.5in]{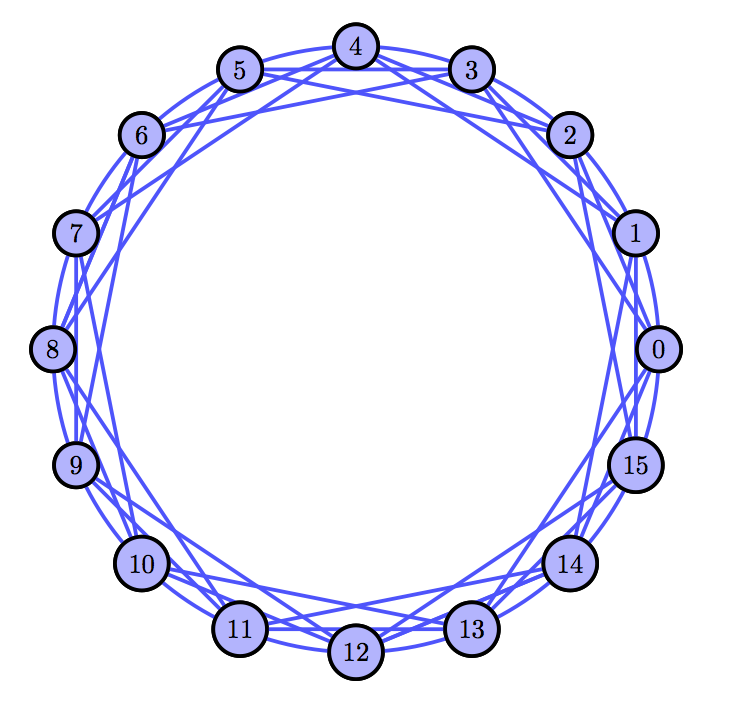}}
  \hspace*{.06in}
 \raisebox{-0.5\height}{\includegraphics[height=0.6in]{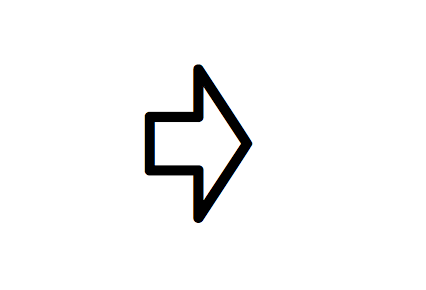}}
 \hspace*{.06in}
  \raisebox{-0.5\height}{\includegraphics[height=1.4in]{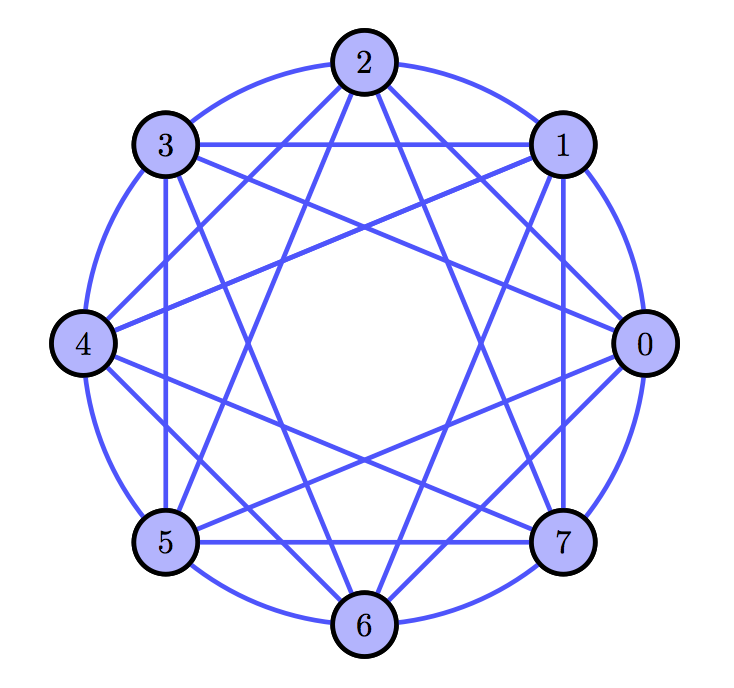}}
\end{minipage}
\caption{Graph Coarsening for a Circulant Graph with $S=\{1,2,3\}$.}
\label{fig:gcc}
\end{figure}

\begin{figure}[htb]\centering
 \centerline{\includegraphics[scale=0.345]{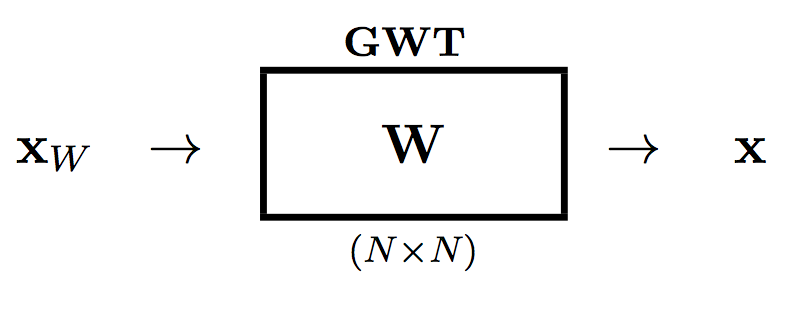}}
 \centerline{\includegraphics[scale=0.45]{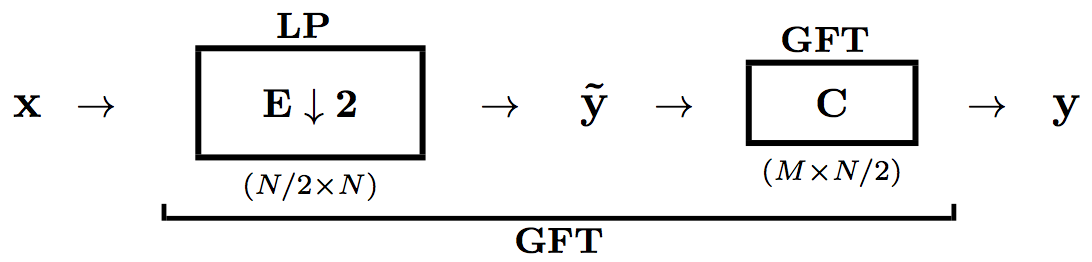}}
\caption{Sampling Scheme with Preceding Sparsification Step and One Level of Coarsening.}
\label{fig:ssc}
\end{figure}
The matrix ${\bf E}_{2^j\vec{\alpha}}$, representing a higher-order, vertex-localized graph e-spline wavelet low-pass filter parameterised by $\vec{\alpha}2^j$ at level $j$, is of the form of the low-pass filter designed in Eqs. (\ref{eq:lp2}) (Thm. \ref{esp2}) or (\ref{eq:lp3}) (Thm. \ref{esp3}), depending on whether the graph at hand is bipartite or not; we thus implicitly assume that ${\bf E}_{2^j\vec{\alpha}}$ can reproduce complex exponential graph signals, with $\vec{\alpha}$ as specified. As the filter construction in either case is based on the combination/convolution of different graph e-spline basis functions, it should be noted that the resulting higher-order function may for certain scenarios contain opposing roots, which in the non-bipartite case would violate B\'{e}zout's Thm. \cite{esplinewav}, and thus the necessary biorthogonality condition for filterbank construction. While for complementary graph wavelet filterbanks, it is more intricate to deduce generalized conditions on when exactly this occurs, we can further specify conditions under which the \textit{HGESWT} in the bipartite graph case loses reproduction properties and/or ceases to be invertible, as formulated in Cor. \ref{cor1} \footnote{In particular, this amounts to showing when $\tilde{d}_{\alpha_j}=\sum_{k\in2\mathbb{Z}+1}2d_k\cos\left(\frac{2 \pi j k}{N}\right)=-\tilde{d}_{\alpha_t}=\sum_{k\in2\mathbb{Z}+1}2d_k\cos\left(\frac{2 \pi k}{N}\left(t\pm \frac{N}{2}\right)\right)$ can occur, i.e. for which $\alpha_j=\frac{2\pi j}{N}$, $\alpha_t=\frac{2\pi t}{N}$, weights $d_k$ and graph connectivity $k\in2\mathbb{Z}+1$, the scheme ceases to be valid.}.
\subsection{Special Cases and Discussion}
Graph spline wavelet theory and ensuing graph-dependent transform properties in particular, as discussed in Sect. $3$, predetermine the extent to which one can characterize a coarsened graph associated with the sampled graph signal $\tilde{{\bf y}}$, or, in other words, the degree of feasible dimensionality reduction ($j$-level decomposition) for the initial graph within the proposed GFRI-framework.\\
Further to Thm \ref{samp2}, we hence proceed to specify restrictions on the number of samples $M$ and levels $J$ to ensure the reproduction via low-pass graph filter ${\bf E}_{\vec{\alpha}}$. Certain rows of the DFT-matrix cannot be reproduced using a real-valued symmetric graph filter, as alluded to in (Sect. $3$, \cite{splinesw}); for instance, parameter $\alpha=\pm \pi/2$ induces the opposing factors $(1-iz)(1+iz)$ of complex conjugates in the representer polynomial of a parameterized graph Laplacian filter, which violates B\'{e}zouts equality for complementary filterbank construction, as well as prevents the reproduction of the corresponding complex exponential in the DFT via a \textit{HGESWT}-based low-pass graph filter. It follows that we cannot reproduce consecutive rows of the DFT-matrix beyond its $N/4$-th row. \\
We therefore need to ensure within the multiresolution analysis that\[\alpha 2^{j-1}=\frac{2 \pi k 2^{j-1}}{N}< \frac{2 \pi (N/4)}{N},\quad\forall j\leq J\] 
such that $k<\frac{N}{2^{J+1}}$ or $J<\log_2\left(\frac{N}{k}\right)-1$, where $J$ is the total number of levels; in other words, we can approximate the DFT-matrix up to its $M=k+1$-th row for a certain number of levels, with parameters $\tilde{M}=\frac{N}{2^J}$ and $M=\frac{N}{2^{J+1}}=\frac{\tilde{M}}{2}$ of Thm \ref{samp2}. This coincides with the biorthogonality constraint for traditional e-spline wavelets outlined in \cite{espline}, \cite{esplinewav} which ensures that the corresponding filters do not contain opposing roots, whereby given distinct $\gamma,\gamma'\in\vec{\gamma}$, $2^j(\gamma-\gamma')=i\pi (2k+1)$ must not be satisfied for some $k\in\mathbb{Z}$ at level $j\leq j_0-1$. 
Depending on the signal set, we may not always satisfy $K=\frac{N}{2^{J+2}}$ exactly, in which case we simply require $2(2K)\leq 2M\leq\tilde{M}$ and generalize the formulae accordingly. \\
Since the graph e-spline filter functions of Thm. \ref{esp2} give rise to the traditional e-spline when the graph at hand is a simple cycle, the aforementioned condition with $\gamma=\pm i\alpha$ is sufficient in that case; nevertheless as a result of the complex connectivity of circulant graphs and thereby arising eigenvalue multiplicities, we need to impose further restrictions on $\vec{\alpha}$ for all other cases.\\
\\
For an unweighted bipartite graph with consecutive, odd elements $s_k\in S$ and even generating set cardinality $|S|$, and as a particular case of Cor. \ref{cor1}, we additionally observe at $\alpha=\pm\pi/4$ for e-degree $\tilde{d}_{\alpha}=\sum_{k\in\mathbb{Z}} 2 \cos(\alpha (2k+1))=0$, since $\cos(\alpha (2k+1))=\frac{\sqrt{2}}{2},\quad k=0,3,4,7,8,...$, and $\cos(\alpha (2k+1))=-\frac{\sqrt{2}}{2},\quad k=1,2,5,6,...$. Thus, similarly as before, the associated graph filter polynomials contain opposing roots and we can only approximate consecutive rows of the DFT-matrix up to at most the $N/8$-th row. This translates into the constraints $k<\frac{N}{2^{J+2}}$ or $J<\log_2\left(\frac{N}{k}\right)-2$, and $M=\frac{N}{2^{J+2}}=\frac{\tilde{M}}{4}$. Here, $M$ may be subjected to further reduction, as a consequence of increasing eigenvalue multiplicities at $\tilde{d}_{\alpha}=\gamma_i=0$ for different graph-connectivities.\\
\\
In general, if $G$ is circulant and bipartite, we need to ensure that no parameters $\alpha_i,\alpha_j\in\vec{\alpha}$ satisfy $\tilde{d}_{\alpha_i}= -\tilde{d}_{\alpha_j}$ at all levels in order to preserve the invertibility property of Thm \ref{esp2}, and thus consider consecutive frequencies (or consecutive rows of the DFT), only up to some cut-off frequency with $\alpha_k=\frac{2\pi k}{N}$ at position $k+1$, such that for $i,j\leq k$, $\tilde{d}_{\alpha_i}\neq -\tilde{d}_{\alpha_j}$. 
As per Cor. \ref{cor2}, $\tilde{d}_{\alpha_i}= -\tilde{d}_{\alpha_j}$ is satisfied when $j=(i+N/2)_N$ for frequency location parameters $i,j$, which, despite the previously derived constraint $i,j< N/4$, may ensue for some $i,j$ from large eigenvalue multiplicities at $0$ (associated with higher graph connectivity). An example is given by the normalized adjacency matrix of the unweighted complete (circulant) bipartite graph with bipartite sets of equal size $N/2$, whose eigenvalues are $\gamma_{max/min}=\pm1$ of respective multiplicity $m=1$, and $\gamma_i=0$ of multiplicity $N-2$.\\
These are necessary conditions for the existence of a suitable low-pass filter via the \textit{HGESWT}, we note, however, that the set of special cases presented here is not exhaustive. 
It can be further stated that the condition $j\neq(i+N/2)_N$ also needs to be satisfied for non-bipartite circulant graphs in the \textit{HCGESWT}, which equivalently follows from the traditional biorthogonality constraints as well as from a special case of the presented graph spline wavelet transform designs.\footnote{This can be easily demonstrated in a generalization of the proof of Thm. \ref{esp2} (in Appendix $A.2$, \cite{splinesw}), where high-pass filter ${\bf H}_{HP_{\vec{\alpha}}}$ is maintained and low-pass filter ${\bf H}_{LP_{\vec{\alpha}}}$ is generalized to the form of Eq. (\ref{eq:lp3}), with fixed downsampling pattern with respect to $s=1\in S$.}
\\
\\
In the following, the derived framework is further illustrated through a sample scenario:\\
\\
{\bf Example:} Consider the bipartite circulant graph of dimension $N=|V|=128$ with generating set $S=\{1,3,5\}$ (as illustrated in Sect. $3$, Fig. \ref{fig:gsignal}) and GFT (DFT) basis ${\bf U}^H$, and define a set of $K$-sparse signals $X=\{{\bf x}_i\}_i$, with $K=3$, on its vertices. Here, the elements of $X$ may characterize i.a. (piecewise) smooth signals, which have been sparsified via a suitable multi-level GWT.
According to Thm. \ref{samp1}, one can perfectly recover each ${\bf x}_i\in\mathbb{R}^N$ from its associated dimensionality-reduced spectral signal representation ${\bf U}^H_M{\bf x}_i={\bf y}_i\in\mathbb{R}^M$ of minimum dimension $M=2K=6$. \\
As such, we require $M-1=k<\frac{N}{2^{J+1}}$, and $\tilde{M}\geq 2M=12$ for the dimension of coefficient matrix (spectral transformation) $\tilde{{\bf C}}\in\mathbb{C}^{M\times \tilde{M}}$, and obtain $J=3 <\log_2\left(\frac{N}{k}\right)-1$ for the maximum number of decomposition levels in the graph coarsening scheme of Thm. \ref{samp2}, with $\tilde{M}=\frac{N}{2^J}=16$. \\
At each level $0\leq j\leq 2$, we proceed to establish a suitably coarsened bipartite circulant graph $\tilde{G}_j$ with adjacency matrix ${\bf A}_j$, by retaining the same generating set $S$ as $G$ (as per Thm. \ref{samp2} $(ii)$), and define the \textit{HGESWT} (Thm. \ref{esp2}) on each, with low-pass filter ${\bf E}_{2^j \vec{\alpha}}=\prod_{n=0}^5\frac{1}{2}\left(\frac{\tilde{d}_{2^j\alpha_n}}{d}{\bf I}_{N/2^j}+\frac{{\bf A}_j}{d}\right)$ parameterized by $\{\alpha_n=\frac{2\pi n}{N}\}_{n=0}^5$ to reproduce complex exponentials (i.e. the first $M=6$ rows of the DFT).\\
Upon numerical inspection, we observe that $\tilde{d}_{2^j\alpha_k}\neq -\tilde{d}_{2^j\alpha_l}$ generally holds for any two e-degree parameters, with all $\tilde{d}_{2^j\alpha_k}\neq 0$, while the \textit{HGESWT} (here, for $k=1$) is invertible at all levels $0\leq j\leq 2$. \\
This gives rise to the set of sampled graph signals $\tilde{{\bf y}}_i^j=\prod_{k=0}^{j-1}({\bf \Psi}_{k \downarrow 2}{\bf E}_{2^k\vec{\alpha}}){\bf x}_i$ on each $\tilde{G}_j$; here, $\tilde{{\bf y}}=\prod_{k=0}^{2}({\bf \Psi}_{k \downarrow 2}{\bf E}_{2^k\vec{\alpha}}){\bf x}_i\in\mathbb{R}^{\tilde{M}}$ is defined on the coarsened circulant graph $\tilde{G}$ of minimum dimension $\tilde{M}=16$, characterized by the same generating set $S$ as $G$.

\subsection{Extensions to Path Graphs}
The path graph, which corresponds to a simple cycle without the periodic extension, bears similar properties to its circulant counterpart; it is bipartite and its graph Laplacian eigenvectors can be represented as the basis vectors of the DCT-II matrix \cite{strang2} such that ${\bf U}^H={\bf Q}$ is the DCT-III matrix, with entries $Q_{m,n}=c(m)\sqrt{\frac{2}{N}}\cos\left(\frac{\pi m (n+0.5)}{N}\right)$, for $0\leq m,n\leq N-1$, and constants $c(0)=\frac{1}{\sqrt{2}}$ and $c(m)=1$ for $m\geq 1$, with corresponding distinct eigenvalues $\lambda_m=2-2\cos\left(\frac{\pi m}{N}\right),\quad m\in\{ 0, 1,...,N-1\}$. According to \cite{drag}, a $K$-sparse signal sampled with the DCT matrix can be perfectly reconstructed via a variation of Prony's method using at least $4K$ of its consecutive sample values which gives rise to a specialized extension of the Graph FRI framework\footnote{In \cite{drag}, the DCT matrix is given as an example of a larger class of invertible sampling bases of the form ${\bf Q}={\bf \Lambda}{\bf V}{\bf S}$, with diagonal ${\bf \Lambda}\in\mathbb{C}^{N\times N}$, Vandermonde matrix ${\bf V}\in\mathbb{C}^{N\times M}$ with $\lbrack{\bf V}\rbrack_{n,m}=p_m^n$ and distinct $p_m$, and ${\bf S}\in\mathbb{C}^{M\times N}$, whose columns are at most $D$-sparse. A $K$-sparse signal ${\bf x}$ can be perfectly recovered from $2DK$ consecutive entries of ${\bf y}={\bf Q}{\bf x}$ using Prony's method (Prop. $4$, \cite{drag}), which allows a further generalization to graphs whose GFT basis is of that form.}:
\begin{thm} (Graph-FRI for paths) Let ${\bf x}\in\mathbb{C}^N$ be a K-sparse graph signal defined on the vertices of an undirected and unweighted path graph $G$, whose GFT basis is expressed such that ${\bf U}^H$ is the DCT-matrix ${\bf Q}$. We can sample and perfectly reconstruct ${\bf x}$ on $G$ using the dimensionality-reduced GFT-representation ${\bf y}={\bf U}^H_M {\bf x}\in\mathbb{C}^M$, where ${\bf U}^H_M$ corresponds to the first $M$ rows of ${\bf U}^H$, provided $M\geq 4K$.\end{thm}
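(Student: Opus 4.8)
The plan is to reduce this to the same Prony-based argument that underlies Graph-FRI on circulant graphs (Thm. \ref{samp1}), while accounting for the fact that the DCT entries are cosines rather than pure complex exponentials. First I would strip off the row-dependent normalization: since $c(m)\neq 0$ for every $m$, the scaled samples $\tilde{y}_m = y_m/\big(c(m)\sqrt{2/N}\big)$ are obtained from ${\bf y}$ by an invertible diagonal map, and satisfy $\tilde{y}_m = \sum_{k=0}^{K-1} x_{n_k}\cos\!\big(\tfrac{\pi m (n_k+0.5)}{N}\big)$, where $n_0,\dots,n_{K-1}$ are the $K$ support positions of ${\bf x}$.

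Next I would expand each cosine into a conjugate pair of exponentials. Writing $p_k = e^{\,i\pi(n_k+0.5)/N}$, one has $\cos\!\big(\tfrac{\pi m (n_k+0.5)}{N}\big) = \tfrac{1}{2}\big(p_k^{\,m}+p_k^{-m}\big)$, so that $\tilde{y}_m = \sum_{j=0}^{2K-1} a_j\, q_j^{\,m}$ is a power sum of (at most) $2K$ complex exponentials, where the modes $q_j$ run over the set $\{p_k,p_k^{-1}\}_{k=0}^{K-1}$ and the amplitudes are $a_j = x_{n_k}/2$. This is precisely the signal model to which Prony's method applies, and it is the doubling of the number of modes from $K$ to $2K$ --- an artifact of the cosine kernel --- that raises the sampling requirement from $2K$ (Thm. \ref{samp1}) to $4K$: Prony's method needs twice as many consecutive samples as there are exponentials. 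Equivalently, this is the $D=2$ instance of the factorization ${\bf Q}={\bf \Lambda}{\bf V}{\bf S}$ noted in the footnote, with ${\bf \Lambda}=\mathrm{diag}\big(c(m)\sqrt{2/N}\big)$, Vandermonde ${\bf V}$ built from the $2N$ nodes $\{e^{\pm i\pi(n+0.5)/N}\}_n$, and each column of ${\bf S}$ carrying the two weights $1/2$ at the entries $p_n^{\pm 1}$.

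The hard part will be verifying that the $2K$ modes $q_j$ are genuinely distinct, since only then does Prony's method recover exactly $2K$ frequencies and allow an unambiguous read-off of the $K$ original positions. Here the half-integer shift $+0.5$ in the DCT is essential: for $n_k\in\{0,\dots,N-1\}$ the angles $\theta_k=\pi(n_k+0.5)/N$ lie strictly in the open interval $(0,\pi)$, so no $p_k$ is real and hence $p_k\neq p_k^{-1}$ (no mode is self-conjugate), while distinctness of the positions $n_k$ forces the $\theta_k$ to be pairwise distinct and bounded away from $0$ and $\pi$, so that $p_k \neq p_{k'}^{\pm 1}$ for $k\neq k'$. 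Consequently the collection $\{p_k,p_k^{-1}\}_k$ consists of $2K$ distinct points on the unit circle, the associated $2K\times 2K$ Vandermonde system is invertible, and the recovery map is injective.

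With distinctness in hand, the remainder is the standard Prony pipeline, which I would invoke via the variation established in (Prop. $4$, \cite{drag}) applied to the $M\geq 4K=2DK$ consecutive normalized samples $\tilde{y}_0,\dots,\tilde{y}_{M-1}$: build the length-$(2K{+}1)$ annihilating filter whose $z$-transform has the $q_j$ as roots from a Toeplitz system in the $\tilde{y}_m$, locate the modes $q_j$ by root-finding, recover each support position $n_k$ from the phase of the corresponding conjugate pair $\{p_k,p_k^{-1}\}$, and finally solve the resulting $2K\times 2K$ Vandermonde system for the amplitudes $a_j$, whence $x_{n_k}=2a_j$. Reassembling ${\bf x}$ from the recovered positions and amplitudes and undoing the diagonal normalization completes the reconstruction, establishing perfect recovery whenever $M\geq 4K$.
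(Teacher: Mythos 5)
Your proof is correct and is essentially the paper's own argument: the paper establishes this theorem by appealing to Prop.~4 of \cite{drag} for sampling bases of the form ${\bf Q}={\bf \Lambda}{\bf V}{\bf S}$ with $D=2$, which is precisely the cosine-to-conjugate-exponential-pair decomposition and Prony pipeline that you carry out explicitly, so you have in effect written out in full what the paper defers to the citation. Your explicit check that the $2K$ modes are pairwise distinct (the half-integer shift placing all angles strictly inside $(0,\pi)$, so no mode is self-conjugate and no two positions collide) is exactly the ``distinct $p_m$'' hypothesis of the cited proposition, a detail the paper leaves implicit.
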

\noindent Furthermore, we note that the graph Laplacian matrix of a path is circulant up to its first and last row, thus incorporating $2$ vanishing moments with a graph-inherent border effect of length $T=2$; powers of the graph Laplacian similarly inherit $2k$ vanishing moments with a border effect of $T=2k$. By considering the symmetric normalized adjacency matrix ${\bf A}_{n}={\bf D}^{-1/2}{\bf A}{\bf D}^{-1/2}$, the previously derived graph wavelet construction of Thm. \ref{esp1} can be generalized for the path graph, even though not a regular graph, with the minor restriction that the normalized graph Laplacian ${\bf L}_{norm}={\bf I}_N-{\bf A}_n$ gains an increased border effect of $T=2(k+1)$. More generally, the proof of Thm. \ref{esp1} can be extended to all undirected connected graphs \cite{splinesw}, given that the eigenvalues $\gamma_i$ of ${\bf A}_n$ continue to satisfy $|\gamma_i|\leq 1$. We thus state the graph wavelet transform as a special case of the \textit{HGSWT} in Thm \ref{esp1}:
\begin{cor} Given the undirected path graph $G$ with normalized adjacency matrix ${\bf A}_{n}$, we define the HGSWT, composed of the low-and high-pass filters:
\begin{equation}{\bf H}_{LP}=\frac{1}{2^k}({\bf I}_N+{\bf A}_n)^k\end{equation}
\begin{equation}{\bf H}_{HP}=\frac{1}{2^k}({\bf I}_N-{\bf A}_n)^k\end{equation}
whose associated high-pass representer polynomial $H_{HP}(z)$ annihilates polynomial graph signals up to degree $2k-1$, subject to a graph-border effect of $T=2(k+1)$ non-zeros. This filterbank is invertible for any downsampling pattern, as long as at least one node retains the low-pass component, while the complementary set of nodes retains the high-pass components.\end{cor}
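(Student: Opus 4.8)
The plan is to recognize this as the path-graph specialization of Theorem \ref{esp1}, obtained by replacing the regular-graph operator ${\bf A}/d$ with the symmetric normalized adjacency matrix ${\bf A}_n={\bf D}^{-1/2}{\bf A}{\bf D}^{-1/2}$, and then to verify the three assertions separately: the annihilation order, the border effect $T=2(k+1)$, and invertibility. The starting observation is that ${\bf H}_{HP}=\frac{1}{2^k}({\bf I}_N-{\bf A}_n)^k=\frac{1}{2^k}{\bf L}_{norm}^k$, where ${\bf L}_{norm}={\bf I}_N-{\bf A}_n={\bf D}^{-1/2}{\bf L}{\bf D}^{-1/2}$ is the normalized graph Laplacian and ${\bf L}={\bf D}-{\bf A}$ the combinatorial one; this reduces everything to the behaviour of powers of ${\bf L}_{norm}$ on polynomial graph signals.

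First I would treat the annihilation order through the interior structure, which is genuinely circulant. For a path all interior nodes have degree $2$, so the interior rows of ${\bf I}_N-{\bf A}_n$ follow the stencil $\frac{1}{2}[\,-1\ \ 2\ \ -1\,]$, with interior representer polynomial $1-\frac{1}{2}(z+z^{-1})=-\frac{1}{2}z^{-1}(z-1)^2$. This exhibits a double root at $z=1$, i.e. two vanishing moments, exactly as in Lemma \ref{lem1}; raising to the $k$-th power produces a root of multiplicity $2k$ at $z=1$, hence $2k$ vanishing moments, so ${\bf H}_{HP}$ annihilates polynomial graph signals up to degree $2k-1$ at every row sufficiently far from the two ends, matching the statement of Theorem \ref{esp1}.

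The main work, and the step I expect to be the obstacle, is pinning down the border effect $T=2(k+1)$ rather than the $T=2k$ of the unnormalized ${\bf L}^k$. Here I would write ${\bf L}_{norm}{\bf p}={\bf D}^{-1/2}{\bf L}({\bf D}^{-1/2}{\bf p})$ and note that ${\bf D}^{-1/2}{\bf p}$ equals $\frac{1}{\sqrt{2}}{\bf p}$ at every interior node but differs from it at the two endpoints, where $d=1$. Thus ${\bf D}^{-1/2}{\bf p}$ is a rescaled polynomial plus a correction supported on $\{0,N-1\}$; applying ${\bf L}$ spreads this correction to each endpoint and its single neighbour, and the subsequent left multiplication by ${\bf D}^{-1/2}$ does not enlarge the support. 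Combining this normalization-induced defect with the intrinsic path-endpoint defect, and then tracking how the support of the non-annihilated region advances by one node inward at each end with every additional factor of ${\bf L}_{norm}$, yields a corrupted band of width $k+1$ at each end, i.e. $T=2(k+1)$ in total. The careful bookkeeping of these two overlapping defect sources under iterated application is where the argument must be made precise.

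Finally, invertibility follows by transferring the argument of Theorem \ref{esp1}, which (as noted in the text preceding the corollary) extends to any connected graph once the eigenvalues $\gamma_i$ of ${\bf A}_n$ satisfy $|\gamma_i|\le 1$. For the path I would invoke the standard fact that the symmetric normalized adjacency matrix of any graph has spectrum contained in $[-1,1]$, with $\gamma=1$ always present and $\gamma=-1$ present because the path is bipartite; crucially no eigenvalue exceeds $1$ in magnitude. Consequently $(1+\gamma_i)^k$ and $(1-\gamma_i)^k$ cannot vanish simultaneously, so ${\bf H}_{LP}$ and ${\bf H}_{HP}$ share no common null eigenvector, and the combined downsampled transform is full rank for any pattern in which at least one node keeps the low-pass component while the complementary nodes keep the high-pass components, exactly as in Theorem \ref{esp1}.
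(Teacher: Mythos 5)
Your proposal is correct and takes essentially the same route as the paper: the paper also presents this corollary as the path-graph specialization of Theorem 3.1 obtained by replacing ${\bf A}/d$ with ${\bf A}_n={\bf D}^{-1/2}{\bf A}{\bf D}^{-1/2}$, justifying the $2k$ vanishing moments via the circulant interior structure, asserting the enlarged border effect $T=2(k+1)$ for ${\bf L}_{norm}={\bf I}_N-{\bf A}_n$, and appealing to the fact that the invertibility proof of Theorem 3.1 extends to any undirected connected graph once the eigenvalues $\gamma_i$ of ${\bf A}_n$ satisfy $|\gamma_i|\leq 1$. The only difference is one of detail: the paper merely asserts the $T=2(k+1)$ count, whereas your endpoint-defect bookkeeping (normalization correction supported on the two degree-one endpoints, spread by one node per application of the bandwidth-one operator) supplies the verification the paper leaves implicit, and your numbers check out.
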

\noindent Here, the nullspace of ${\bf L}_{norm}={\bf I}_N-{\bf A}_n$ does not contain the all-constant vector, but rather the variation ${\bf D}^{1/2}{\bf 1}_N$, with ${\bf 1}_N$ as the $N$-dimensional vector of ones. \\
Similar extensions pertain to the e-graph-spline transform in Thm. \ref{esp2}, whose filters we can generalize to be of the form ${\bf H}=\prod_j\frac{1}{2^k}(\lambda_j{\bf I}_N \pm {\bf A}_n)^k$ with respect to the eigenvalues $\{\lambda_j\}_j$ of ${\bf A}_n$, in the spirit of the eigenspace-shift discussed in Sect. $3$; these graph filters are shift-invariant with respect to the normalized adjacency matrix since they are formed by polynomials in ${\bf A}_n$. However, we are less interested in these results, except in order to achieve a sparse multiresolution representation, given that the DCT does not give rise to an equivalently intuitive decomposition scheme as the DFT for sampling-based graph coarsening.\\
\\
At last, it should be noted that generalized graph coarsening of a path graph, within a multilevel graph wavelet analysis, can i.a. be conducted via the Kron-reduction of the graph Laplacian matrix, where every other node is sampled, resulting in a weighted path graph with universal weight $1/2$ \cite{mult}, as well as via a spectral sampling scheme similar to Lemma \ref{lem22}, as a consequence of its near-circulant structure.

\section{Generalized $\&$ Multidimensional Sparse Sampling}
In order to apply the presented sampling framework to sparse signals defined on arbitrary graphs, one can make use of a variety of approximation schemes, which facilitate the interpretation of circulant graphs as building blocks for the former. Given the adjacency matrix ${\bf A}$ of a general graph $G$, we propose to conduct $(i)$ nearest circulant approximation of ${\bf A}$ by $\tilde{{\bf A}}$, or alternatively $(ii)$ the (approximate) graph product decomposition ${\bf A}\approx {\bf A}_1\diamond {\bf A}_2$ into circulant graph factors ${\bf A}_i$. \\
In the former case, this entails the projection of ${\bf A}$ onto the subspace of circulant matrices ${\bf C}_N$, spanned by circulant permutation matrices ${\bf \Pi}^i,\enskip i=0,...,N-1$, with ${\bf \Pi}$ defined through first row $\lbrack 0\enskip 1\enskip 0...\rbrack$, and given by $\tilde{{\bf A}}=\sum_{i=0}^{N-1}\frac{1}{N}\langle {\bf A},{\bf \Pi}^i\rangle_F{\bf \Pi}^i$, which performs an averaging over diagonals via the Frobenius inner product $\frac{1}{N}\langle {\bf A},{\bf \Pi}^i\rangle_F=\frac{1}{N}tr({\bf A}^T {\bf \Pi}^i)$ \cite{chan}. 
Here, ${\bf A}$ may be subjected to a prior node relabelling for bandwidth minimization (using for instance the RCM-algorithm \cite{rcm}) so as to reduce the approximation effect of introducing additional complementary edges in $\tilde{{\bf A}}$ and hence significantly alter the graph, as well as prior partitioning, e.g. when the graph at hand features distinct communities that may be analyzed separately. The set of (wavelet-)sparse signals defined on $G$ can then be sampled with respect to its graph approximation $\tilde{G}$. \\
In the latter case, we introduce an additional degree of dimensionality reduction through the graph product operation, which can be successfully leveraged for suitably defined multi-dimensional sparse signals consisting of sparse tensor factors.\\
\\
Analogously to our prior investigation of multi-dimensional graph wavelet analysis, outlined in \cite{splinesw}, we can thus generalize the GFRI sampling framework to arbitrary graphs by resorting to graph product decomposition (\cite{handbook}, \cite{bigdata}); for completeness, we briefly review the main aspects.
\subsection{Sampling on Product Graphs}
The product $\diamond$ of two graphs $G_1=(V(G_1),E(G_1))$ and $G_2=(V(G_2),E(G_2))$, also referred to as factors, with respective adjacency matrices ${\bf A}_1\in\mathbb{R}^{N_1\times N_1}$ and ${\bf A}_2\in\mathbb{R}^{N_2\times N_2}$, gives rise to a new graph $G_{\diamond}$ with vertex set $V(G)=V(G_1)\times V(G_2)$ as the Cartesian product of the former, and edge set $E(G)$ which is formed according to adjacency rules of the respective product operation, resulting in adjacency matrix ${\bf A}_{\diamond}\in\mathbb{R}^{N_1N_2\times N_1N_2}$ \cite{handbook}. We identify four main graph products of interest:
\begin{itemize}
\item Kronecker product $G_1\otimes G_2$: ${\bf A}_{\otimes}={\bf A}_1\otimes {\bf A}_2$
\item Cartesian product $G_1\times G_2$: ${\bf A}_{\times}={\bf A}_1\times {\bf A}_2={\bf A}_1\otimes {\bf I}_{N_2} +{\bf I}_{N_1}\otimes {\bf A}_2$
\item Strong product $G_1\boxtimes G_2$: ${\bf A}_{\boxtimes}={\bf A}_1\boxtimes {\bf A}_2={\bf A}_{\otimes}+{\bf A}_{\times}$
\item Lexicographic product $G_1\lbrack G_2\rbrack$: ${\bf A}_{\lbrack\enskip \rbrack}={\bf A}_1\lbrack{\bf A}_2\rbrack={\bf A}_1\otimes {\bf J}_{N_2} +{\bf I}_{N_1}\otimes {\bf A}_2$
\end{itemize}
where ${\bf J}_{N_2}={\bf 1}_{N_2}{\bf 1}_{N_2}^T$. Here, the lexicographic product is a variation of the Cartesian product. When $G_i$ (and hence $G_{\diamond}$) are regular as well as connected, both adjacency matrix ${\bf A}_{\diamond}$ and graph Laplacian matrix ${\bf L}_{\diamond}$ possess the same eigenbasis ${\bf U}={\bf U}_1\otimes {\bf U}_2$ for ${\bf A}_i={\bf U}_i{\bf \Lambda}_i{\bf U}_i^{H}$ on $G_i$, with graph adjacency eigenvalues ${\bf \Lambda}_{\diamond}={\bf \Lambda}_1\diamond {\bf \Lambda}_2$, except under the lexicographic product \cite{laplaceproduct1}.\\
\\
Inspired by the consideration of graph products in other contexts as a means to model higher-dimensional data and/or facilitate efficient implementation, and a preliminary consideration in \cite{bigdata} for GSP, we proceed to interpret a graph signal residing on the vertices of a product graph as follows \cite{splinesw}:
\begin{defe} Any graph signal ${\bf x}\in\mathbb{R}^N$, with $N=N_1 N_2$, can be decomposed as ${\bf x}=\sum_{s=1}^k{\bf x}_{s,1}\otimes {\bf x}_{s,2}=vec_r\{\sum_{s=1}^k{\bf x}_{s,1}{\bf x}_{s,2}^T\}$, where $vec_r\{\}$ indicates the row-stacking operation, or, equivalently, $\sum_{s=1}^k{\bf x}_{s,1}{\bf x}_{s,2}^T$ has rank $k$ with ${\bf x}_{s,i}\in\mathbb{R}^{N_i}$. For ${\bf x}$ residing on the vertices of an arbitrary undirected graph $G$, which admits the graph product decomposition of type $\diamond$, such that $G_{\diamond}=G_1 \diamond G_2$ and $|V(G_i)|=N_i$, we can redefine and process ${\bf x}$ as the graph signal tensor factors ${\bf x}_{s,i}$ on $G_i$.\end{defe}
Given the graph product decomposition $G_{\diamond}=G_1\diamond G_2$, which may be exact or approximate such that the $G_i$ are undirected, circulant and connected with $s=1\in S_i, i=1,2$, the tensor factors of a given graph signal ${\bf x}$ on $G$ can thus be processed with respect to its inherent circulant substructures; for simplicity, we consider ${\bf x}={\bf x}_1\otimes {\bf x}_2$ (of rank $k=1$) for the remainder of this discussion.\\
\\
The multi-dimensional $K=K_1 K_2$-sparse graph signal ${\bf x}$ with $K_i$-sparse tensor factors ${\bf x}_i$, residing on the vertices of an arbitrary graph $G$, can be sampled and perfectly recovered 
based on dimensionality-reduced GFT-representations of ${\bf x}_i$ on the approximate or exact graph product decomposition of $G$ into circulant factors $G_i$. By applying the GFRI framework (Thms. \ref{samp1} $\&$ \ref{samp2}) on each graph component individually, one can perfectly recover ${\bf x}_i$ (using Prony's method) from spectral representations ${\bf y}_i={\bf U}^H_{M_i}{\bf x}_i$, of dimension $M_i\geq 2K_i$, with ${\bf U}^H_{M_i}$ denoting the first $M_i$ rows of the permuted GFT (DFT) matrix of dimension $N_i\times N_i$, $i=1,2$. 
In particular, for all but the lexicographic product, we have \[({\bf \Phi}_{M_1}\otimes {\bf \Phi}_{M_2}){\bf U}^H{\bf x}=({\bf U}^H_{M_1}\otimes{\bf U}^H_{M_2})({\bf x}_1\otimes {\bf x}_2)={\bf y}_1\otimes {\bf y}_2={\bf y},\] where ${\bf y}\in\mathbb{C}^{M}$ with $M\geq 4 K$ and ${\bf \Phi}_{M_i}\in\mathbb{R}^{M_i\times N_i}$ sample the first $M_i$ rows, or alternatively, ${\bf y}=({\bf C}_1\tilde{{\bf y}}_1)\otimes ({\bf C}_2\tilde{{\bf y}}_2)$ for graph-filtered representations $\tilde{{\bf y}}_i$ and graph spectral transformation matrices ${\bf C}_i$. Otherwise, an entirely separate processing of ${\bf x}_i$ is conducted on the individual graph Laplacian eigenbases of $G_i$. The coarsened circulant graphs $\tilde{G}_i$ associated with representations ${\tilde{\bf y}}_i$ can be recombined under the same graph product operation to form the new coarse graph $\tilde{G}=\tilde{G}_1\diamond\tilde{G}_2$ on the vertices of which the signal $\tilde{{\bf y}}_1\otimes\tilde{{\bf y}}_2$ is redefined. 
\subsection{Exact vs Approximate Graph Products}
If the decomposition $G_{\diamond}=G_1\diamond G_2$ into circulant $G_i$ is exact and known, we can perfectly recover a multidimensional-sparse signal ${\bf x}$ on $G$ by performing graph operations in smaller dimensions, while only requiring the storage of lower dimensional spectral representations ${\bf y}_i$; this advantage is particularly evident in the case of the lexicographic product which is closed under circulant graphs \cite{lex}. Here, one may on the one side apply the GFRI framework directly on the original $G$ with known lexicographic decomposition, conditional upon ${\bf x}$ being sufficiently sparse (or smooth on $G$), yet a decomposition into lower-dimensional circulants increases efficiency, while preserving the scheme.\\
\begin{figure}[htb]\centering
\begin{minipage}{6in}
  \centering
  \raisebox{-0.5\height}{\includegraphics[height=1in]{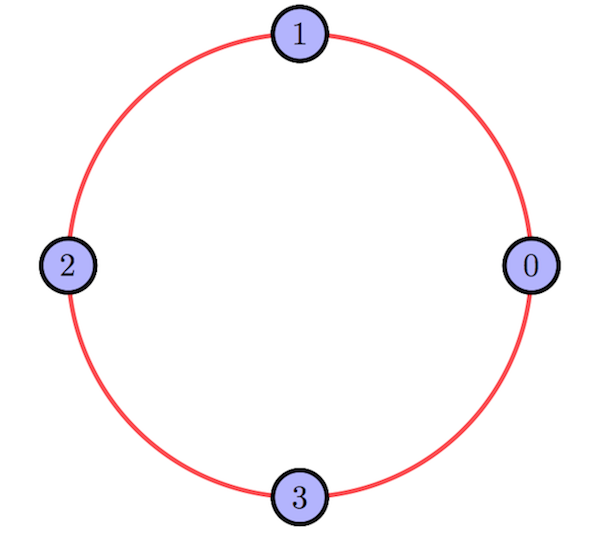}}
  \hspace*{.02in}
  ${\bf \times}$
  \hspace*{.06in}
 \raisebox{-0.5\height}{\includegraphics[height=1in]{r11.png}}
 \hspace*{.03in}
 {\bf =}
 \hspace*{.03in}
  \raisebox{-0.5\height}{\includegraphics[height=2.1in]{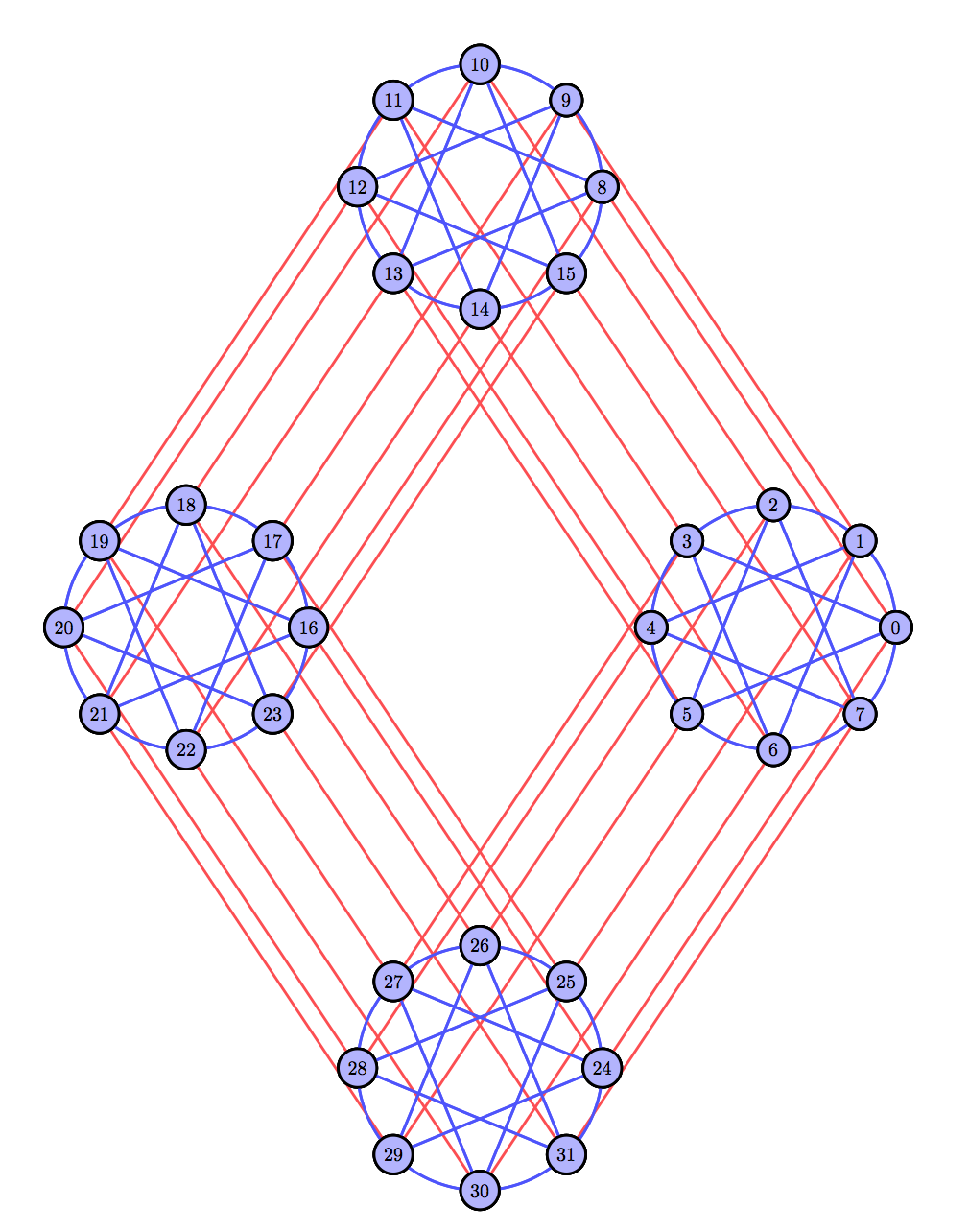}}
\end{minipage}
\caption{Graph Cartesian Product of two unweighted, bipartite circulant graphs.}
\label{fig:gp}
\end{figure}
\\
The study of graph products has further revealed the unweighted lattice graph as the Cartesian product of two unweighted path graphs \cite{handbook}, facilitating the generalization of both the sparse sampling and wavelet analysis framework to lattice graphs, as well as to more general graph products of path and circulant graphs (see Figs. \ref{fig:gp}, \ref{fig:gpp} for examples). 
The vanishing moment property of the graph Laplacian of the path graph facilitates a multidimensional wavelet analysis of multidimensional (piecewise) smooth signals on lattice graphs, thereby revealing an interesting relation to the interpretation of the graph Laplacian as a differential operator.
In particular, the graph Laplacian for lattice graphs provides the stencil approximation of the second order differential operator up to a sign \cite{Ekambaram3}; coincidentally, the unweighted lattice graph, as the graph product of two path graphs which are circulant up to a missing edge, incorporates that the inherent vanishing property of the graph Laplacian of a circulant graph is to some extent preserved via the product operation. We investigated this phenomenon further for general circulants in \cite{splinesw}.\\
\begin{figure}[htb]\centering
\begin{minipage}{6in}
 \centering
  \raisebox{-0.5\height}{\includegraphics[height=1.2in]{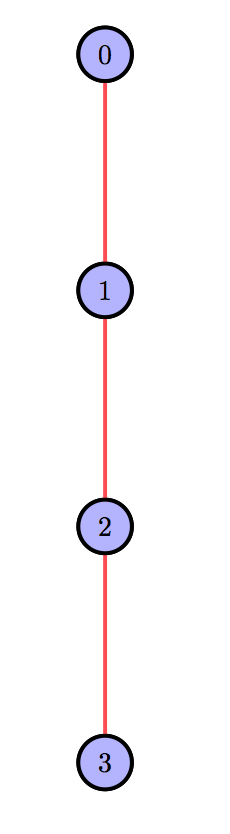}}
  \hspace*{.02in}
  ${\bf \times}$
  \hspace*{.06in}
 \raisebox{-0.4\height}{\includegraphics[height=0.2in]{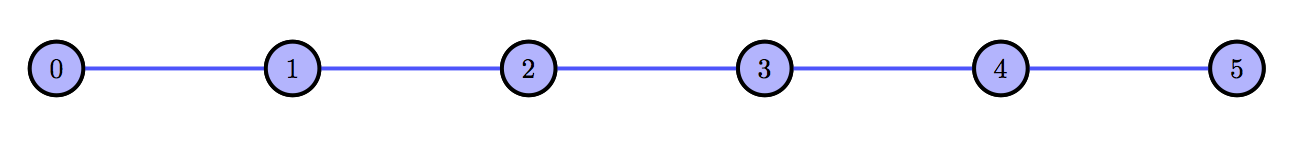}}
 \hspace*{.03in}
 {\bf =}
 \hspace*{.03in}
  \raisebox{-0.5\height}{\includegraphics[height=1.4in]{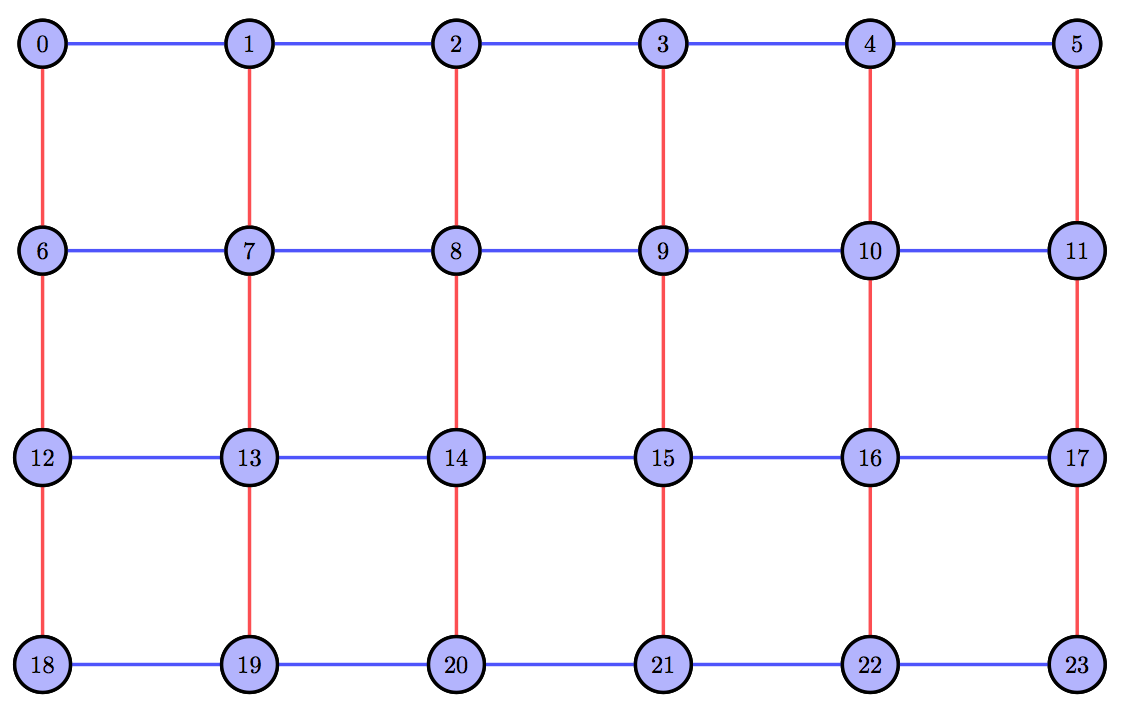}}
\end{minipage}
\caption{Graph Cartesian Product of two unweighted path graphs.}
\label{fig:gpp}
\end{figure}\\
Following the GFRI Thm. for paths, one can equivalently perform multidimensional sampling and reconstruction of signals defined on the graph product of path graphs, where the sparse tensor factors ${\bf x}_i$ on $G_i$ can be perfectly recovered based on at least $4K_i$ consecutive samples of their dimensionality-reduced GFT (DCT) representation. 
 
\subsubsection{The Kronecker product approximation}
In order to extend the GFRI framework beyond circulant and path graphs to arbitrary graphs, which can be decomposed as a graph product of the former, we resort to posing the decomposition in more general terms as an optimization problem subject to linear constraints and draw from a result in matrix theory \cite{loankron}.\\
The adjacency matrix ${\bf A}$ of an arbitrary graph $G$ may be approximated in Frobenius-norm as the Kronecker product ${\bf A}_1\otimes {\bf A}_2$ of (adjacency) matrices ${\bf A}_i$ of dimension $N_i$, which are restricted to be circulant by imposing linear constraints:
\[\min_{{\bf C}_1^T vec({\bf A}_1)=0, {\bf C}_2^T vec({\bf A}_2)=0} ||{\bf A}-{\bf A}_1\otimes {\bf A}_2||_F.\]
Here, ${\bf C}_i$ are structured, rectangular matrices with entries $\{0,1,-1\}$, which can enforce symmetry and bandedness, in addition to circularity, via the column-stacking operation by $vec$ \cite{pits}; the specific instance of a $1$-banded Toeplitz structure can similarly be imposed on either ${\bf A}_i$ to obtain a decomposition into path graphs.

\subsection{Multidimensional Separable Graph Wavelet Analysis}
At last, we briefly revisit the \textit{separable} graph wavelet transform \cite{splinesw}, which is defined on the individual circulant factors $G_i$ of product graphs, as a means to further extend the foregoing discussion on sampling. \\
\\
Let ${\bf W}_i$ denote the graph (e-)spline wavelet transform constructed in the vertex domain of circulant graph factor $G_i$, as defined in Sect. $4.1$, and ${\bf w}_i={\bf P}_{N_i} {\bf W}_i {\bf x}_i$ the graph wavelet domain representation of ${\bf x}_i$ on $G_i$, subject to the (node relabelling) permutation ${\bf P}_{N_i}$. 
For a multilevel analysis, the transform \[{\bf W}_i^{(j)}=\begin{bmatrix} {\bf W}_i^{j}&\\& {\bf I}_{N_i-\frac{N_i}{2^{j}}}\end{bmatrix}\dots {\bf W}_i^0\] induces the representation ${\bf w}_i={\bf P}^{(j)}_{N_i} {\bf W}^{(j)}_i {\bf x}_i$, with iterated permutation matrix \[{\bf P}_{N_i^{(j)}}={\bf P}_i^0\dots\begin{bmatrix} {\bf P}_i^{j}&\\& {\bf I}_{N_i-\frac{N_i}{2^{j}}}\end{bmatrix}\] at $j\leq J-1$ levels. 
Ensuing representation ${\bf w}={\bf w}_1\otimes {\bf w}_2$, which is redefined as a graph signal on $G_{\diamond}$, is the result of a separable, two-dimensional graph spline wavelet transform, as introduced in \cite{splinesw}.\\
Hence, we can similarly apply the framework of sampling and perfect reconstruction to GWT representations ${\bf w}_i$ on the vertices of $G_i$, when the given signal ${\bf x}$ is composed of smooth graph signal tensor factors ${\bf x}_i$ such that 2-D multiresolution graph wavelet representation ${\bf w}={\bf w}_1\otimes {\bf w}_2={\bf P}^{(j)}_{N_1 N_2}({\bf W}^{(j)}_{1}\otimes {\bf W}^{(j)}_{2}){\bf x}$ is $K$-sparse with $||{\bf w}_i||_0=K_i$ and $K=K_1 K_2$, for suitable graph wavelet transforms ${\bf W}^{(j)}_{i}$ and permutation matrices ${\bf P}_{N_1 N_2}^{(j)}$ at level $j\leq J-1$. Eventually the original signal ${\bf x}$ can be recovered from ${\bf w}$, subject to invertibility of the 2-D graph wavelet transform.\\

\section{Conclusion}
In this work, we have introduced a novel framework for the sampling and perfect reconstruction of sparse and wavelet-sparse graph signals with an associated graph coarsening scheme for circulant graphs, which is based on graph spline wavelet theory and can be generalised to arbitrary graphs, i.a. via graph product decomposition. Here, we have leveraged previously developed families of graph spline and graph e-spline wavelets which further facilitate the extension of the GFRI framework to the wider class of (piecewise) smooth graph signals, while establishing theoretical links to traditional sampling with a finite rate of innovation in the Euclidean domain.\\
It would be of interest to explore further sparsifying transforms on graphs within the developed sampling framework by tackling the more generalized problem of identifying a suitable transform, given an arbitrary graph and graph signal, which can induce a sparse representation.

\appendix
 \section{}
\subsection{}
\begin{proof}[Proof of Corollary 3.1]
Let ${\bf W}=\frac{1}{2}({\bf I}_N+{\bf K}){\bf H}_{LP}+\frac{1}{2}({\bf I}_N-{\bf K}){\bf H}_{HP}$ represent the generalized \textit{HGESWT}-matrix from Thm. $3.2$ with ${\bf H}_{LP/HP}=\prod_{n=1}^T\frac{1}{2^k} \left(\beta_n{\bf I}_N\pm\frac{{\bf A}}{d}\right)^k$ and diagonal downsampling matrix ${\bf K}$, with $K(i,i)=1$ at even-numbered (node) positions, and $K(i,i)=-1$ otherwise. Then we obtain \[{\bf W}^T{\bf W}=\frac{1}{2}({\bf H}_{LP}^2+{\bf H}_{HP}^2+{\bf H}_{LP}{\bf K}{\bf H}_{LP}-{\bf H}_{HP}{\bf K}{\bf H}_{HP})=\frac{1}{2}({\bf H}_{LP}^2+{\bf H}_{HP}^2)\]
where the RHS is the result of the equality ${\bf K}{\bf H}_{LP}={\bf H}_{HP}{\bf K}$ (and the equivalent in representer polynomial form $H_{HP}(z)=H_{LP}(-z)$). Thus ${\bf W}^T{\bf W}$ is circulant, i.e. it has the same basis as the circulant adjacency matrix $\frac{{\bf A}}{d}={\bf V}{\bf \Gamma}{\bf V}^H$ and its eigenvalues can be expressed as ${\bf \Lambda}=\frac{1}{2}(\prod_{n=1}^T\frac{1}{2^{2k}} \left(\beta_n{\bf I}_N+{\bf \Gamma}\right)^{2k}+\prod_{n=1}^T\frac{1}{2^{2k}} \left(\beta_n{\bf I}_N-{\bf \Gamma}\right)^{2k})$. Hence, the condition number of ${\bf W}$ is given by $C=\sqrt{\frac{\lambda_{max}}{\lambda_{min}}}$.

\end{proof}

\subsection{}
\begin{proof}[Proof of Corollary 4.1]

$(i)$ The number of non-zero high-pass coefficients after applying one level of the HGSWT is $B$; due to the additional `border effect' of the low-pass filter at subsequent levels, we obtain the following series after $j$ levels
\[S=B+\left(B+\frac{B}{2}\right)+\left(B+\frac{\frac{B}{2}+B}{2}\right)+\dots=\sum_{n=0}^{j-1}(j-n)\frac{B}{2^n}.\]
Using the finite summation results \[\sum_{n=0}^{j-1}\frac{j}{2^n}=j(2-2^{-j+1}),\quad\text{and}\quad\sum_{n=0}^{j-1}\frac{n}{2^n}=2^{(1-j)}(-j-1+2^j)\]
and considering the  $\frac{N}{2^j}$ low-pass coefficients, we obtain $K=\frac{N}{2^j}+B(2(j-1)+2^{1-j})$ as the total number of non-zeros. Here, for large $B$, the number of high-pass coefficients at each level $l\leq j-1$ is bounded $\sum_{n=0}^{l}\frac{B}{2^n}\leq \frac{N}{2^{l+1}}$. If $B= 2^{j-1} r$, the formula for $S$ is exact, otherwise, since $S$ has to be an integer, we need to adjust the formula by adding/subtracting a term $s_l$ at each level $l$, depending on whether downsampling requires rounding up or down. In particular, at each level, the high-pass filter is applied on the odd-numbered nodes $1,3,...$ of the (previously) low-pass filtered and sampled graph signal $\tilde{{\bf y}}$; we thus note that if the length of the non-zero `border' support (before downsampling) of the high-pass filtered $\tilde{{\bf y}}$ at the beginning of the resulting labelled sequence is an even number, while that at the end of the sequence is odd, we need to round up, and vice versa.\\
\\
\noindent $(ii)$ Following the reasoning of the previous proof, we need to consider the border effect caused by filtering with the low-pass filter of bandwidth $T$; we therefore end up with the following series summation for the total number of non-zeros
\[K=\frac{N}{2^j}+B+\left(B+\frac{T}{2}\right)+\left(B+\frac{\frac{T}{2}+T}{2}\right)+\dots=\frac{N}{2^j}+j B+\sum_{n=1}^{j-1}\frac{T (j-n)}{2^n},\]
giving the formula $K=\frac{N}{2^j}+j B+ T(j-2+2^{1-j})$ subject to a correction term $\pm s_l$ per level.\\
\\
\noindent $(iii)$ By Thm $3.1$, we need to retain at least one low-pass component for invertibility of the filterbank, therefore we choose to assign the low-pass component to only one node, while the remaining nodes retain the high-pass components. While this downsampling approach is not conducted with respect to the generating set of the circulant graph, and therefore less rigorous from a graph-theoretical perspective, it achieves a maximally sparse representation in the graph wavelet domain.
The number of non-zeros are $2B$, where $2B-1$ is the number of non-zero high-pass coefficients.\end{proof}

\subsection{}
\begin{proof}[Proof of Thm. 4.1](Prony's method \cite{drag}):\\
Given the representation ${\bf y}={\bf U}_M^H{\bf x}$, where $||{\bf x}||_0=K$ and ${\bf U}_M^H$ are the first $M$ rows of the DFT-matrix, we can represent the n-th entry of ${\bf y}$ as $y_n=\frac{1}{\sqrt{N}}\sum_{k=0}^{K-1}x_{c_k}e^{-i2\pi c_k n/N}$ with weights $x_{c_k}$  of ${\bf x}$ at positions ${c_k}$, and apply Prony's method to recover ${\bf x}$, provided $M\geq 2K$. Here, we redefine $y_n=\sum_{k=0}^{K-1}\alpha_k u_k^{n}$ with locations $u_k=e^{-i2\pi c_k n/N}$ and amplitudes $\alpha_k=x_{c_k}/\sqrt{N}$, which are successively recovered.
In the following, we summarise the reconstruction algorithm: given the samples $y_n$, we construct a Toeplitz matrix ${\bf T}_{K,l}$, and determine the vector ${\bf h}$, which lies in its nullspace, also known as `the annihilating filter': 
\[{\bf T}_{K,l}{\bf h}=\begin{pmatrix}
y_{l+K}&y_{l+K-1}&\dots&y_{l}\\
y_{l+K+1}&y_{l+K}&\dots&y_{l+1}\\
\vdots&\ddots&\ddots&\vdots\\
y_{l+2K-2}&\ddots&\ddots&\vdots\\
y_{l+2K-1}&y_{l+2K-2}&\dots&y_{l+K-1}
\end{pmatrix}\begin{pmatrix}
1\\
h_1\\
h_2\\
\vdots\\
h_K
\end{pmatrix}={\bf 0}_K
\]
which can be accomplished via the SVD-decomposition of ${\bf T}_{K,l}$. It can be shown that ${\bf T}_{K,l}$ is of rank $K$ for distinct $u_k$ (Prop. 1, \cite{drag}). In particular, this corresponds to the matrix-form expression of $\sum_{0\leq k\leq K-1}\alpha_k u_k^n P(u_k)=0$ for $l\leq n <l+K$, with polynomial \[P(x)=x^K+\sum_{k=1}^K h_k x^{K-k}=\prod_{k=1}^{K}(x-u_{k-1})\] whose roots $\{u_k\}_{k=0}^{K-1}$ can be subsequently determined from ${\bf h}$. At last, we can recover the corresponding amplitudes $\{\alpha_k\}_{k=0}^{K-1}$ by solving a system of $K$ linear equations given by $y_n$.
\end{proof}

\subsection{}
\begin{proof}[Proof of Lemma 4.2] (appears in part in \cite{icassp}).\\
The eigenvalues of ${\bf A}$ with first row $\lbrack 0\quad a_1...a_1\rbrack$ are
$\lambda_j=\sum_{k=1}^{B} 2 a_k cos\left({\frac{2\pi j k}{N}}\right),\enskip j=0,...,N$-$1$.
Thus the eigenvalues of $\tilde{{\bf A}}$ with the same entries $a_i$ and bandwidth $B<N/4$, are
$\tilde{\lambda}_j=\sum_{k=1}^{B} 2 a_k cos\left({\frac{2\pi (2j) k}{N}}\right)=\lambda_{2j},\enskip j=0,...,N/2$-$1$. 
We can similarly show the preservation of the downsampled DFT-eigenbasis. Let
  \[{\bf x}=\begin{bmatrix} e^{i\alpha 0}&
 e^{i\alpha 1}&
 e^{i\alpha 2}&
 \dots &
 e^{i\alpha (N-1)}\end{bmatrix}^T\]
 with $\alpha=-\frac{2\pi k}{N}$ denote the $(k+1)$-th row of the non-normalized DFT-matrix. If we discard all entries at odd-numbered positions, we obtain the $(k+1)$-th row of the DFT of dimension $N/2$, since
 \[\begin{bmatrix} e^{i\alpha 0}&
 e^{i\alpha 2}&
 e^{i\alpha 4}&
 \dots &
 e^{i\alpha (N-2)}\end{bmatrix}^T=\begin{bmatrix} e^{i(2\alpha) 0}&
 e^{i(2\alpha) 1}&
 e^{i(2\alpha) 2}&
 \dots &
 e^{i(2\alpha) (N/2-1)}\end{bmatrix}^T\]
with $2\alpha=-\frac{2\pi k}{N/2}$. Thus, if we apply the above sampling pattern on the first $N/2$ rows of the DFT of dimension $N$, we obtain the DFT of dimension $N/2$. In particular, at $k=N/2$, we have $\alpha=\pi$ and thus corresponding, downsampled row ${\bf x}={\bf 1}_{N/2}$, and proceeding similarly, we observe that the sampled lower half of the $N\times N$ DFT equivalently gives the DFT of dimension $N/2$.

\end{proof}

%\bibliographystyle{IEEEbib}
%\bibliography{sample2}

%% The Appendices part is started with the command \appendix;
%% appendix sections are then done as normal sections
%% \appendix

%% \section{}
%% \label{}

%% If you have bibdatabase file and want bibtex to generate the
%% bibitems, please use
%%
%%  \bibliographystyle{elsarticle-num} 
%%  \bibliography{<your bibdatabase>}

%% else use the following coding to input the bibitems directly in the
%% TeX file.

%\begin{thebibliography}{00}
\bibliographystyle{elsarticle-num} 
\bibliography{sample2}
%% \bibitem{label}
%% Text of bibliographic item

%\bibitem{}

%\end{thebibliography}
\end{document}